\documentclass[onecolumn]{IEEEtran}

\usepackage{amsthm,amsmath,amssymb,amsfonts,bm}
\usepackage{tikz}
\usepackage{booktabs,multirow,nicematrix}
\usepackage[table,xcdraw]{xcolor}
\usepackage{adjustbox}
\usepackage[shortlabels]{enumitem}
\usepackage[square, comma, sort&compress, numbers]{natbib}
\usepackage{float,graphicx}
\usepackage[colorlinks,linkcolor=blue]{hyperref}
 \usepackage{tabularx}
\usepackage{stmaryrd}      
\usepackage{etoolbox}      
\usepackage{mathtools}     
\usepackage{caption}       
\usepackage{pdflscape}     

\usepackage{cleveref}

\renewcommand{\eqref}[1]{\textcolor{blue}{(\ref{#1})}}

\newtheorem{theorem}{Theorem}[section]
\newtheorem{lemma}[theorem]{Lemma}
\newtheorem{proposition}[theorem]{Proposition}

\newtheorem{example}[theorem]{Example}
\newtheorem{remark}[theorem]{Remark}

\makeatletter
\@addtoreset{equation}{section}
\makeatother

\begin{document}
	\title{ Four classes of few-weight self-orthogonal codes and their applications for LCD codes and quantum codes}
	\author{Yue Huang, Zhonghao Liang, Chenlu Jia, Yongkang Wan and Qunying Liao
		\thanks{Corresponding author: Qunying Liao.  Emails:2907245673@qq.com; liangzhongh0807@163.com; 3120193984@qq.com; 2475636261@qq.com; qunyingliao@sicnu.edu.cn.}
		\thanks{Chenlu Jia, Zhonghao Liang, Yue Huang and Qunying Liao are with College of Mathematical Sciences, Sichuan Normal University, Chengdu 610066, China.} 
		\thanks{This paper is supported by National Natural Science Foundation of China (12471494) and Natural Science Foundation of Sichuan Province (2024NSFSC2051).}
	}
	\maketitle
	
	\begin{abstract}
		Since self-orthogonal codes, few-weight codes, linear complementary dual codes(LCD codes, for short) and quantum codes have nice applications in coding theory and cryptography,  they have received continuous attention. In 2024, by  introducing the notion of the augment code, Heng et al.\cite{29} constructed several classes of few-weight self-orthogonal codes basing on defining sets, which are introduced by Ding et al.\cite{9} in 2007.
		In this manuscript, for two classes of defining sets, we consider the corresponding augmented codes, 
		construct a class of projective four-weight self-orthogonal codes and three classes of four-weight self-orthogonal codes. And  for two classes of these four-weight self-orthogonal linear codes, we determine  the parameters of their dual codes. As applications, we construct two classes of LCD codes and a class of  quantum codes. In particular, we prove that there exists a class of these LCD codes  whose dual codes  are almost optimal  LCD codes according to the sphere packing bound, and  a class of  quantum codes are AMDS according to the quantum Singleton bound.	
		

	\end{abstract} 
	\begin{IEEEkeywords}
		 Few-weight code; Weight distribution; Self-orthogonal code; LCD code; Quantum code. 
	\end{IEEEkeywords}

	\section{Introduction}

	
	Let $\mathbb{F}_{p^m}$ be the finite field with $p^m$ elements, where $p$ is a prime and $m$ is  a positive integer. An $[n,k,d]$ linear code $\mathcal{C}$  over $\mathbb{F}_{p^m}$ is a $k$-dimensional linear subspace of $\mathbb{F}_{p^m}^n$ with minimum (Hamming) distance $d.$ The weight $wt(\boldsymbol{c})$ of a codeword $\boldsymbol{c} \in \mathcal{C} $ is the number of nonzero coordinates in $\boldsymbol{c}.$ Let $A_i$  be the number of codewords $\boldsymbol{c} \in \mathcal{C}$ with 
	weight $i.$ Then $\left( 1,A_1,\cdots,A_n\right) $ is called the weight distribution of $\mathcal{C}$ and the polynomial $1+A_1z+\cdots+A_nz^n$ is called the weight enumerator of $\mathcal{C}.$ If $\#\left\lbrace i : A_i \neq 0, 1 \leq i \leq n \right\rbrace=t,$ then  $\mathcal{C}$ is   $t$-weight. If the Hamming weight of each codeword in $\mathcal{C}$ is divisible by $p$, then  $\mathcal{C}$  is  $p$-divisible.
	The dual code $\mathcal{C}^ \perp$ of $\mathcal{C}$ is defined by $\mathcal{C}^\perp=\left\lbrace\boldsymbol{x} \in \mathbb{F}_{p^m}^n: \boldsymbol{x} \cdot \boldsymbol{c}=0 ~\text{for all } \boldsymbol{c} \in \mathcal{C}\right\rbrace.$ If the dual code of $\mathcal{C}$ has the minimal distance $d ^ \perp \geq 3,$ then  $\mathcal{C}$ is a projective code. If $\mathcal{C} \subseteq \mathcal{C}^\perp,$ then  $\mathcal{C}$ is  self-orthogonal; and if  $\mathcal{C} \cap \mathcal{C}^{\perp}=\left\lbrace \textbf{0}\right\rbrace,$ then $\mathcal{C}$ is  a linear complementary dual code(LCD code, for short).

	It is well-known that any linear code over $\mathbb{F}_{p^m}$ has three fundamental parameters: the length $n,$ the  dimension $k$ and the minimum Hamming distance $d.$ A central goal in error-correcting code theory is to construct codes with good performance, i.e.,  simultaneously  large $n,k,d.$ Nevertheless, classical coding bounds show that these parameters cannot be maximized simultaneously, which introduces three separate definitions of optimality as follows: for any given  $\left[ n,k,d\right]_{p^m} $ linear code $\mathcal{C},$  if there is no  $\left[ n-1,k,d\right]_{p^m} $ linear code, then $\mathcal{C}$ is length-optimal; if there is no  $\left[ n,k+1,d\right]_{p^m} $ linear code, then $\mathcal{C}$ is dimension-optimal;  if there is no  $\left[ n,k,d+1\right]_{p^m} $ linear code, then $\mathcal{C}$ is distance-optimal. In 2025, Chen et al.\cite{49}, defined the distance-almost optimal $\left[ n,k,d\right]_{p^m} $ linear code. Similarly, the dimension-almost optimal   also can be defined, i.e., for any given $\left[ n,k,d\right]_{p^m} $ linear code $\mathcal{C}$, if there exists an $\left[ n,k+1,d\right]_{p^m} $ linear code, but there is no $\left[ n,k+2,d\right]_{p^m} $ code, then $\mathcal{C}$ is dimension-almost optimal. 
	So far, there are a lot of works on constructing optimal linear codes \cite{23,31,46,49}.
	
	Since self-orthogonal codes and few-weight codes play significant roles in coding theory and cryptography, they have been received considerable attention\cite{3,6,10,11,13,14,15,23,28,29,31,35,46,53,54,55}. Especially,
	in 2007, Ding et al.$\cite{8}$  introduced a new way to construct linear codes. Let $D=\left\lbrace d_1,d_2,\cdots,d_n \right\rbrace  \subseteq \mathbb{F}_{p^m}$ and defined $$\mathcal{C}_D=\left\lbrace (\mathrm{Tr}(bd_1),\mathrm{Tr}(bd_2),\cdots,\mathrm{Tr}(bd_n)) : b \in \mathbb{F}_{p^m}\right\rbrace, $$ where $\mathrm{Tr}$ is the trace function from $\mathbb{F}_{p^m}$ to $\mathbb{F}_p.$ And $D$ is called the defining set of  $\mathcal{C}_D.$ So far, there are many works on constructing  linear codes with good coding properties by choosing  the proper defining set $\cite{10,13,31,44,45,51,55,56}$.
	However, it is generally difficult to verify whether these codes obtained via such constructions are self-orthogonal. Recently, by establishing a sufficient condition such that  the $p$-divisible linear code $\mathcal{C}$ is self-orthogonal,  Li and Heng\cite{28} introduced the augmented code $\bar{\mathcal{C}}$ of $\mathcal{C},$ which is defined as
	\[
	\overline{\mathcal{C}}= \{ \bar{\textbf{c}} = \textbf{c} + \mu \mathbf{1} : \textbf{c} \in \mathcal{C}, \mu \in \mathbb{F}_{p^m} \},
	\]
	where $\mathbf{1}=(1,1,\cdots,1) \in \mathbb{F}_{p^m}^n.$  To date, there are only a few works devoted to investigating the properties of augmented codes constructed from defining sets\cite{58,14,15,28,39,53,60}. Moreover, existing results show that if the original defining-set code $\mathcal{C}_D$ is a few-weight code, its augmented code $\overline{\mathcal{C}_D}$ is not necessarily few-weight. Therefore, it's interesting to  find some proper defining set $D$ such that both $\mathcal{C}_D$ and $\overline{\mathcal{C}_D}$$
	$ are few-weight.

	In 2023,  Du et al. \cite{35} introduced the linear code
	
	\begin{equation}
		\begin{split}
			\mathcal{C}_D=\left\lbrace \boldsymbol{c}(a , b)=(\mathrm{Tr}(ax+by))_{(x , y) \in D} :(a , b) \in \mathbb{F}_{p^m} \times \mathbb{F}_{p^m} \right\rbrace,
			\label{(1.1)} 
		\end{split}
	\end{equation}
	where 
	$$D=\left\lbrace (x , y) \in {\mathbb{F}^2_{p^m}} \setminus \left\lbrace (0 , 0)\right\rbrace : \mathrm{Tr} (\alpha xy +\beta y +x)=\gamma  \right\rbrace(\alpha \in \mathbb{F}_{p^m}^*,\beta \in \mathbb{F}_{p^m},\gamma \in \mathbb{F}_p ),$$
	and constructed a class of three-weight linear codes.
	
	In this manuscript, by taking $D=D_1$ or $D_2$ in \eqref{(1.1)}, where
	\begin{align}	
		D_1=&\left\lbrace (x , y) \in {\mathbb{F}_{p^m} \times \mathbb{F}_{p^m} }  : \mathrm{Tr} (\alpha xy +\beta y +x)=\gamma  \right\rbrace ,
		\label{(1.2)}\\
		D_2=&\left\lbrace (x , y) \in \mathbb{F}_{p^m}^* \times \mathbb{F}_{p^m}  : \mathrm{Tr} (\alpha xy  +x)=\gamma  \right\rbrace,
		\label{(1.3)}
	\end{align}
	we consider the augmented code $\overline{\mathcal{C}_{D_i}}(i=1,2),$ which is defined by
	\begin{equation}
		\begin{split}
			\overline{\mathcal{C}_{D_i}}=\left\lbrace \boldsymbol{c}(a,b,c)=\left( \mathrm{Tr}\left(ax+by \right)  \right)_{\left( x,y\right)\in D_i } +c\textbf{1}: a,b \in \mathbb{F}_{p^m} ,c \in \mathbb{F}_{p}\right\rbrace ,
			\label{(1.4)} 
		\end{split}
	\end{equation}
	construct a class of projective  four-weight self-orthogonal codes and three classes of four-weight self-orthogonal codes. As applications, we construct two classes of  LCD codes and a class of  quantum codes. In particular, we prove that there exists a class of these LCD  codes whose dual codes  are almost optimal LCD  according to the sphere packing bound, and  a class of  quantum codes are AMDS according to the quantum Singleton bound.	
	

	This manuscript is organized as follows. In Section \ref{section2},
	we introduce   some necessary  notations and  lemmas. In Section \ref{section3}, we give our main results and some auxiliary results. In Section \ref{section4}, we give the proofs of our main results and some examples. In Section \ref{section5}, we derive LCD codes and quantum codes from the constructed self-orthogonal codes. In Section \ref{section6}, we conclude the whole manuscript.

	\section{Preliminaries}
	\label{section2}
	Throughout this manuscript, for the convenience, we fix some notations as follows.
	
	\begin{itemize}
		\item  $p$ is an odd prime and  $m \geq 2$ is a positive integer.
		
		\item $\mathbb{F}_{p^m}$ is the finite field with $p^m$ elements, where $\mathbb{F}_{p^m}^* = \mathbb{F}_{p^m} \setminus \{0\}$. 
		
		
		\item $\operatorname{Tr}(\cdot)$ is the trace map from $\mathbb{F}_{p^m}$ to $\mathbb{F}_p$.
		
		\item For any positive integer $t,$ $\boldsymbol{I_t}, \boldsymbol{0_t}$ and $\boldsymbol{1_t}$ denote the identity matrix of size $t \times t$, the zero matrix of size $t \times t$ and the all-1 vector of length $t,$ respectively. 
		
		
		\item For any positive integer $t,S_{p^t} = \{x^2 : x \in \mathbb{F}_{p^t}^*\}$ and  $NS_{p^t} = \mathbb{F}_{p^t}^* \setminus S_{p^t}$.

		
		
		
		\item $\chi$ and $\bar{\chi}$ denote the additive characters over $\mathbb{F}_{p^m}$ and  $\mathbb{F}_{p}$, respectively.
		
		\item $\eta$ and $\bar{\eta}$ denote the quadratic multiplicative characters over $\mathbb{F}_{p^m}$ and  $\mathbb{F}_{p}$, respectively.
		\item$\zeta_{p}$ be the primitive $p$-th root of unity and $\zeta_{p}=\frac{2\pi i}{p}(i=\sqrt{-1})$.
		\item $G$ and $\bar{G}$ denote the  Gauss sums over $\mathbb{F}_{p^m}$ and  $\mathbb{F}_{p}$, respectively.
		\item $t_1=\mathrm{Tr}(-\frac{\beta}{\alpha})-\gamma,t_2=\mathrm{Tr}(-\frac{b+a\beta}{\alpha})+c$ and $t_3=\mathrm{Tr}(-\frac{ab}{\alpha}).$
		\item $l_1=-\gamma,l_2=\mathrm{Tr}(-\frac{b}{\alpha})+c$ and $l_3=\mathrm{Tr}(-\frac{ab}{\alpha}).$
	\end{itemize}

	In this section, we firstly give some basic notations and known results on additive characters, quadratic characters, Gauss sums over $\mathbb{F}_{p^m}$, the Pless power moments, self-orthogonal codes, LCD codes and quantum codes.
	
	The following Lemmas \ref{lemma2.1}-\ref{lemma2.4} are crucial for computing the weight distributions in Section \ref{subsection4.1}. 
	\begin{lemma}{\rm(\cite{59}, Lemma 2.1)}
		\label{lemma2.1}
		For any $a,x$$ \in \mathbb{F}_{p^m}$, we have $$\sum_{x\in \mathbb{F}_{p^m}}\zeta_p^{\mathrm{Tr}(ax)}=  \begin{array}{c}
			\begin{cases}
				p^m , &\text{if} ~a=0\\
				0 , &\text{if} ~a\in \mathbb{F}_{p^m}^*
			\end{cases},
		\end{array}
		\sum_{x\in \mathbb{F}_{p^m}^*}\zeta_p^{\mathrm{Tr}(ax)}= \begin{array}{c}
			\begin{cases}
				p^m-1, &\text{if} ~a=0\\
				-1, &\text{if} ~a\in \mathbb{F}_{p^m}^*
			\end{cases}.
		\end{array} $$
	\end{lemma}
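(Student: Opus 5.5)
The first identity is the orthogonality relation for additive characters, and the second follows from it by removing the term $x=0$. I will treat the two cases $a=0$ and $a\neq 0$ separately.

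\emph{Case $a=0$.} Every term equals $\zeta_p^{\mathrm{Tr}(0)}=\zeta_p^0=1$. So the sum over $\mathbb{F}_{p^m}$ is $p^m$, and the sum over $\mathbb{F}_{p^m}^*$ is $p^m-1$.

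\emph{Case $a\in\mathbb{F}_{p^m}^*$.} Put $S=\sum_{x\in\mathbb{F}_{p^m}}\zeta_p^{\mathrm{Tr}(ax)}$. Since $x\mapsto ax$ is a bijection of $\mathbb{F}_{p^m}$, we get $S=\sum_{y\in\mathbb{F}_{p^m}}\zeta_p^{\mathrm{Tr}(y)}$. The key step is that $\mathrm{Tr}:\mathbb{F}_{p^m}\to\mathbb{F}_p$ is surjective, and this is the only place any real input is needed. Indeed, $\mathrm{Tr}(y)=y+y^p+\cdots+y^{p^{m-1}}$ is a polynomial of degree $p^{m-1}<p^m$, so it cannot vanish on all of $\mathbb{F}_{p^m}$. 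Being $\mathbb{F}_p$-linear with nonzero image in the one-dimensional space $\mathbb{F}_p$, it is onto.

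Consequently each fibre $\mathrm{Tr}^{-1}(j)$, $j\in\mathbb{F}_p$, has exactly $p^{m-1}$ elements, since it is a coset of the kernel. Therefore
$$S=p^{m-1}\sum_{j=0}^{p-1}\zeta_p^{j}=0,$$
because $\zeta_p\neq 1$ is a root of $z^p-1=(z-1)(1+z+\cdots+z^{p-1})$. (Equivalently, choose $y_0$ with $\mathrm{Tr}(y_0)=1$. Translating gives $S=\zeta_p S$, and since $\zeta_p\neq1$ this forces $S=0$.)

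Finally, the sum over $\mathbb{F}_{p^m}^*$ is $S$ minus the term at $x=0$, which equals $1$. It is therefore $-1$ when $a\neq0$, and $p^m-1$ when $a=0$, as already found above.

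No serious obstacle is expected. The only point to state carefully is the surjectivity (nontriviality) of the trace. Note also that $\zeta_p$ should be read as $e^{2\pi i/p}$; the preliminaries write $\zeta_p=\frac{2\pi i}{p}$, which appears to be a typo for $e^{2\pi i/p}$.
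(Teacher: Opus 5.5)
Your proof is correct and complete. The paper gives no proof of its own here and simply quotes the lemma from the literature as the standard orthogonality relation for additive characters. Your argument is exactly the standard proof behind that citation: the bijection $x\mapsto ax$, surjectivity of the trace from the degree bound $p^{m-1}<p^m$, equal fibres, then removing the $x=0$ term. Your remark that $\zeta_p=\frac{2\pi i}{p}$ should read $e^{2\pi i/p}$ is also right.
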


	\begin{lemma}{\rm(\cite{53})}
			\label{}
		For any $a\in \mathbb{F}_{p^m},$ we have $$\eta(a)=\begin{array}{c}
			\begin{cases}
				0 , &\text{if} ~a=0\\
				1 , &\text{if} ~a   \in S_{p^m}\\
				-1 , &\text{if} ~a  \in NS_{p^m} 
			\end{cases},
		\end{array} 
		\sum\limits_{a\in \mathbb{F}_{p^m}^*}\eta(a)=0.
		$$
	\end{lemma}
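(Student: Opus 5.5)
The statement has two parts: the case description of $\eta$, and the vanishing of $\sum_{a\in\mathbb{F}_{p^m}^*}\eta(a)$.

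The case description is essentially the definition of the quadratic multiplicative character of $\mathbb{F}_{p^m}$ when $p$ is odd. I would realize $\eta$ explicitly and check that this realization takes the stated values. Fix a generator $g$ of the cyclic group $\mathbb{F}_{p^m}^*$, which has even order $p^m-1$. Define $\eta(g^k)=(-1)^k$, and extend by the convention $\eta(0)=0$.

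This is well defined, because $g^k=g^{k'}$ forces $k\equiv k' \pmod{p^m-1}$, and $p^m-1$ is even. It is multiplicative and nontrivial, and it is the unique character of order $2$ on a cyclic group of even order.

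Next I identify the squares. An element $x=g^j$ has $x^2=g^{2j}$, so $S_{p^m}$ consists exactly of the even powers of $g$. Conversely, any $g^{2i}$ equals $(g^i)^2$, so every even power is a square. Hence $\eta(a)=1$ precisely when $a\in S_{p^m}$. It follows that $\eta(a)=-1$ precisely when $a$ is an odd power of $g$, that is, when $a\in NS_{p^m}=\mathbb{F}_{p^m}^*\setminus S_{p^m}$. Together with $\eta(0)=0$, this gives the first display.

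For the sum, I count squares. The squaring map $x\mapsto x^2$ on $\mathbb{F}_{p^m}^*$ is a homomorphism with kernel $\{\pm1\}$, which has size $2$ since $p$ is odd. So
\[
|S_{p^m}|=\frac{p^m-1}{2}, \qquad |NS_{p^m}|=\frac{p^m-1}{2}.
\]
Therefore
\[
\sum_{a\in\mathbb{F}_{p^m}^*}\eta(a)=|S_{p^m}|-|NS_{p^m}|=0.
\]
Alternatively, pick $b$ with $\eta(b)=-1$. Multiplication by $b$ permutes $\mathbb{F}_{p^m}^*$, so the sum $\Sigma$ satisfies $\Sigma=\eta(b)\Sigma=-\Sigma$, which gives $\Sigma=0$.

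There is no real obstacle here. The only point needing care is the use of $p$ odd: it guarantees $-1\neq1$, so squaring is exactly $2$-to-$1$ and both $S_{p^m}$ and $NS_{p^m}$ are nonempty of equal size.
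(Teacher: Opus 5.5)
Your proof is correct: realizing $\eta$ as $g^k\mapsto(-1)^k$ for a generator $g$ of $\mathbb{F}_{p^m}^*$ (well defined because $p^m-1$ is even), identifying $S_{p^m}$ with the even powers of $g$, and then using either the $2$-to-$1$ squaring map or the substitution $a\mapsto ba$ with $\eta(b)=-1$ establishes both claims. The paper gives no proof of its own and simply quotes this as a standard fact from the literature, so your argument is the standard verification of what the paper takes for granted.
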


	\begin{lemma}{\rm(\cite{32}, Theorem 5.51)}
			\label{lemma2.3}
		 For the Gauss sum  $G(\bar\eta,\bar\chi)=\sum\limits_{c\in \mathbb{F}_{p}^*}\bar\eta(c)\bar\chi(c),$  we have
		$$G^2(\bar\eta,\bar\chi)=\bar\eta (-1)p.$$
	\end{lemma}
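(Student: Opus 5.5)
The plan is to reduce the statement to a computation of the quadratic Gauss sum over $\mathbb{F}_p$, using only orthogonality of additive characters (Lemma \ref{lemma2.1} with $m=1$) and the basic properties of $\bar\eta$ recalled above. Write $G=G(\bar\eta,\bar\chi)=\sum_{c\in\mathbb{F}_p^*}\bar\eta(c)\zeta_p^{c}$. First I will show that $\overline{G}=\bar\eta(-1)G$. Complex conjugation sends $\zeta_p^{c}$ to $\zeta_p^{-c}$, and $\bar\eta$ is real-valued. The substitution $c\mapsto -c$ together with $\bar\eta(-c)=\bar\eta(-1)\bar\eta(c)$ then gives
\begin{equation*}
\overline{G}=\sum_{c\in\mathbb{F}_p^*}\bar\eta(c)\zeta_p^{-c}=\bar\eta(-1)\sum_{c\in\mathbb{F}_p^*}\bar\eta(c)\zeta_p^{c}=\bar\eta(-1)G .
\end{equation*}
Hence $G^2=\bar\eta(-1)G\overline{G}=\bar\eta(-1)|G|^2$, since $\bar\eta(-1)^2=1$. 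The statement is therefore reduced to proving $|G|^2=p$.

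Next I will compute $|G|^2$ by expanding it as a double sum:
\begin{equation*}
|G|^2=\sum_{c,d\in\mathbb{F}_p^*}\bar\eta(c)\bar\eta(d)\zeta_p^{c-d}.
\end{equation*}
Substitute $c=dt$ with $t\in\mathbb{F}_p^*$. Multiplicativity and $\bar\eta(d)^2=1$ give $\bar\eta(c)\bar\eta(d)=\bar\eta(t)$, so
\begin{equation*}
|G|^2=\sum_{t\in\mathbb{F}_p^*}\bar\eta(t)\sum_{d\in\mathbb{F}_p^*}\zeta_p^{d(t-1)}.
\end{equation*}
By the second identity of Lemma \ref{lemma2.1} (with $m=1$), the inner sum equals $p-1$ when $t=1$ and $-1$ otherwise.

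Finally, substituting these values gives
\begin{equation*}
|G|^2=(p-1)-\sum_{t\neq 1}\bar\eta(t)=(p-1)-\Bigl(\sum_{t\in\mathbb{F}_p^*}\bar\eta(t)-1\Bigr)=p,
\end{equation*}
where the last step uses $\sum_{t\in\mathbb{F}_p^*}\bar\eta(t)=0$ from the preceding lemma applied with $m=1$. Combining this with the first step yields $G^2=\bar\eta(-1)p$. No step is a real obstacle; the only point needing care is the change of variables $c=dt$ and keeping track of the case $t=1$. One could alternatively avoid conjugation by squaring $G$ directly and substituting $c=-dt$, which produces the factor $\bar\eta(-1)$ explicitly, but the route above is cleaner.
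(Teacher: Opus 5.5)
Your proof is correct. The conjugation identity $\overline{G}=\bar\eta(-1)G$, the reduction $G^2=\bar\eta(-1)|G|^2$, and the evaluation $|G|^2=p$ via the substitution $c=dt$ and Lemma \ref{lemma2.1} at $m=1$ all check out, and you handle the $t=1$ term properly.

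The paper does not prove this statement; it imports it by citation from \cite{32}, so the comparison is between a citation and a derivation. What you wrote is the classical textbook argument: $|G|^2=q$ for nontrivial characters, plus the conjugation rule for Gauss sums, specialized to the real-valued character $\bar\eta$. Its advantage is that it makes the preliminaries self-contained, using only Lemma \ref{lemma2.1} and $\sum_{t\in\mathbb{F}_p^*}\bar\eta(t)=0$, both already recorded there.

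One point is worth making explicit. You take $\bar\chi(c)=\zeta_p^{c}$, which matches how the paper uses $G(\bar\eta,\bar\chi)$ in Proposition \ref{prop3.8}. The statement does require $\bar\chi$ to be nontrivial, since for the trivial character the sum is $0$. Your argument works verbatim for any nontrivial $\bar\chi$, because the only input is $\sum_{d\in\mathbb{F}_p^*}\bar\chi(d(t-1))=-1$ for $t\neq 1$. It also works over any $\mathbb{F}_q$ with $q$ odd, giving $G(\eta,\chi)^2=\eta(-1)q$.
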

	
	\begin{lemma}{\rm(\cite{32}, Theorem 5.33)}	\label{lemma2.4}
		Let $f(x)=a_2x^2+a_1x+a_0 \in \mathbb{F}_{p^m}\left[x \right] $ and $a_2 \in \mathbb{F}_{p^m}^* .$ Then 
		$$\sum_{x\in \mathbb{F}_{p^m}}\zeta_p^{\mathrm{Tr}(f(x))}=\zeta_p^{\mathrm{Tr}(a_0-a_1^2(4a_2)^{-1})}\eta(a_2)G(\eta,\chi).$$
	\end{lemma}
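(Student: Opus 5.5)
The statement is the classical evaluation of the quadratic character sum in Lemma \ref{lemma2.4}. I would prove it by completing the square and then reducing to the sum $\sum_{y}\zeta_p^{\mathrm{Tr}(a_2y^2)}$, which is expressed through the quadratic character $\eta$ and the Gauss sum $G(\eta,\chi)$.

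\emph{Step 1: complete the square.} Since $p$ is odd, $2$ and $4a_2$ are invertible in $\mathbb{F}_{p^m}$. Hence
$$f(x)=a_2\left(x+\frac{a_1}{2a_2}\right)^2+a_0-\frac{a_1^2}{4a_2}.$$
The substitution $y=x+a_1(2a_2)^{-1}$ is a bijection of $\mathbb{F}_{p^m}$. It therefore gives
$$\sum_{x}\zeta_p^{\mathrm{Tr}(f(x))}=\zeta_p^{\mathrm{Tr}(a_0-a_1^2(4a_2)^{-1})}\sum_{y\in\mathbb{F}_{p^m}}\zeta_p^{\mathrm{Tr}(a_2y^2)}.$$

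\emph{Step 2: count square roots.} For each $z\in\mathbb{F}_{p^m}$, the number of $y$ with $y^2=z$ equals $1+\eta(z)$. For $z=0$ this is $1$, and for $z\neq0$ it is $2$ or $0$ according as $z\in S_{p^m}$ or $z\in NS_{p^m}$. Consequently
$$\sum_{y}\zeta_p^{\mathrm{Tr}(a_2y^2)}=\sum_{z}\zeta_p^{\mathrm{Tr}(a_2z)}+\sum_{z\in\mathbb{F}_{p^m}^*}\eta(z)\zeta_p^{\mathrm{Tr}(a_2z)}.$$
The first sum on the right vanishes by Lemma \ref{lemma2.1}, because $a_2\neq0$.

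\emph{Step 3: pull out $\eta(a_2)$.} In the remaining sum, substitute $w=a_2z$. This gives $\eta(z)=\eta(a_2^{-1})\eta(w)=\eta(a_2)\eta(w)$, since $\eta(a_2^{-1})=\eta(a_2)$. The sum thus becomes $\eta(a_2)\sum_{w\neq0}\eta(w)\chi(w)=\eta(a_2)G(\eta,\chi)$, where $\chi$ is the canonical additive character $\chi(w)=\zeta_p^{\mathrm{Tr}(w)}$. Combining this with Step 1 yields the claimed formula.

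No step is a genuine obstacle, since the argument is elementary. The only point requiring care is the bookkeeping in Step 2. One must keep $z=0$ separate, because it has exactly one square root, and observe that $\eta(0)=0$ makes the $1+\eta(z)$ count correct there as well. One must also make sure that the character appearing in $G(\eta,\chi)$ is the canonical one, matching the convention $\zeta_p^{\mathrm{Tr}(\cdot)}$ used throughout the paper.
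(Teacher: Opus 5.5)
Your proof is correct and is essentially the standard argument behind the cited result (Lidl--Niederreiter, Theorem 5.33), which the paper quotes without proof: complete the square to reduce to $\sum_{y}\zeta_p^{\mathrm{Tr}(a_2y^2)}$, then evaluate that sum with the square-root count $1+\eta(z)$ and the rescaling $z\mapsto a_2z$ to obtain $\eta(a_2)G(\eta,\chi)$. Your argument also never uses $m\geq 2$, which matters because the paper applies the lemma over $\mathbb{F}_p$ (with $\bar{\eta}$ and $\bar{G}$) in the proof of Proposition~\ref{prop3.8}.
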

	
	The Pless power moments provided in the following Lemma \ref{lemma2.5} plays a key role in computing the  distance of the dual code.
	\begin{lemma} {\rm(\cite{18}, Page 259)}	\label{lemma2.5}
		Let $\mathcal{C}$  be an $\left[n,k \right] $ code over $\mathbb{F}_{p^m}$ with weight distribution $\left(1,A_1,\cdots , A_n \right) $ and $\mathcal{C}^\perp$ be the dual code with weight distribution $\left(1,A_1^\perp,\cdots , A_n^\perp \right) $, then the first four Pless power moments are as follows,
		
		\begin{align}
			&\sum\limits_{j=0}^{n}A_j=p^k,\label{(2.1)}\\
			&\sum\limits_{j=0}^{n}jA_j=p^{k-1}\left(pn-n-A_1^\perp \right),\label{(2.2)}\\
			&\sum\limits_{j=0}^{n}j^2A_j=p^{k-2}\left[\left(p-1 \right)n\left( pn-n+1\right)-\left(2pn-p-2n+2 \right)A_1^\perp +2A_2^\perp   \right] ,\label{(2.3)}\\
			&\sum_{j=0}^n j^3 A_j = p^{k-3} \left[ (p - 1)nE - FA_1^\perp + 6(pn - p - n + 2)A_2^\perp - 6A_3^\perp \right]\label{(2.4)},
		\end{align}
		where 
		$$E=p^2n^2 - 2pn^2 + 3pn - p + n^2 - 3n + 2$$
		and
		$$F=3p^2n^2 - 3p^2n - 6pn^2 + 12pn + p^2 - 6p + 3n^2 - 9n + 6.$$
	\end{lemma}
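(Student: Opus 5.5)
The approach is to derive all four identities from the MacWilliams identity, written in its binomial-moment form, and then turn the binomial moments into power moments by elementary algebra. Throughout I write $q$ for the size of the alphabet. The formulas in the statement have $q=p$, which is how the lemma is used later, for codes over $\mathbb{F}_p$. The argument is identical for any $q$ once every $p$ is replaced by $q$, so I will not distinguish the two.

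\textbf{Step 1: the binomial-moment identities.} For $0\le \nu\le n$ I will prove
\[
\sum_{j=0}^{n}\binom{n-j}{\nu}A_j=q^{k-\nu}\sum_{j=0}^{\nu}\binom{n-j}{\nu-j}A_j^{\perp}.
\]
The proof is by double counting. The left side counts pairs $(\boldsymbol{c},S)$ with $\boldsymbol{c}\in\mathcal{C}$, $S$ a $\nu$-subset of the coordinates, and $\boldsymbol{c}$ vanishing on $S$. For a fixed $S$, the codewords vanishing on $S$ form the kernel of the projection $\pi_S$ onto the coordinates in $S$, so there are $q^{k-\operatorname{rank}\pi_S}$ of them. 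By duality, $\nu-\operatorname{rank}\pi_S$ equals the dimension of the space of dual codewords supported inside $S$. Hence the count for $S$ is $q^{k-\nu}\cdot\#\{\boldsymbol{x}\in\mathcal{C}^\perp:\operatorname{supp}(\boldsymbol{x})\subseteq S\}$. Summing over $S$, a dual codeword of weight $j$ lies in exactly $\binom{n-j}{\nu-j}$ sets $S$, which gives the right side. Alternatively, the same identity follows by substituting into the MacWilliams identity $W_{\mathcal{C}}(x,y)=q^{k-n}W_{\mathcal{C}^\perp}(x+(q-1)y,x-y)$ and differentiating $\nu$ times in $x$ at $(x,y)=(1,1)$.

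\textbf{Step 2: specialise to $\nu=0,1,2,3$.} Using $A_0^\perp=1$, this produces four linear equations. The right sides involve only $A_1^\perp,A_2^\perp,A_3^\perp$, with coefficients such as $\binom{n}{\nu}$ and $\binom{n-1}{\nu-1}$. The case $\nu=0$ is already \eqref{(2.1)}.

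\textbf{Step 3: convert to power moments.} Write $\binom{n-j}{\nu}$ as a polynomial of degree $\nu$ in $j$. Then solve triangularly for $\sum_j jA_j$, $\sum_j j^2A_j$ and $\sum_j j^3A_j$, each time substituting the lower moments already found. For $\nu=1$ this gives
\[
\sum_j jA_j=nq^k-q^{k-1}\bigl(n+A_1^\perp\bigr)=q^{k-1}\bigl(qn-n-A_1^\perp\bigr),
\]
which is \eqref{(2.2)}. The same elimination at $\nu=2$ and $\nu=3$ yields \eqref{(2.3)} and \eqref{(2.4)}.

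\textbf{Main obstacle.} The only laborious part is the cubic case: collecting the polynomial coefficients so that they match the stated expressions $E$ and $F$ and the coefficient $6(pn-p-n+2)$ of $A_2^\perp$. To keep this manageable, I would first check the identities on a small example such as $\mathcal{C}=\mathbb{F}_q^n$, where $A_j^\perp=0$ for $j\ge1$, to confirm $E$. I would then check the coefficient of each $A_i^\perp$ separately, by linearity.
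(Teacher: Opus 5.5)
The paper gives no proof of this lemma and simply cites Huffman--Pless, where the power moments come from the binomial-moment form of the MacWilliams identities (with $j^r$ expanded via Stirling numbers rather than by your triangular elimination), so your argument is correct and is essentially the standard derivation; your $\nu=2,3$ eliminations do reproduce $(p-1)n(pn-n+1)$, $E$, $F$ and $6(pn-p-n+2)$. You were also right to read $p$ as the alphabet size: for $\mathcal{C}$ over $\mathbb{F}_{p^m}$ with $m\ge 2$ the formulas as literally stated (e.g.\ $\sum_j A_j=p^k$) are false, but the paper only applies them to codes over $\mathbb{F}_p$.
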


	The following Lemma \ref{lemma2.6} presents a sufficient condition for a  $p$-divisible code to be self-orthogonal.
	
	\begin{lemma} {\rm(\cite{28}, Theorem 1)}	\label{lemma2.6}
		Let $\mathcal{C}$ be an $\left[n,k,d \right] $ linear code over $\mathbb{F}_{p^m}.$ If $\boldsymbol{1_n} \in \mathcal{C}$ and $\mathcal{C}$ is $p$-divisible, then $\mathcal{C}$ is self-orthogonal.
	\end{lemma}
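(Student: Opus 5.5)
The plan is to show that every codeword is orthogonal to itself, and then pass to arbitrary pairs of codewords by polarization. Throughout, $p$ is odd, as fixed in Section \ref{section2}. The key device is to count how often each field element occurs as a coordinate of a given codeword, using that the translates of that codeword by multiples of $\boldsymbol{1_n}$ all lie in $\mathcal{C}$.

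\emph{Step 1: occurrence counts.} Fix $\boldsymbol{c}=(c_1,\dots,c_n)\in\mathcal{C}$. For each $a\in\mathbb{F}_{p^m}$ put $N_a=\#\{i : c_i=a\}$.
\begin{itemize}
\item Since $\boldsymbol{1_n}\in\mathcal{C}$ and $\mathcal{C}$ is $\mathbb{F}_{p^m}$-linear, $\boldsymbol{c}-a\boldsymbol{1_n}\in\mathcal{C}$ for every $a$.
\item The zero coordinates of $\boldsymbol{c}-a\boldsymbol{1_n}$ are exactly the positions with $c_i=a$, so $N_a=n-wt(\boldsymbol{c}-a\boldsymbol{1_n})$.
\item By $p$-divisibility, $wt(\boldsymbol{c}-a\boldsymbol{1_n})\equiv 0 \pmod p$.
\item Also $n=wt(\boldsymbol{1_n})\equiv 0\pmod p$, because $\boldsymbol{1_n}\in\mathcal{C}$.
\end{itemize}
Hence $p\mid N_a$ for every $a\in\mathbb{F}_{p^m}$.

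\emph{Step 2: self-inner products.} Grouping the coordinates of $\boldsymbol{c}$ by value gives
$$\boldsymbol{c}\cdot\boldsymbol{c}=\sum_{i=1}^n c_i^2=\sum_{a\in\mathbb{F}_{p^m}} N_a\,a^2 .$$
Each $N_a$ is an integer divisible by $p$, and $\mathbb{F}_{p^m}$ has characteristic $p$. So every term vanishes and $\boldsymbol{c}\cdot\boldsymbol{c}=0$.

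\emph{Step 3: polarization.} For $\boldsymbol{c},\boldsymbol{c}'\in\mathcal{C}$, we have $\boldsymbol{c}+\boldsymbol{c}'\in\mathcal{C}$. Applying Step 2 to $\boldsymbol{c}+\boldsymbol{c}'$, $\boldsymbol{c}$ and $\boldsymbol{c}'$ gives
$$2\,\boldsymbol{c}\cdot\boldsymbol{c}'=(\boldsymbol{c}+\boldsymbol{c}')\cdot(\boldsymbol{c}+\boldsymbol{c}')-\boldsymbol{c}\cdot\boldsymbol{c}-\boldsymbol{c}'\cdot\boldsymbol{c}'=0.$$
Since $p$ is odd, $2$ is invertible, so $\boldsymbol{c}\cdot\boldsymbol{c}'=0$. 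Therefore $\mathcal{C}\subseteq\mathcal{C}^\perp$.

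The only genuinely nontrivial step is Step 1. It is where both hypotheses enter: $\boldsymbol{1_n}\in\mathcal{C}$ gives the translates and also $p\mid n$, while $p$-divisibility controls the weights of the translates. Once $p\mid N_a$ for all $a$ is established, the rest is routine. The one point to keep in mind is that the polarization step needs $p$ odd, which is a standing assumption of the paper.
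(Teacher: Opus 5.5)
Your proof is correct and complete. The paper gives no argument of its own for this lemma; it only cites \cite{28}. Your write-up therefore supplies a self-contained justification rather than a competing route.

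The heart of the argument is the counting step. From $\boldsymbol{1_n}\in\mathcal{C}$ and $p$-divisibility you get $p\mid n$, and hence $p\mid N_a=n-wt(\boldsymbol{c}-a\boldsymbol{1_n})$ for every $a\in\mathbb{F}_{p^m}$. This actually gives $\sum_i f(c_i)=\sum_a N_a f(a)=0$ for every function $f$ on $\mathbb{F}_{p^m}$. Taking $f(x)=x^2$ gives $\boldsymbol{c}\cdot\boldsymbol{c}=0$, and polarization finishes the proof.

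You are right to flag that $p$ must be odd, and this is worth keeping explicit, since the lemma as printed only says ``over $\mathbb{F}_{p^m}$'' and relies on the standing assumption of Section~2. For $p=2$ the statement is false. The binary even-weight code of length $4$ contains $\boldsymbol{1_4}$ and is $2$-divisible, yet $(1,1,0,0)\cdot(1,0,1,0)=1$. Your Steps 1 and 2 still go through in that case. The failure occurs exactly where you said it would: dividing by $2$ in the polarization step.
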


	The following Lemmas \ref{lemma2.7}-\ref{lemma2.8} provide a method to construct LCD codes and quantum codes based on self-orthogonal linear codes, respectively.
	\begin{lemma}{\rm(\cite{53}, Lemma 5.9)}	\label{lemma2.7}
		Let $\mathcal{C}$ be an $\left[ n ,k ,d\right] $ linear code generated by the matrix $G$
		over $\mathbb{F}_{p^m}.$ If $\mathcal{C}$ is self-orthogonal, then the matrix $ \bar{G} = [I_k :G]$
		generates an $\left[ n +k ,k \right] $ LCD code over $\mathbb{F}_{p^m}.$
	\end{lemma}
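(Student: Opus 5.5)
The statement to prove is Lemma \ref{lemma2.7}: if $\mathcal{C}$ is a self-orthogonal $[n,k,d]$ code over $\mathbb{F}_{p^m}$ with generator matrix $G$, then $\bar{G}=[I_k:G]$ generates an $[n+k,k]$ LCD code. The approach is to show that $\bar G$ has full row rank and that $\bar G\bar G^{T}$ is nonsingular. Then we invoke Massey's criterion: a linear code with generator matrix $M$ of full row rank is LCD if and only if $MM^{T}$ is invertible.

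\textbf{Dimension and length.} Since $\bar G$ contains the identity block $I_k$, its $k$ rows are linearly independent. Hence $\bar G$ generates a code of length $n+k$ and dimension exactly $k$.

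\textbf{Computing the Gram matrix.} Self-orthogonality $\mathcal{C}\subseteq\mathcal{C}^\perp$ means every row of $G$ is orthogonal to every row of $G$, including itself. So $GG^{T}=\boldsymbol{0_k}$, and therefore
\[
\bar G\bar G^{T}=I_kI_k^{T}+GG^{T}=I_k,
\]
which is invertible.

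\textbf{LCD property.} The key step is to verify Massey's criterion directly rather than citing it. Suppose $\boldsymbol{v}=\boldsymbol{u}\bar G$ lies in $\bar{\mathcal{C}}\cap\bar{\mathcal{C}}^{\perp}$, where $\boldsymbol{u}\in\mathbb{F}_{p^m}^k$. Since $\boldsymbol{v}\in\bar{\mathcal{C}}^\perp$, it is orthogonal to every row of $\bar G$, so $\boldsymbol{v}\bar G^{T}=\boldsymbol{0}$. This gives $\boldsymbol{u}\bar G\bar G^{T}=\boldsymbol{u}I_k=\boldsymbol{u}=\boldsymbol{0}$, hence $\boldsymbol{v}=\boldsymbol{0}$, and the code is LCD.

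No step is a real obstacle. The only point needing care is that self-orthogonality gives $GG^{T}=0$ for the ordinary Euclidean inner product, which is the inner product used throughout the paper.
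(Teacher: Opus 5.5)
Your proof is correct. The paper gives no proof of its own; it quotes the result as Lemma 5.9 of \cite{53}. The standard argument for it is exactly yours: self-orthogonality gives $GG^{T}=0$, so $\bar G\bar G^{T}=I_k$ is nonsingular and the code is LCD by Massey's criterion, which you verify directly instead of citing.
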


	\begin{lemma}{\rm(\cite{33}, Theorem 2.6)}	\label{lemma2.8}
		Let $\mathcal{C}_1$ and $\mathcal{C}_2$ be $\left[n,k_1,d_1 \right]_{p^m} $ and $\left[n,k_2,d_2 \right]_{p^m} $ linear codes, respectively. If $\mathcal{C}_1^{\perp} \subseteq \mathcal{C}_1 \subseteq \mathcal{C}_2$ and $k_1+2 \leq k_2,$ then there exists an $$\left\llbracket n,\ k_1 + k_2 - n,\min\left\{ d_1,\left\lceil \tfrac{p^m+1}{p^m}d_2 \right\rceil \right\} \right\rrbracket_{p^m}$$  quantum code, where \( \lceil \cdot \rceil \) denotes the ceiling function. 
	\end{lemma}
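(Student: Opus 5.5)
This is Steane's enlargement of the CSS construction, carried out over $\mathbb{F}_{q}$ with $q=p^m$. I would work in the stabilizer (symplectic) formalism. The standard fact I rely on is the following. Equip $\mathbb{F}_q^{2n}$ with the symplectic form $\langle (u|v),(u'|v')\rangle_s=u\cdot v'-v\cdot u'$ and the symplectic weight $\mathrm{swt}(u|v)=\#\{i:(u_i,v_i)\neq(0,0)\}$. If $S\subseteq\mathbb{F}_q^{2n}$ is $\mathbb{F}_q$-linear of dimension $n-k$ with $S\subseteq S^{\perp_s}$, then there is an $\llbracket n,k,d\rrbracket_q$ quantum code with $d\ge \min\{\mathrm{swt}(w):w\in S^{\perp_s}\setminus\{0\}\}$. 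So the task reduces to building a suitable $T=S^{\perp_s}$.

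\emph{Construction.} Let $G_1$ ($k_1\times n$) generate $\mathcal{C}_1$, and extend it by $G_2$ ($r\times n$, $r=k_2-k_1\ge 2$) so that $\begin{pmatrix}G_1\\ G_2\end{pmatrix}$ generates $\mathcal{C}_2$. Take $A$ to be the companion matrix of an irreducible polynomial of degree $r\ge2$ over $\mathbb{F}_q$. Then $A$ has no eigenvalue in $\mathbb{F}_q$, so $A-\lambda I_r$ is invertible for every $\lambda\in\mathbb{F}_q$. This is exactly where $k_1+2\le k_2$ is used. Define
$$T=\{(u+xG_2\,|\,v+xAG_2): u,v\in\mathcal{C}_1,\ x\in\mathbb{F}_q^{r}\},$$
which has dimension $2k_1+r=k_1+k_2$, and set $S=T^{\perp_s}$, so $\dim S=2n-k_1-k_2$. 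Any element of $T^{\perp_s}$ is symplectically orthogonal to all $(u|0)$ and $(0|v)$ with $u,v\in\mathcal{C}_1$. Hence both of its halves lie in $\mathcal{C}_1^\perp\subseteq\mathcal{C}_1$, and so $S\subseteq \mathcal{C}_1\times\mathcal{C}_1\subseteq T=S^{\perp_s}$. Therefore $S$ is self-orthogonal, and $n-k=2n-k_1-k_2$ gives $k=k_1+k_2-n$.

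\emph{Distance.} Take a nonzero $w=(a|b)\in T$.

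If $x=0$, then $a=u$ and $b=v$ lie in $\mathcal{C}_1$ and at least one of them is nonzero, so $\mathrm{swt}(w)\ge d_1$.

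If $x\ne0$, consider the $q+1$ vectors $a$ and $b-\lambda a$ for $\lambda\in\mathbb{F}_q$. Each lies in $\mathcal{C}_2$, and none lies in $\mathcal{C}_1$:
\begin{itemize}
\item $a=u+xG_2$ with $x\neq0$, so $a\notin\mathcal{C}_1$;
\item $b-\lambda a=(v-\lambda u)+x(A-\lambda I_r)G_2$ with $x(A-\lambda I_r)\ne0$, so $b-\lambda a\notin\mathcal{C}_1$.
\end{itemize}
In particular all $q+1$ vectors are nonzero codewords of $\mathcal{C}_2$, each of weight at least $d_2$.

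Now fix a coordinate $i$ with $(a_i,b_i)\ne(0,0)$. The $q+1$ linear forms $(s,t)\mapsto s$ and $(s,t)\mapsto t-\lambda s$ are pairwise non-proportional, so at most one of them vanishes at $(a_i,b_i)$. Thus coordinate $i$ contributes at least $q$ to the sum of the weights of these $q+1$ vectors. Coordinates outside $\mathrm{supp}(w)$ contribute nothing. Hence
$$q\,\mathrm{swt}(w)\ \ge\ \mathrm{wt}(a)+\sum_{\lambda\in\mathbb{F}_q}\mathrm{wt}(b-\lambda a)\ \ge\ (q+1)d_2,$$
so $\mathrm{swt}(w)\ge\lceil \tfrac{q+1}{q}d_2\rceil$.

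Combining the two cases with the stabilizer fact gives the stated $\llbracket n,k_1+k_2-n,\min\{d_1,\lceil\frac{p^m+1}{p^m}d_2\rceil\}\rrbracket_{p^m}$ code. The main obstacle is the case $x\neq0$: it needs both the eigenvalue-free choice of $A$, which makes every combination $b-\lambda a$ avoid $\mathcal{C}_1$, and the projective counting over the $q+1$ forms.
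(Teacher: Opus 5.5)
Your argument is correct and is essentially the standard proof of the generalized Steane enlargement. The paper does not prove this lemma; it quotes it from \cite{33}. Your construction matches that proof: $\mathcal{C}_1\times\mathcal{C}_1$ enlarged by the twisted copy $\{(xG_2\mid xAG_2)\}$ with $A$ free of eigenvalues in $\mathbb{F}_q$, and the bound $q\,\mathrm{swt}(a\mid b)=\mathrm{wt}(a)+\sum_{\lambda\in\mathbb{F}_q}\mathrm{wt}(b-\lambda a)$, which is in fact an equality because exactly one of the $q+1$ forms vanishes at each nonzero $(a_i,b_i)$. Two minor points are worth stating explicitly. For non-prime $q=p^m$ the stabilizer correspondence is usually phrased with the trace-symplectic form, and its dual coincides with your $\mathbb{F}_q$-symplectic dual because $S$ is $\mathbb{F}_q$-linear. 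The distance in the lemma is a lower bound, which is exactly what you prove.
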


	The following Lemmas \ref{lemma2.9}-\ref{lemma2.10} give the well-known sphere packing bound for a linear code and the quantum Singleton bound for a quantum code.
	\begin{lemma}{\rm(\cite{18}, Theorem 1.12.1)}	\label{lemma2.9}
		Let $\mathcal{C}$ be an $[n,k,d]_p$ linear code. Then $$p^n \geq p^k \sum_{i=0}^{\left\lfloor \frac{d-1}{2} \right\rfloor} \binom{n}{i} (p-1)^i, $$
		where \( \lfloor \cdot \rfloor \) denotes the floor function.
	\end{lemma}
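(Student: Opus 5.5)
The plan is the standard volume-counting argument for the sphere packing bound. Set $t=\left\lfloor \frac{d-1}{2}\right\rfloor$. For each codeword $\boldsymbol{c}\in\mathcal{C}$, consider the Hamming ball
$$B_t(\boldsymbol{c})=\left\{\boldsymbol{x}\in\mathbb{F}_p^n : d_H(\boldsymbol{x},\boldsymbol{c})\le t\right\},$$
where $d_H(\boldsymbol{x},\boldsymbol{c})=wt(\boldsymbol{x}-\boldsymbol{c})$. The proof then rests on two facts: these $p^k$ balls are pairwise disjoint, and each has the same size.

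First I will show disjointness. Suppose $\boldsymbol{x}\in B_t(\boldsymbol{c})\cap B_t(\boldsymbol{c}')$ with $\boldsymbol{c}\neq\boldsymbol{c}'$. The Hamming distance satisfies the triangle inequality, because the support of $\boldsymbol{u}+\boldsymbol{v}$ is contained in the union of the supports. Hence
$$d_H(\boldsymbol{c},\boldsymbol{c}')\le d_H(\boldsymbol{c},\boldsymbol{x})+d_H(\boldsymbol{x},\boldsymbol{c}')\le 2t\le d-1 .$$
By linearity, $\boldsymbol{c}-\boldsymbol{c}'$ is a nonzero codeword, so its weight is at least $d$. This is a contradiction.

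Next I will count the points of a single ball. Translation by $\boldsymbol{c}$ is a bijection, so $|B_t(\boldsymbol{c})|=|B_t(\boldsymbol{0})|$. The vectors of weight exactly $i$ are obtained by choosing $i$ support positions, in $\binom{n}{i}$ ways, and then a nonzero entry in each of those positions, in $(p-1)^i$ ways. Therefore
$$|B_t(\boldsymbol{c})|=\sum_{i=0}^{t}\binom{n}{i}(p-1)^i .$$

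Finally, $\mathcal{C}$ has $p^k$ codewords and the balls are disjoint subsets of $\mathbb{F}_p^n$. Summing their sizes gives $p^k\sum_{i=0}^{t}\binom{n}{i}(p-1)^i\le p^n$, which is the claimed inequality. There is no real obstacle here. The only point needing care is the disjointness step, which uses the linear structure to identify the minimum distance with the minimum nonzero weight, together with the inequality $2t\le d-1$ that follows from the choice of $t$.
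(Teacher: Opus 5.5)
Your proof is correct: it is the standard disjoint Hamming-ball counting argument, and linearity is used only to identify $d$ with the minimum nonzero weight. The paper does not prove this lemma itself; it cites Huffman--Pless (Theorem 1.12.1), where the bound is established by essentially this same counting argument.
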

	
	\begin{lemma}{\rm(\cite{19}, Corollary 28)}	\label{lemma2.10}
		For any $\left\llbracket n,\ k,\ d \right\rrbracket_{p^m}$ quantum code,  $2(d-1) \leq n-k.$
	\end{lemma}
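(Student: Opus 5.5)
The plan is to prove the bound by the standard entropy argument for general quantum codes, so that stabilizer structure is never needed. Write $q=p^m$ and let the code $Q$ be a subspace of dimension $q^k$ of $(\mathbb{C}^q)^{\otimes n}$ with minimum distance $d$. The one property of $d$ I will use is the Knill--Laflamme characterization: every set of $d-1$ qudit positions is correctable under erasure. Equivalently, for any set $A$ of at most $d-1$ positions and any code state, the reduced state on $A$ does not depend on the encoded information.

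Next I introduce a reference system $R$ of dimension $q^k$ and the state $|\Psi\rangle_{RQ}$ obtained by maximally entangling $R$ with an orthonormal basis of $Q$. The whole state is pure and $\rho_R$ is maximally mixed, so $S(R)=k\log q$. First treat the main case $n\ge 2(d-1)$. Partition the $n$ positions into $A$, $B$ and $C$ with $|A|=|B|=d-1$ and $|C|=n-2(d-1)$. Erasure correctability of $A$ means $\rho_{RA}=\rho_R\otimes\rho_A$, because the marginal on $A$ is independent of the encoded state and hence uncorrelated with $R$. This gives
\[
S(RA)=S(R)+S(A),
\]
and similarly $S(RB)=S(R)+S(B)$.

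Purity of the global state gives $S(RA)=S(BC)$ and $S(RB)=S(AC)$. Subadditivity then yields $S(R)+S(A)\le S(B)+S(C)$ and $S(R)+S(B)\le S(A)+S(C)$. Adding these two inequalities gives $2S(R)\le 2S(C)\le 2(n-2(d-1))\log q$. Therefore $k\le n-2(d-1)$, which is exactly $2(d-1)\le n-k$.

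The remaining case is $n<2(d-1)$ with $k\ge1$, which I expect to be the only delicate point. In that case I pick disjoint sets $A$ and $B$ covering all $n$ positions, each of size at most $d-1$. Each is then erasure-correctable, so each set determines the encoded state while its complement is uncorrelated with $R$. Applying this to $A$ and to its complement $B$ gives $I(R;A)=0$ and $I(R;B)=0$. Since $A$ and $B$ purify $R$, we have $S(RA)=S(B)$, and hence $I(R;A)+I(R;B)=2S(R)$. This forces $S(R)=0$, that is $k=0$, a contradiction with $k\ge 1$. When $k=0$ the inequality is understood under the usual convention for the distance of such codes, so those degenerate cases are set aside.
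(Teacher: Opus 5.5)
Your proof is correct, and it is not a reconstruction of anything in the paper. The paper gives no argument for this lemma: it simply quotes it from \cite{19}, so yours is an independent, self-contained proof. Your route is the standard entropic one, as in Nielsen--Chuang:
\begin{itemize}
\item take a reference system $R$ maximally entangled with the code;
\item the Knill--Laflamme condition $PEP=c(E)P$ for all $E$ of weight at most $d-1$ gives $\rho_{RA}=\rho_R\otimes\rho_A$ for every set $A$ of $d-1$ positions;
\item purity and subadditivity then give $S(R)\le S(C)$.
\end{itemize}

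Compared with citing, your argument buys generality and transparency. It uses nothing about stabilizer structure, purity of the code, or integrality of $k$, so it proves the bound for every quantum code of dimension $q^k\ge 2$. Citing buys brevity and keeps the paper inside the classical coding-theoretic framework it uses everywhere else.

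Your separate treatment of $n<2(d-1)$ is sound: for a pure state on $RAB$ you use $I(R;A)+I(R;B)=2S(R)$. This covers a case the usual write-up leaves implicit.

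Setting aside $k=0$ is legitimate. The paper's formulation drops the hypothesis $k>0$ (i.e.\ $K>1$) under which the bound is normally stated. For a one-dimensional code every error is detectable, so the Knill--Laflamme distance is not meaningful. If you adopt the stabilizer convention instead ($d$ is the least weight of a non-scalar stabilizer element), your own bookkeeping settles that case in one line. Every marginal of the stabilizer state on $d-1$ positions is then maximally mixed, so $(d-1)\log q=S(A)=S(A^{c})\le (n-d+1)\log q$.
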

	
	\begin{remark}	\label{remark2.12}
		For an $\left\llbracket n, k, d\right\rrbracket_{p^m}$ quantum code, if  the quantum Singleton bound is achieved, i.e., $2(d-1) = n-k$, then it is called a maximum distance separable (MDS, for short) quantum code. If $2d = n-k$, then it is called an almost maximum distance separable (AMDS, for short) quantum code. 
	\end{remark}
	
	The following Lemma \ref{lemma2.12} will be used to calculate  $\# D_1$ in Section \ref{subsection4.1}.
	\begin{lemma}{\rm(\cite{35})}\label{lemma2.12}
		Let $\Omega_1=\sum\limits_{u\in \mathbb{F}_p^*}  \sum\limits_{x\in \mathbb{F}_{p^m}} \sum\limits_{y\in \mathbb{F}_{p^m}} \zeta_{p}^{\mathrm{Tr} (u\alpha xy +u\beta y +ux)-u\gamma},$ then 
		$$\Omega_1= \begin{array}{c}
			\begin{cases}
				p^m(p-1) , &\text{if} ~\gamma = \mathrm{Tr}(-\frac{\beta}{\alpha});\\
				-p^m , &\text{if} ~\gamma \neq \mathrm{Tr}(-\frac{\beta}{\alpha}).
			\end{cases}
		\end{array} $$
	\end{lemma}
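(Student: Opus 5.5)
We evaluate $\Omega_1$ by carrying out the inner sums in the order $y$, then $x$, then $u$. For fixed $u\in\mathbb{F}_p^*$, Lemma \ref{lemma2.1} gives
$$\sum_{y\in\mathbb{F}_{p^m}}\zeta_p^{\mathrm{Tr}(u(\alpha x+\beta)y)}=\begin{cases}p^m, & x=-\beta/\alpha,\\ 0, & \text{otherwise}.\end{cases}$$
This uses $\alpha\neq 0$, which makes $\alpha x+\beta$ vanish for exactly one $x$.

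Consequently, the sum over $x$ collapses to the single point $x_0=-\beta/\alpha$. Since $u\in\mathbb{F}_p$, we have $\mathrm{Tr}(ux_0)=u\,\mathrm{Tr}(x_0)$, and the double sum becomes
$$\sum_{x}\sum_{y}\zeta_p^{\mathrm{Tr}(u\alpha xy+u\beta y+ux)-u\gamma}=p^m\,\zeta_p^{u\left(\mathrm{Tr}(-\beta/\alpha)-\gamma\right)}=p^m\zeta_p^{u t_1}.$$

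It remains to sum over $u\in\mathbb{F}_p^*$. Applying Lemma \ref{lemma2.1} with $m=1$, we get $\sum_{u\in\mathbb{F}_p^*}\zeta_p^{ut_1}=p-1$ when $t_1=0$ and $-1$ otherwise. Hence $\Omega_1=p^m(p-1)$ if $\gamma=\mathrm{Tr}(-\beta/\alpha)$, and $\Omega_1=-p^m$ otherwise, as claimed.

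The argument is elementary, so no step is a genuine obstacle. The only points needing care are the use of $\alpha\neq0$ in the first step and the $\mathbb{F}_p$-linearity of the trace in the second.
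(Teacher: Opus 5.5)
Your proof is correct. Summing over $y$ first forces $x=-\beta/\alpha$ (using $\alpha\neq 0$), which leaves $p^m\sum_{u\in\mathbb{F}_p^*}\zeta_p^{ut_1}$, and orthogonality over $u$ then gives exactly the two claimed values. The paper only cites this lemma without proof, but your argument is the same collapse-the-$y$-sum technique it uses for $\Omega_3$ in the proof of Proposition~\ref{prop3.8}.
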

	\section{ Four classes of four-weight self-orthogonal codes}
	\label{section3}
	In this section, we give our main results, namely, four classes of  four-weight self-orthogonal linear codes,  and some auxiliary results will be used in Section \ref{subsection4.1}.
	
	\subsection{Main results}
	\label{subsection3.1}
	Let $\overline{\mathcal{C}_{D_1}}$ and $D_1$ be defined by \eqref{(1.4)} and \eqref{(1.2)} in Theorems \ref{theorem3.1}-\ref{theorem3.2}, respectively.
	\begin{theorem}\label{theorem3.1}
		If  $\gamma=\mathrm{Tr}(-\frac{\beta}{\alpha}),$ then $\overline{\mathcal{C}_{D_1}}$ is a $\left[p^{m-1}(p^m+p-1),2m+1,p^{2m-2}(p-1) \right]_p$ four-weight self-orthogonal linear code with the weight distribution given by Table \ref{Table1} , and the dual code ${\overline{\mathcal{C}_{D_1}}}^{\perp}$ is a $[p^{m-1}(p^m+p-1),$$p^{m-1}(p^m+p-1)-2m-1,2 ]_p $ code. 
		
	\end{theorem}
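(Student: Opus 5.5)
The plan is to compute the weight of every codeword $\boldsymbol{c}(a,b,c)$ by character sums, then read off self-orthogonality from Lemma \ref{lemma2.6} and the dual distance from Lemma \ref{lemma2.5}.

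\textbf{Length.} By orthogonality of $\bar\chi$,
\[
\#D_1=p^{2m-1}+\tfrac{1}{p}\,\Omega_1 .
\]
Lemma \ref{lemma2.12} with $\gamma=\mathrm{Tr}(-\frac{\beta}{\alpha})$ gives $\Omega_1=p^m(p-1)$, so $n=p^{m-1}(p^m+p-1)$.

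\textbf{Weights.} For $(a,b,c)$, the number of zero coordinates of $\boldsymbol{c}(a,b,c)$ is
\[
N(a,b,c)=\frac{1}{p^2}\sum_{x,y}\sum_{u,v\in\mathbb{F}_p}\zeta_p^{\,u(\mathrm{Tr}(\alpha xy+\beta y+x)-\gamma)+v(\mathrm{Tr}(ax+by)+c)} .
\]
Then $wt=n-N$. I split $N$ according to whether $u$ and $v$ vanish.
\begin{itemize}
\item For $u=0$, the inner sum over $(x,y)$ is $p^{2m}$ if $v=0$ or $(a,b)=(0,0)$, and $0$ otherwise.
\item For $u\neq0$, sum first over $x$. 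This forces $u\alpha y+u+va=0$, i.e. $y=-\frac{u+va}{u\alpha}$, and leaves
\[
p^m\zeta_p^{\,\mathrm{Tr}(-\beta\frac{u+va}{\alpha u})-vb\frac{u+va}{u\alpha}\cdots}.
\]
The exponent is affine in $v$ plus a term $-\frac{v^2}{u}\mathrm{Tr}(\frac{ab}{\alpha})$. Collecting, the exponent becomes $u t_1+v t_2+\frac{v^2}{u}t_3$ with $t_1,t_2,t_3$ as in the notation list; note $t_1=0$ here.
\end{itemize}

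The remaining sum $\sum_{u\ne0}\sum_{v}\zeta_p^{v t_2+v^2t_3/u}$ is evaluated over $\mathbb{F}_p$ in cases. When $t_3=0$ it is a linear sum in $v$, depending on whether $t_2=0$. When $t_3\neq0$, Lemma \ref{lemma2.4} applies over $\mathbb{F}_p$ and gives a Gauss sum times $\bar\eta(t_3/u)$ times $\zeta_p^{-u t_2^2/(4t_3)}$. Summing over $u$ yields either $\bar\eta(\cdot)\bar G\cdot\bar\eta(-1)\bar G$-type terms, which Lemma \ref{lemma2.3} turns into $\pm p$ or $0$, or terms depending on whether $t_2=0$.

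This case analysis should give four nonzero weights. The minimum, $p^{2m-2}(p-1)$, should arise from $(a,b)=(0,0)$, $c\neq0$ and the analogous cases. The frequencies come from counting pairs $(a,b)$ with prescribed $(t_2,t_3)$. This requires counting solutions of $\mathrm{Tr}(ab/\alpha)=s$ together with a linear trace condition, which I again do by characters using Lemma \ref{lemma2.1}.

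\textbf{Dimension.} I check that weight $0$ occurs only for $(a,b,c)=0$, so the dimension is $2m+1$. Every weight is visibly divisible by $p$, and $\boldsymbol{1}\in\overline{\mathcal{C}_{D_1}}$ by taking $(0,0,1)$. Hence Lemma \ref{lemma2.6} gives self-orthogonality.

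\textbf{Dual code.} I plug the weight distribution into \eqref{(2.2)} and \eqref{(2.3)}. The first should give $A_1^\perp=0$, since no coordinate is identically zero: the coordinates $(x,y)$ with $(x,y)\neq(0,0)$ are nonzero, and the $\boldsymbol 1$ row covers $(0,0)$. The second should give $A_2^\perp>0$. This positivity is expected because coordinates $(x,y)$ and $(x',y')$ have proportional columns $(x,y,1)$ only if they are equal. So the second moment must detect another source, and I will simply trust the computation from \eqref{(2.3)} to yield $d^\perp=2$.

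\textbf{Main obstacle.} I expect the main obstacle to be the bookkeeping of the $(t_2,t_3)$ case split and the counts of $(a,b)$ in each case. In particular, deciding between $t_3\in S_p$ and $t_3\in NS_p$, combined with $t_2$ zero or not, must collapse consistently to exactly four weights.
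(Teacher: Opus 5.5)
Your weight computation takes the same route as the paper. You use the same split into $\Omega_1,\Omega_2,\Omega_3$; folding $\Omega_1$ into the $v=0$ term is fine. You use $t_1=0$, which makes the character of $t_3$ drop out via $\bar\eta(t_3)\bar\eta(-t_3)=\bar\eta(-1)$, and Lemma~\ref{lemma2.6} for self-orthogonality. Two corrections are needed in this part.

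First, in your $u=0$ bullet the term for $(a,b)=(0,0)$ is $p^{2m}\zeta_p^{vc}$, not $p^{2m}$; summed over $v\neq0$ it gives $-p^{2m}$. As a result, $(a,b)=(0,0)$, $c\neq0$ does \emph{not} give the minimum weight. That codeword is $c\boldsymbol{1}$, of full weight $p^{m-1}(p^m+p-1)$, and these are the $p-1$ words of largest weight. The weight $p^{2m-2}(p-1)$ comes from $(a,b)\neq(0,0)$ with $t_2=t_3=0$. The other weights are $p^{m-1}(p^m-p^{m-1}+p-1)$ when exactly one of $t_2,t_3$ vanishes, and $p^{m-1}(p^m-p^{m-1}+p-2)$ when neither does.

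Second, the frequencies need no joint character count. Since $t_2$ is $c$ plus a quantity depending only on $(a,b)$, exactly one $c$ gives $t_2=0$ for each $(a,b)$. Everything then reduces to $\#\{(a,b):\mathrm{Tr}(ab/\alpha)=0\}=p^{2m-1}+p^m-p^{m-1}$, which reproduces Table~\ref{Table1}.

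The genuine gap is the dual distance, and it cannot be closed because the claim $d^\perp=2$ is false. Your column observation is already a complete proof that $A_2^\perp=0$. A weight-$2$ dual word means two proportional columns of $G_{D_1}$. Their first entries are both $1$, so the scalar is $1$. Equal columns then force equal $(x,y)$, because $x\mapsto(\mathrm{Tr}(x\omega^i))_i$ is injective. There is no other source for \eqref{(2.3)} to detect; the moment identity confirms $A_2^\perp=0$.

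For $p=3$, $m=2$ (weights $18,21,24,33$ with frequencies $32,96,112,2$), one gets $\sum_j j^2A_j=119394=3^3\cdot 66\cdot 67$. This is exactly the right side of \eqref{(2.3)} with $A_1^\perp=A_2^\perp=0$, and \eqref{(2.4)} then gives $A_3^\perp=160$.

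In general, when $\gamma=\mathrm{Tr}(-\frac{\beta}{\alpha})$, every point $(x,-1/\alpha)$ lies in $D_1$, since $\alpha xy+\beta y+x=-\beta/\alpha$ there. The columns at $(x,-1/\alpha)$, $(x+1,-1/\alpha)$, $(x+2,-1/\alpha)$ satisfy $\boldsymbol{g}_1-2\boldsymbol{g}_2+\boldsymbol{g}_3=\boldsymbol{0}$. Hence $d(\overline{\mathcal{C}_{D_1}}^{\perp})=3$, and $\overline{\mathcal{C}_{D_1}}$ is projective.

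The paper's proof asserts $A_2^\perp>0$ ``by directly computing'', and that assertion is wrong. You should have trusted your structural argument and concluded $d^\perp\geq 3$ (indeed $=3$), rather than deferring to a computation you did not carry out.
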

	\begin{table}[H]
		\centering
		\caption{The weight distribution of \( \overline{\mathcal{C}_{D_1}} \)}
		\begin{tabular}{c c}
			\toprule
			weight & frequency \\
			\midrule
			0 & 1 \\
			\( p^{2m-2}(p-1) \) & \( p^{2m-1}+p^m-p^{m-1}-1 \) \\
			\( p^{m-1}(p^m-p^{m-1}+p-1) \) & \( (p^m-1)(2p^m-2p^{m-1}+p-1) \) \\
			\( p^{m-1}(p^m-p^{m-1}+p-2) \) & \( p^{m-1}(p^{m+2}-2p^{m+1}+p^m-p^2+2p-1) \) \\
			\( p^{m-1}(p^m+p-1) \) & \( p - 1 \) \\
			\bottomrule
		\end{tabular}
		\label{Table1}
	\end{table}
	
	\begin{theorem}\label{theorem3.2}
		If  $\gamma \neq \mathrm{Tr}(-\frac{\beta}{\alpha}),$ then $\overline{\mathcal{C}_{D_1}}$ is a $\left[p^{m-1}(p^m-1),2m+1,p^{m-1}(p^m-p^{m-1}-2) \right]_p$ four-weight self-orthogonal linear code. 
	\end{theorem}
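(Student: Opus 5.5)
The plan mirrors the proof of Theorem \ref{theorem3.1}. We first compute the length $n=\#D_1$, then the weight of each codeword $\boldsymbol{c}(a,b,c)$ via character sums, then read off divisibility and self-orthogonality.

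For the length, write
\[
\#D_1=\frac{1}{p}\sum_{x,y\in\mathbb{F}_{p^m}}\sum_{u\in\mathbb{F}_p}\zeta_p^{u(\mathrm{Tr}(\alpha xy+\beta y+x)-\gamma)}=p^{2m-1}+\frac{1}{p}\Omega_1 .
\]
By Lemma \ref{lemma2.12} with $\gamma\neq\mathrm{Tr}(-\frac{\beta}{\alpha})$, this gives $n=p^{2m-1}-p^{m-1}=p^{m-1}(p^m-1)$. Note that $(0,0)\notin D_1$ in this case, since $\mathrm{Tr}(0)=0$ while $\gamma=0$ would force $\mathrm{Tr}(-\beta/\alpha)\neq 0$. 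So no special point needs removing.

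For the weights, $wt(\boldsymbol{c}(a,b,c))=n-N(a,b,c)$, where
\[
N=\frac{1}{p^2}\sum_{(x,y)}\sum_{u,v\in\mathbb{F}_p}\zeta_p^{u(\mathrm{Tr}(\alpha xy+\beta y+x)-\gamma)+v(\mathrm{Tr}(ax+by)+c)}.
\]
Summing over $y$ first with Lemma \ref{lemma2.1} forces $x=-(u\beta+vb)/(u\alpha)$ when $u\neq0$. The inner sum then collapses to $p^m\zeta_p^{\,\cdots}$, where the exponent is a quadratic polynomial in $v$ with coefficients built from $t_1,t_2,t_3$ (scaled by $u^{-1}$). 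The terms with $u=0$ give the usual $p^{2m}\delta$-type contributions, depending on whether $(a,b)=(0,0)$ and whether $c=0$. The remaining sum over $v\in\mathbb{F}_p$ and $u\in\mathbb{F}_p^*$ is evaluated with Lemma \ref{lemma2.4} (over $\mathbb{F}_p$) when $t_3\neq0$; the quadratic character then disappears after summing over $u$, using Lemma \ref{lemma2.3}. When $t_3=0$ the exponent is linear or constant in $v$ and Lemma \ref{lemma2.1} applies.

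This yields a case split on $(a,b)=(0,0)$ or not, $t_3=0$ or not, and whether $t_2=0$ or the discriminant-type quantity $t_1t_3-t_2^2/4$-ish combination is zero. Here $t_1\neq0$ is fixed by hypothesis. We would expect the weights
\[
p^{m-1}(p^m-p^{m-1}-2),\quad p^{m-1}(p^m-p^{m-1}-1),\quad p^{m-1}(p^m-p^{m-1}),\quad p^{m-1}(p^m-1),
\]
possibly with one more, up to adjustment. The minimum distance is the stated $p^{m-1}(p^m-p^{m-1}-2)$, which requires $p^m-p^{m-1}-2>0$ and so holds for $m\ge2$, $p$ odd.

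The dimension $2m+1$ follows from the fact that only $(a,b,c)=(0,0,0)$ gives weight $0$. Every weight is a multiple of $p^{m-1}$, and $m\ge2$, so the code is $p$-divisible. Since $\boldsymbol{1_n}=\boldsymbol{c}(0,0,1)\in\overline{\mathcal{C}_{D_1}}$, Lemma \ref{lemma2.6} gives self-orthogonality.

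The main obstacle is the bookkeeping of the character sum after completing the square in $v$: determining exactly which combinations of $t_1,t_2,t_3$ give distinct weights, and confirming that precisely four nonzero weights occur. We would check that the frequencies sum to $p^{2m+1}-1$ using the first Pless moment \eqref{(2.1)}, to catch errors in the case analysis.
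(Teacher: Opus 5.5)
Your outline follows the paper's route: the length from $\Omega_1$ and Lemma \ref{lemma2.12}, then $wt=n_1-\frac{1}{p^2}(p^{2m}+\Omega_1+\Omega_2+\Omega_3)$, then $p$-divisibility and Lemma \ref{lemma2.6}. The four weights you guess are in fact the right ones. But the weight computation, which is the whole content of ``four-weight'' and of the minimum distance, is never carried out (``possibly with one more, up to adjustment''). Moreover, the case split you describe would not produce the minimum weight. After Lemma \ref{lemma2.4} in $v$ and the sum over $u$, the quadratic character does not disappear. For $t_3\neq0$ and $t_2^2\neq 4t_1t_3$, using $G^2(\bar\eta,\bar\chi)=\bar\eta(-1)p$ and $t_1\neq0$,
\[
\Omega_3=p^m\bar\eta(t_3)\bar\eta\left(\frac{4t_1t_3-t_2^2}{4t_3}\right)G^2(\bar\eta,\bar\chi)-p^m\sum_{u\in\mathbb{F}_p^*}\zeta_p^{t_1u}=p^{m+1}\bar\eta(t_2^2-4t_1t_3)+p^m .
\]
So the decisive distinction is whether $t_2^2-4t_1t_3$ lies in $S_p$ or $NS_p$, not merely whether it vanishes. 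Combined with $\Omega_1=-p^m$ and Proposition \ref{prop3.7}, the sum $\Omega_1+\Omega_2+\Omega_3$ takes four values:
$-p^{m+1}$ when $t_2=t_3=0$, or $t_3\neq0$ with discriminant in $NS_p$;
$0$ when $t_2\neq0=t_3$, or $t_3\neq0$ with discriminant $0$;
$p^{m+1}$ when $t_3\neq0$ with discriminant in $S_p$;
$-p^{2m}$ when $a=b=0\neq c$.
These give the weights $p^{2m-2}(p-1)$, $p^{m-1}(p^m-p^{m-1}-1)$, $p^{m-1}(p^m-p^{m-1}-2)$ and $p^{m-1}(p^m-1)$. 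The minimum weight comes exactly from the $S_p$ class, which your split cannot see.

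Also, positivity of $p^m-p^{m-1}-2$ does not show that $d$ is attained; you need the $S_p$ class to be nonempty (the paper leaves this implicit too). Take $a=1$, choose $b$ with $\mathrm{Tr}(-b/\alpha)=-t_1^{-1}$, and choose $c$ so that $t_2=0$. Then $t_2^2-4t_1t_3=4\in S_p$. The other three classes are equally easy to populate, and this is what ``four-weight'' requires.

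Your proposed frequency check via \eqref{(2.1)} is not available here. These frequencies are exactly what Remark \ref{remark3.3} says are hard to count, and they are not needed for the statement.

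Finally, your claim that $(0,0)\notin D_1$ is false. The case $\gamma=0$ with $\mathrm{Tr}(-\beta/\alpha)\neq0$ is allowed, and then $(0,0)\in D_1$. This is harmless only because $D_1$ in \eqref{(1.2)} does not exclude $(0,0)$, so $\Omega_1$ counts exactly the right set.
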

	
	\begin{remark}\label{remark3.3}
		For $\gamma \neq \mathrm{Tr}(-\frac{\beta}{\alpha}),$ and $	{wt}_1(\mathbf{c})$ given by the proof of Theorem \ref{theorem3.2}, it's not easy to compute $$ \# \left\lbrace a \in \mathbb{F}_{p^m}^*, b \in \mathbb{F}_{p^m},c \in \mathbb{F}_p :  t_3 \neq 0~ \text{and}~  t_2^2-4t_1t_3 \in NS_p  \right\rbrace,$$ 
		$$ \# \left\lbrace a \in \mathbb{F}_{p^m}^*, b \in \mathbb{F}_{p^m},c \in \mathbb{F}_p :  t_3 \neq 0~ \text{and}~ 4t_1t_3-t_2^2=0 \right\rbrace $$
		and 
		$$ \# \left\lbrace a \in \mathbb{F}_{p^m}^*, b \in \mathbb{F}_{p^m},c \in \mathbb{F}_p :  t_3 \neq 0~ \text{and}~  t_2^2-4t_1t_3 \in S_p \right\rbrace. $$
		Therefore, the weight distribution of $\overline{\mathcal{C}_{D_1}}$ for $\gamma \neq \mathrm{Tr}(-\frac{\beta}{\alpha})$ is not given in this manuscript.
	\end{remark}

	Let $\overline{\mathcal{C}_{D_2}}$ and $D_2$ be defined by \eqref{(1.4)} and \eqref{(1.3)} in Theorems \ref{theorem3.4}-\ref{theorem3.5}, respectively.
	\begin{theorem}\label{theorem3.4}
		If  $\gamma=0,$ then $\overline{\mathcal{C}_{D_2}}$ is a $\left[ p^{m-1}(p^m-1),2m+1,p^{m-1}(p^m-p^{m-1}-p+1)\right]_p$ four-weight self-orthogonal linear code with the weight distribution given by Table \ref{table2}, and the dual code ${\overline{\mathcal{C}_{D_2}}}^{\perp}$ is a $[p^{m-1}(p^m-1),$$p^{m-1}(p^m-1)-2m-1,3 ]_p$ code. Furthermore ${\overline{\mathcal{C}_{D_2}}}$ is a projective linear code.  
	\end{theorem}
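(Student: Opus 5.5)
We prove Theorem \ref{theorem3.4} by computing the length and the weight of $\boldsymbol{c}(a,b,c)$ with character sums. The weight distribution and self-orthogonality then follow, and the dual distance comes from the Pless power moments in Lemma \ref{lemma2.5}.

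\emph{Length.} By orthogonality of characters,
\[
\#D_2=\frac{1}{p}\sum_{u\in\mathbb{F}_p}\sum_{x\in\mathbb{F}_{p^m}^*}\sum_{y\in\mathbb{F}_{p^m}}\zeta_p^{\mathrm{Tr}(u\alpha xy+ux)-u\gamma}.
\]
The sum over $y$ vanishes unless $u\alpha x=0$. Since $x\neq 0$, only $u=0$ survives, so with $\gamma=0$ we get $n=p^{m-1}(p^m-1)$.

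\emph{Weights.} Write $wt(\boldsymbol{c}(a,b,c))=n-N(a,b,c)$, where
\[
N(a,b,c)=\#\{(x,y)\in D_2:\mathrm{Tr}(ax+by)+c=0\}=\frac{1}{p^2}\sum_{u,v\in\mathbb{F}_p}\sum_{x\neq0,\,y}\zeta_p^{\mathrm{Tr}(u\alpha xy+ux+vax+vby)+vc}.
\]
\begin{itemize}
\item Summing over $y$ forces $u\alpha x+vb=0$.
\item For $u\neq0$ this gives $x=-vb/(u\alpha)$, which must be nonzero, so $v\neq0$ and $b\neq0$. The terms then reduce to $p^m\zeta_p^{\mathrm{Tr}(-vb/\alpha-v^2ab/(u\alpha))+vc}$.
\item Substituting $w=v/u$ and summing over $u$ leaves sums of the form $\sum_{v\in\mathbb{F}_p^*}\bar\chi(v(l_2-l_3))$ (using $v^2/u=vw$), with $l_2,l_3$ as in the notation list.
\item For $u=0$ we get $\sum_{v}\sum_{x\neq0}\chi(vax)\sum_y\chi(vby)$, which is nonzero only when $b=0$, by Lemma \ref{lemma2.1}.
\end{itemize}
So the weight is determined by the following cases:
\begin{enumerate}[(i)]
\item $a=b=0$, $c=0$ or $c\neq0$;
\item $b=0$, $a\neq0$, $c=0$ or $c\neq 0$;
\item $b\neq0$, split according to whether $l_3=0$ and whether $l_2=0$, or more precisely by the value of the relevant combination of $l_2$ and $l_3$.
\end{enumerate}
I expect exactly four nonzero weights, the minimum being $p^{m-1}(p^m-p^{m-1}-p+1)$.

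\emph{Frequencies.} Count the triples $(a,b,c)$ in each case by counting solutions of $\mathrm{Tr}(-ab/\alpha)=t$ and $\mathrm{Tr}(-b/\alpha)+c=s$ over $\mathbb{F}_{p^m}^2\times\mathbb{F}_p$. These are linear or bilinear trace conditions, and each is counted by one more character sum. This gives Table \ref{table2}.

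\emph{Dimension and self-orthogonality.} The zero weight occurs only for $(a,b,c)=(0,0,0)$, so the dimension is $2m+1$. Every weight is a multiple of $p^{m-1}$ with $m\ge2$, so the code is $p$-divisible. Since $\boldsymbol{1}\in\overline{\mathcal{C}_{D_2}}$ (take $a=b=0$, $c=1$), Lemma \ref{lemma2.6} gives self-orthogonality.

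\emph{Dual code.} Feed the distribution into \eqref{(2.2)}–\eqref{(2.4)} and solve for $A_1^\perp$, $A_2^\perp$, $A_3^\perp$. We need $A_1^\perp=A_2^\perp=0$ and $A_3^\perp>0$. This gives $d^\perp=3$, so the code is projective.

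The main obstacle is the case analysis when $b\neq0$: the Gauss-sum terms from the $u$-sum must be tracked carefully, and the counts of $(a,b,c)$ with prescribed values of $l_2$ and $l_3$ must be obtained exactly. These counts must be correct for the Pless moment check to give $A_1^\perp=A_2^\perp=0$, so I would sanity-check them against \eqref{(2.1)} and against the condition $A_1^\perp=0$ in \eqref{(2.2)} before computing $A_3^\perp$.
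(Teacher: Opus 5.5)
Your outline follows the paper's proof in structure: character sums for $n_2$ and for $wt=n_2-N_2$ (the paper's $\Omega_4$ and $\Omega_5$), case-by-case counting of $(a,b,c)$, Lemma \ref{lemma2.6}, and Pless moments for $d^\perp$. The one genuine difference is how you evaluate the $b\neq0$ part.

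\textbf{The $b\neq0$ sum.} The paper (Proposition \ref{prop3.10}, via Proposition \ref{prop3.8}) completes the square in $v$ and uses quadratic Gauss sums (Lemmas \ref{lemma2.3}--\ref{lemma2.4}). Your substitution $w=v/u$ linearizes the sum instead, but what it leaves is not $\sum_v\bar\chi(v(l_2-l_3))$; that expression is only the $w=-1$ slice. Written correctly,
\[
\sum_{u,v\in\mathbb{F}_p^*}\zeta_p^{l_2v+l_3v^2/u}=\sum_{w\in\mathbb{F}_p^*}\sum_{v\in\mathbb{F}_p^*}\zeta_p^{v(l_2+wl_3)}=p\cdot\#\{w\in\mathbb{F}_p^*:\ l_2+wl_3=0\}-(p-1).
\]
This equals $(p-1)^2$, $-(p-1)$ or $1$ according as $l_2=l_3=0$, exactly one of them is $0$, or both are nonzero. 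These give the weights $p^{m-1}(p^m-p^{m-1}-p+1)$, $p^{2m-2}(p-1)$ and $p^{m-1}(p^m-p^{m-1}-1)$ respectively.

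So the split in your case (iii) is exactly by whether $l_2$ and $l_3$ vanish. There are also no Gauss-sum terms left to track, so your route is more elementary than the paper's. For $\gamma\neq0$ the same trick reduces everything to counting roots of $l_3w^2+l_2w+l_1$ in $\mathbb{F}_p^*$.

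\textbf{Frequencies.} These are then immediate: for fixed $b\neq0$, $l_3$ takes each value $p^{m-1}$ times as $a$ varies, and $l_2=0$ for exactly one $c$.

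\textbf{Dual distance.} You can skip the moment computation here.
\begin{itemize}
\item The all-one row forces the scalar in any proportionality between two columns to be $1$, and distinct points of $D_2$ give distinct columns, so $d^\perp\ge3$.
\item Fix $x_0\neq0$, pick $y_0$ with $\mathrm{Tr}(\alpha x_0y_0+x_0)=0$, and pick $z$ with $\mathrm{Tr}(\alpha x_0z)=0\neq z$ (possible since $m\ge2$). The three points $(x_0,y_0+tz)$, $t=0,1,2$, are distinct (since $p\ge3$), lie in $D_2$, and give a weight-$3$ dual codeword with coefficients $(1,-2,1)$.
\end{itemize}
Unlike the Pless route, this argument does not depend on the frequencies being correct.
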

	
	\begin{table}[H]
		\centering
		\caption{The weight distribution of \( \overline{\mathcal{C}_{D_2}} \)}
		\begin{tabular}{c c}
			\toprule
			weight & frequency \\
			\midrule
			0 & 1 \\
			\( p^{2m-2}(p-1) \) & \( (p^m-1)(2p^m-2p^{m-1}+1) \) \\
			\( p^{m-1}(p^m-p^{m-1}-1) \) & \( (p-1)(p^m-1)(p^m-p^{m-1}+1) \) \\
			\( p^{m-1}(p^m-p^{m-1}-p+1) \) & \( p^{m-1}(p^m-1) \) \\
			\( p^{m-1}(p^m-1) \) & \( p - 1 \) \\
			\bottomrule
			\label{table2}
		\end{tabular}
	\end{table}

	\begin{theorem}\label{theorem3.5}
		If  $\gamma \neq 0,$ then $\overline{\mathcal{C}_{D_2}}$ is a $\left[ p^{m-1}(p^m-1),2m+1,p^{m-2}(p^{m+1}-2p^{m}-p+2)\right]_p$ four-weight self-orthogonal linear code.
	\end{theorem}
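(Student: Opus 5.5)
The plan is to compute the Hamming weight of every codeword $\boldsymbol{c}(a,b,c)$ of $\overline{\mathcal{C}_{D_2}}$ explicitly with additive characters, in the same way as for Theorem \ref{theorem3.4}, but only far enough to list the possible weights. The exact frequencies are not needed. First, the length. For each fixed $x\in\mathbb{F}_{p^m}^*$ the map $y\mapsto \alpha xy$ is a bijection, so $\mathrm{Tr}(\alpha xy+x)=\gamma$ has exactly $p^{m-1}$ solutions $y$. Hence $n=\#D_2=p^{m-1}(p^m-1)$ for every $\gamma$. Next I write
$$wt(\boldsymbol{c}(a,b,c))=n-N(a,b,c),\qquad N(a,b,c)=\frac{1}{p^2}\sum_{u,v\in\mathbb{F}_p}\ \sum_{x\in\mathbb{F}_{p^m}^*}\sum_{y\in\mathbb{F}_{p^m}}\zeta_p^{\mathrm{Tr}(u\alpha xy+ux)-u\gamma+v\mathrm{Tr}(ax+by)+vc}.$$

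I split this sum according to whether $u$ and $v$ vanish.
\begin{itemize}
\item The term $u=v=0$ contributes $p^m(p^m-1)$.
\item The terms $u=0,\ v\neq 0$ are evaluated by Lemma \ref{lemma2.1}. They vanish unless $b=0$; if $b=0$ they depend only on whether $a=0$ and on $c$.
\item For $u\neq0$, the sum over $y$ (Lemma \ref{lemma2.1}) forces $x=-vb/(u\alpha)$. This requires $v\neq 0$ and $b\neq 0$, and then the term equals
$$p^m\zeta_p^{\,v l_2+\frac{v^2}{u}l_3+u l_1}.$$
\end{itemize}

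The key simplification is the substitution $v=wu$ with $w\in\mathbb{F}_p^*$. The exponent becomes $u\,(l_3w^2+l_2w+l_1)$, and summing over $u\in\mathbb{F}_p^*$ gives
$$\sum_{u,w\in\mathbb{F}_p^*}\zeta_p^{u(l_3w^2+l_2w+l_1)}=p\,R-(p-1),\qquad R=\#\{w\in\mathbb{F}_p^*: l_3w^2+l_2w+l_1=0\}.$$
Since $\gamma\neq0$, we have $l_1=-\gamma\neq0$, so $w=0$ is never a root and $R$ is just the number of roots in $\mathbb{F}_p$. Thus $R\in\{0,1,2\}$:
\begin{itemize}
\item if $l_3\neq 0$, then $R=1+\bar\eta(l_2^2-4l_1l_3)$, by the quadratic-formula argument of Lemma \ref{lemma2.4} over $\mathbb{F}_p$;
\item if $l_3=0$, then $R=1$ when $l_2\neq0$ and $R=0$ when $l_2=0$.
\end{itemize}
This avoids the unknown counts mentioned in Remark \ref{remark3.3}: the theorem does not require the weight distribution.

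Collecting the cases gives the nonzero weights. The case $a=b=0,\ c\neq 0$ gives weight $n$. The case $b=0,\ (a,c)\neq(0,0)$ gives the value from the $u=0$ terms alone. The case $b\neq0$ gives weights
$$n-p^{2m-2}+\tfrac{1}{p}\cdot(\text{correction})\cdot(pR-p+1)\quad\text{for } R=0,1,2.$$
I will check that these collapse to exactly four distinct nonzero values. The smallest one is $p^{m-2}(p^{m+1}-2p^m-p+2)$, and I will verify by direct comparison that it comes from the $R=2$ case.

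Since every weight computed is positive for $(a,b,c)\neq(0,0,0)$, the map $(a,b,c)\mapsto\boldsymbol{c}(a,b,c)$ is injective. Hence the dimension is $2m+1$.

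For self-orthogonality I use Lemma \ref{lemma2.6}. Taking $a=b=0,\ c=1$ shows $\boldsymbol{1_n}\in\overline{\mathcal{C}_{D_2}}$, so it remains to check that every weight in the list is divisible by $p$.

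This divisibility check is the main obstacle. All the weights carry a factor $p^{m-2}$ or higher, but when $m=2$ the minimum weight equals $p^3-2p^2-p+2=(p-2)(p^2-1)$, which is not divisible by $p$. So Lemma \ref{lemma2.6} does not apply directly for $m=2$, and I would first try to prove orthogonality directly instead. The generators are the rows $\mathrm{Tr}(\theta x)$, $\mathrm{Tr}(\theta y)$ and $\boldsymbol{1}$, with $\theta$ running over a basis of $\mathbb{F}_{p^m}$. For each pair of rows, the inner product is a character sum over $D_2$ that can be evaluated with the same $u$-expansion as above, reducing it to Lemma \ref{lemma2.1}. If even this fails for $m=2$, I would restrict the statement to $m\ge3$, where $p\mid p^{m-2}$ and Lemma \ref{lemma2.6} applies immediately.
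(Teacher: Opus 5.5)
\textbf{Gap: the announced minimum distance cannot be verified, because it is never a weight.}

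Your character-sum setup is sound. The substitution $v=wu$ is a nicer device than the Gauss-sum case split of Proposition~\ref{prop3.10}: it gives $\Omega_5=p^m(pR-p+1)$ in one line and reproduces every value in that proposition.

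The gap is that you never finish the weight computation. The ``correction'' is left symbolic, and the constant term should be $n-p^{m-2}(p^m-1)$, not $n-p^{2m-2}$. Once you finish it, the target you announce is unreachable. From $N=p^{-2}\bigl(p^m(p^m-1)+\Omega_4+\Omega_5\bigr)$ you get
$wt=p^{m-2}(p-1)(p^m-1)-p^{-2}(\Omega_4+\Omega_5)$.
\begin{itemize}
\item For $b\neq0$ this gives $wt=p^{m-1}\bigl((p-1)p^{m-1}-R\bigr)$.
\item For $b=0$ it gives $p^{2m-2}(p-1)$ when $a\neq0,c=0$; $p^{m-1}(p^m-p^{m-1}-1)$ when $a\neq0,c\neq0$; and $n$ when $a=0$.
\end{itemize}
So the four nonzero weights are $p^{2m-2}(p-1)$, $p^{m-1}(p^m-p^{m-1}-1)$, $p^{m-1}(p^m-p^{m-1}-2)$ and $p^{m-1}(p^m-1)$, the same list as the paper's case analysis. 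For $b\neq0$ you can prescribe $l_3$ freely via $a$ and $l_2$ via $c$. Taking $l_2=0$, $l_3=\gamma$ makes the discriminant $4\gamma^2\in S_p$, so $R=2$ occurs and the minimum distance is $p^{m-1}(p^m-p^{m-1}-2)$.

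Your step ``verify that the smallest weight $p^{m-2}(p^{m+1}-2p^m-p+2)$ comes from $R=2$'' therefore cannot succeed. Note that $p^{m-2}(p^{m+1}-2p^m-p+2)=p^{m-2}(p-2)(p^m-1)$ has $p$-adic valuation exactly $m-2$. Every weight is divisible by $p^{m-1}$, so this number is never a weight, for any $m$. The distance in the statement is simply wrong. The paper's own examples $[24,5,12]_3$ and $[234,7,144]_3$ agree with $p^{m-1}(p^m-p^{m-1}-2)$, whereas the stated formula gives $8$ and $78$. The paper's proof also inconsistently assigns this value to $\tilde w_8$, although its own case table gives $p^{m-1}(p^m-1)$ for $a=b=0$, $c\neq0$. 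Restricting to $m\ge3$ does not rescue the statement.

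\textbf{Your ``main obstacle'' is an artifact of trusting that formula.} All weights are multiples of $p^{m-1}$, hence of $p$, for every $m\ge2$. Together with $\boldsymbol{1_n}=\mathbf{c}(0,0,1)\in\overline{\mathcal{C}_{D_2}}$, Lemma~\ref{lemma2.6} gives self-orthogonality directly. No pairwise inner-product computation and no restriction on $m$ is needed.

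The non-divisibility you spotted should have been read as evidence against the distance formula, not against Lemma~\ref{lemma2.6}. For $p=3$, $m=2$ the formula gives $d=8$. But in a ternary code $\mathbf{c}\cdot\mathbf{c}\equiv wt(\mathbf{c})\pmod 3$, so self-orthogonality forces every weight to be divisible by $3$.

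What your method actually proves is the corrected statement: $\overline{\mathcal{C}_{D_2}}$ is a four-weight self-orthogonal $[p^{m-1}(p^m-1),\,2m+1,\,p^{m-1}(p^m-p^{m-1}-2)]_p$ code.
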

	
	\begin{remark}\label{remark3.6}
		For $\gamma \neq 0,$ and $	{wt}_2(\mathbf{c})$ given by the proof of Theorem \ref{theorem3.5}, it's not easy to compute $$ \# \left\lbrace a \in \mathbb{F}_{p^m}, b \in \mathbb{F}_{p^m}^*,c \in \mathbb{F}_p :  l_3 \neq 0~ \text{and}~  l_2^2-4l_1l_3 \in NS_p  \right\rbrace,$$ 
		$$ \# \left\lbrace a \in \mathbb{F}_{p^m}, b \in \mathbb{F}_{p^m}^*,c \in \mathbb{F}_p :  l_3 \neq 0~ \text{and}~ 4l_1l_3-l_2^2=0 \right\rbrace $$
		and 
		$$ \# \left\lbrace a \in \mathbb{F}_{p^m}, b \in \mathbb{F}_{p^m}^*,c \in \mathbb{F}_p :  l_3 \neq 0~ \text{and}~  l_2^2-4l_1l_3 \in S_p \right\rbrace. $$
		Therefore,  the weight distribution of $\overline{\mathcal{C}_{D_2}}$ for $\gamma \neq 0$ is not given in this manuscript.
	\end{remark}

	\subsection{Some auxiliary results}
	\label{subsection3.2}
	In this subsection, we give some  auxiliary results which will be used in the sequel. Let $n$ be the length of $\overline{\mathcal{C}_{D_i}}.$ Obviously, $\left( \mathrm{Tr}\left(ax+by \right)  \right)_{\left( x,y\right)\in D_i } +c\boldsymbol{1_n}$ is the zero codeword when $(a,b,c)=(0,0,0).$ Hence, we  default $(a,b,c) \neq (0,0,0)$ later.
	\begin{proposition}\label{prop3.7}
		Let $\Omega_2=\sum\limits_{v\in \mathbb{F}_p^*}  \sum\limits_{x\in \mathbb{F}_{p^m}} \sum\limits_{y\in \mathbb{F}_{p^m}} \zeta_{p}^{\mathrm{Tr} (vax+vby)+vc},$ then 
		$$\Omega_2= \begin{array}{c}
			\begin{cases}
				0 ,  &\text{if} ~a\neq 0,b \in \mathbb{F}_{p^m},c\in \mathbb{F}_p, \\
				&~~\text{or}~ a=0, b\neq 0,c \in \mathbb{F}_p;\\
				-p^{2m} , &\text{if} ~ a=b=0,c\neq0.
			\end{cases}
		\end{array} $$
	\end{proposition}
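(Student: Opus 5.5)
The plan is to factor the triple sum and evaluate each factor with Lemma \ref{lemma2.1}. For fixed $v\in\mathbb{F}_p^*$, the trace is additive and $vc\in\mathbb{F}_p$ is a constant, so the inner double sum splits as
\[
\sum_{x\in \mathbb{F}_{p^m}} \sum_{y\in \mathbb{F}_{p^m}} \zeta_{p}^{\mathrm{Tr} (vax+vby)+vc}
=\zeta_p^{vc}\Bigl(\sum_{x\in \mathbb{F}_{p^m}}\zeta_p^{\mathrm{Tr}(vax)}\Bigr)\Bigl(\sum_{y\in \mathbb{F}_{p^m}}\zeta_p^{\mathrm{Tr}(vby)}\Bigr).
\]
By Lemma \ref{lemma2.1}, the $x$-sum equals $p^m$ if $va=0$ and $0$ otherwise. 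Since $v\neq 0$, the condition $va=0$ is equivalent to $a=0$. The $y$-sum behaves in the same way with $b$ in place of $a$.

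Next I split into the cases of the statement. If $a\neq 0$, the $x$-factor vanishes for every $v\in\mathbb{F}_p^*$, so $\Omega_2=0$ whatever $b$ and $c$ are. If $a=0$ and $b\neq 0$, the $y$-factor vanishes for every $v$, so again $\Omega_2=0$. If $a=b=0$, then, since we assume $(a,b,c)\neq(0,0,0)$, we have $c\neq 0$. Each inner double sum is then $p^{2m}\zeta_p^{vc}$, which gives $\Omega_2=p^{2m}\sum_{v\in\mathbb{F}_p^*}\zeta_p^{vc}$.

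The only step needing care is this last character sum over the prime field. Since $c\in\mathbb{F}_p^*$, the map $v\mapsto vc$ permutes $\mathbb{F}_p^*$. So the sum is $\sum_{w\in\mathbb{F}_p^*}\zeta_p^{w}=-1$, which is the case $m=1$ of Lemma \ref{lemma2.1}. Hence $\Omega_2=-p^{2m}$, and all cases of the statement are covered. I do not expect any real obstacle; this is a direct orthogonality computation.
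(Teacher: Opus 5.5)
Your proposal is correct and follows the paper's proof essentially verbatim. You factor $\Omega_2$ as $\sum_{v\in\mathbb{F}_p^*}\zeta_p^{vc}\sum_{x}\zeta_p^{\mathrm{Tr}(vax)}\sum_{y}\zeta_p^{\mathrm{Tr}(vby)}$ and apply Lemma \ref{lemma2.1} in the same three cases ($a\neq 0$; $a=0,b\neq 0$; $a=b=0,c\neq 0$), with your remark that $v\mapsto vc$ permutes $\mathbb{F}_p^*$ simply spelling out the step the paper attributes directly to Lemma \ref{lemma2.1}.
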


	\noindent\textbf{Proof.}
	~It's easy to get 
	
	$$	\Omega_2=\sum_{v\in \mathbb{F}_p^*}\zeta_{p}^{vc} \sum_{x\in \mathbb{F}_{p^m}}\zeta_{p}^{\mathrm{Tr} (vax)}\sum_{y\in \mathbb{F}_{p^m}}\zeta_{p}^{\mathrm{Tr} (vby)}.$$
	Next depending on the non-zeroness of $a,b$ or $c,$ we have the following three cases.

	$\bf{Case~ 1.}$~If $a \neq 0,b \in \mathbb{F}_{p^m}$ and $c \in \mathbb{F}_p$, then by Lemma \ref{lemma2.1}, we have  $\sum\limits_{x\in \mathbb{F}_{p^m}}\zeta_{p}^{\mathrm{Tr} (vax)}=0,$ thus,  $\Omega_2=0.$
	
	$\bf{Case~ 2.}$~If $a=0, b \neq 0 $ and $c \in \mathbb{F}_p$, then  by Lemma \ref{lemma2.1}, we have  $\sum\limits_{y\in \mathbb{F}_{p^m}}\zeta_{p}^{\mathrm{Tr} (vby)}=0,$ thus,  $\Omega_2=0.$

	$\bf{Case ~3.}$~If $a=b=0 $ and $c \neq 0$, then  by Lemma \ref{lemma2.1}, we have $\sum\limits_{v\in \mathbb{F}_p^*}\zeta_{p}^{vc}=-1,$ thus,  $\Omega_2=-p^{2m}.$
	
	From the above, we complete the proof of Proposition \ref{prop3.7}.\hfill$\square$

	\begin{proposition}\label{prop3.8}
		Let $\Omega_3=\sum\limits_{u\in \mathbb{F}_p^*}\sum\limits_{v\in \mathbb{F}_p^*}  \sum\limits_{x\in \mathbb{F}_{p^m}} \sum\limits_{y\in \mathbb{F}_{p^m}} \zeta_{p}^{\mathrm{Tr} (u\alpha xy +u\beta y +ux+vax+vby)-u \gamma +vc}, $ then 
		$$\Omega_3= \begin{array}{c}
			\begin{cases}
				p^m(p-1)^2, &\text{if}~ t_1=t_2=t_3=0;\\
				-p^m(p-1), &\text{if}~ 
				t_1=0,t_2 \neq 0, t_3=0,\\ &~~\text{or}~t_1\neq0,t_2=t_3=0, \\
				&~~\text{or}~  t_1=t_2=0,t_3\neq0,\\
				&~~\text{or}~t_1 \neq 0,t_3 \neq 0,t_2^2-4t_1t_3 \in NS_p;	\\
				p^m,&\text{if}~ t_1\neq0,t_2 \neq 0,t_3=0, \\
				&~~\text{or}~ t_1=0,t_2\neq 0,t_3 \neq 0,\\ &~~\text{or}~ t_1 \neq 0,t_3 \neq 0, 4t_1t_3-t_2^2=0;\\
				p^m(p+1), &\text{if}~  t_1\neq 0,t_3 \neq 0 ,t_2^2-4t_1t_3 \in S_p.
			\end{cases}
		\end{array}$$
	\end{proposition}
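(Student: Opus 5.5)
Our approach is to evaluate the inner double sum over $(x,y)$ in closed form for each fixed pair $(u,v)\in\mathbb{F}_p^*\times\mathbb{F}_p^*$. This collapses $\Omega_3$ to a character sum over $\mathbb{F}_p^*\times\mathbb{F}_p^*$ of $\zeta_p^{Q(u,v)}$, where $Q$ is a quadratic-type expression in $u,v$ with coefficients $t_1,t_2,t_3$. We then evaluate that sum by cases, exactly as in Lemma \ref{lemma2.12} and Proposition \ref{prop3.7}.

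\emph{Inner sum.} For fixed $u,v$, sum over $x$ first: the exponent is linear in $x$ with coefficient $u\alpha y+u+va$. By Lemma \ref{lemma2.1}, the $x$-sum equals $p^m$ when $y=-\frac{u+va}{u\alpha}$ and vanishes otherwise. Substituting this $y$ into $\mathrm{Tr}(u\beta y+vby)$ gives
\[
\mathrm{Tr}\Big(-\frac{(u\beta+vb)(u+va)}{u\alpha}\Big)=u\,\mathrm{Tr}\Big(-\frac{\beta}{\alpha}\Big)+v\,\mathrm{Tr}\Big(-\frac{b+a\beta}{\alpha}\Big)+\frac{v^2}{u}\mathrm{Tr}\Big(-\frac{ab}{\alpha}\Big),
\]
using $u,v\in\mathbb{F}_p$. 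Adding $-u\gamma+vc$, the exponent becomes $ut_1+vt_2+\frac{v^2}{u}t_3$. Hence
\[
\Omega_3=p^m\sum_{u\in\mathbb{F}_p^*}\sum_{v\in\mathbb{F}_p^*}\zeta_p^{ut_1+vt_2+v^2t_3/u}.
\]

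\emph{Case analysis.} We distinguish cases on $t_3$.

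If $t_3=0$, the double sum factors as $\big(\sum_{u\in\mathbb{F}_p^*}\zeta_p^{ut_1}\big)\big(\sum_{v\in\mathbb{F}_p^*}\zeta_p^{vt_2}\big)$. Each factor equals $p-1$ or $-1$ by Lemma \ref{lemma2.1}. This yields $(p-1)^2$, $-(p-1)$ (in both mixed cases) and $1$, matching the first three lines where $t_3=0$.

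If $t_3\neq0$, substitute $v=uw$ with $w\in\mathbb{F}_p^*$. The exponent becomes $u(t_1+wt_2+w^2t_3)=u\,g(w)$, where $g(w)=t_3w^2+t_2w+t_1$. Summing over $u$ first,
\[
\sum_{u}\zeta_p^{u g(w)}=\begin{cases}p-1,& g(w)=0,\\ -1,& g(w)\neq 0.\end{cases}
\]
So the double sum equals $pN-(p-1)$, where $N=\#\{w\in\mathbb{F}_p^*: g(w)=0\}$.

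\emph{Counting roots.} If $t_1=0$, the roots of $g$ are $w=0$ (excluded) and $w=-t_2/t_3$. The latter is nonzero iff $t_2\neq0$, so $N=1$ or $0$. This gives $1$ or $-(p-1)$, matching the cases $t_1=0,t_2\neq0,t_3\neq0$ and $t_1=t_2=0,t_3\neq0$.

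If $t_1\neq0$, then $0$ is not a root of $g$. Since $p$ is odd, $N$ is $2$, $1$ or $0$ according as the discriminant $t_2^2-4t_1t_3$ lies in $S_p$, is zero, or lies in $NS_p$. This gives $p+1$, $1$ and $-(p-1)$ respectively.

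Multiplying every case by $p^m$ reproduces exactly the table in the statement. The only delicate step is the inner evaluation: one must check carefully that the coefficient $v^2/u$ of $t_3$ comes out right, and that the scalars $u,v\in\mathbb{F}_p$ pass through the trace. We expect that bookkeeping to be the main place errors could occur. The remainder is routine root counting, which is cleaner via the substitution $v=uw$ than via Gauss sums.
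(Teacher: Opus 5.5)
Your proof is correct, and its first half matches the paper. The paper sums over $y$ first, forcing $x=-\frac{u\beta+vb}{u\alpha}$, while you sum over $x$ first, forcing $y=-\frac{u+va}{u\alpha}$. Both give the same expansion $\Omega_3=p^m\sum_{u,v\in\mathbb{F}_p^*}\zeta_p^{ut_1+vt_2+t_3v^2/u}$, and your treatment of $t_3=0$ by factoring is the paper's as well.

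The genuine difference is the case $t_3\neq0$. The paper completes the square in $v$ using Lemma \ref{lemma2.4}, which produces the factor $\zeta_p^{-t_2^2u/(4t_3)}\bar\eta(t_3u)G(\bar\eta,\bar\chi)$. It then evaluates the resulting twisted sum over $u$ via $G^2(\bar\eta,\bar\chi)=\bar\eta(-1)p$, tracking the signs $\bar\eta(-1)$ and $\bar\eta(4t_1t_3-t_2^2)$. You instead use that the exponent is homogeneous of degree one in $(u,v)$: the substitution $v=uw$ turns it into $u\,g(w)$ with $g(w)=t_3w^2+t_2w+t_1$. Orthogonality alone then gives $\Omega_3/p^m=pN-(p-1)$, where $N$ is the number of roots of $g$ in $\mathbb{F}_p^*$.

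What your route buys:
\begin{itemize}
\item It needs only Lemma \ref{lemma2.1} and the elementary root count of a quadratic over $\mathbb{F}_p$, with no Gauss-sum sign bookkeeping.
\item It explains why the answer depends on the discriminant: the values $p+1$, $1$, $1-p$ are just $pN-(p-1)$ for $N=2,1,0$.
\item It would also handle $t_3=0$ uniformly, since there $g$ is linear or identically zero ($N=p-1$ gives $(p-1)^2$). You do not need this, but it removes the separate case.
\end{itemize}

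What the paper's route buys: the Gauss-sum computation does not depend on this homogeneity, so it is the more standard template and would still work for exponents where $v=uw$ fails to linearize in $u$.
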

	\noindent\textbf{Proof.}
	By Lemma \ref{lemma2.1}, it's easy to get
	$$\begin{aligned}
		\Omega_3
		=&\sum_{u\in \mathbb{F}_p^*}\sum_{v\in \mathbb{F}_p^*}\zeta_{p}^{-u \gamma +vc}\sum_{x\in \mathbb{F}_{p^m}} \zeta_{p}^{\mathrm{Tr}((u+av)x)}\sum_{y\in \mathbb{F}_{p^m}}\zeta_{p}^{\mathrm{Tr}((u\alpha x+u \beta +vb)y)}\\
		=&\sum_{u\in \mathbb{F}_p^*}\sum_{v\in \mathbb{F}_p^*}\zeta_{p}^{-u \gamma +vc}\sum_{x\in \mathbb{F}_{p^m}\setminus \left\lbrace -\frac{u\beta+vb}{u\alpha} \right\rbrace  } \zeta_{p}^{\mathrm{Tr}((u+av)x)}\sum_{y\in \mathbb{F}_{p^m}}\zeta_{p}^{\mathrm{Tr}((u\alpha x+u \beta +vb)y)}\\
		&+p^m\sum_{u\in \mathbb{F}_p^*}\sum_{v\in \mathbb{F}_p^*}\zeta_{p}^{-u \gamma +vc}\zeta_{p}^{\mathrm{Tr}(-\frac{(u+av)(u\beta+vb)}{u\alpha})}\\
		=&p^m\sum_{u\in \mathbb{F}_p^*}\sum_{v\in \mathbb{F}_p^*}\zeta_{p}^{-u \gamma +vc}\zeta_{p}^{\mathrm{Tr}(-\frac{(u+av)(u\beta+vb)}{u\alpha})}\\
		=&p^m\sum_{u\in \mathbb{F}_p^*}\sum_{v\in \mathbb{F}_p^*}\zeta_{p}^{(\mathrm{Tr}(-\frac{\beta}{\alpha})-\gamma)u+(\mathrm{Tr}(-\frac{b+a\beta}{\alpha})+c)v+\mathrm{Tr}(-\frac{ab}{\alpha})\frac{v^2}{u}}\\
		=&p^m\sum_{u\in \mathbb{F}_p^*}\sum_{v\in \mathbb{F}_p^*}\zeta_{p}^{t_1u+t_2v+t_3\frac{v^2}{u}}.
	\end{aligned}
	$$
	Next depending on  $t_3=0$ or not, we have the following two cases.
	
	$\bf{Case ~1.}$ If $ t_3=0,$ then  $\Omega_3=p^m\sum\limits_{u\in \mathbb{F}_p^*}\sum\limits_{v\in \mathbb{F}_p^*}\zeta_{p}^{t_1u+t_2v}.$
	Now, depending on $t_i=0$ or not $(i=1,2),$  we  have

	$$\Omega_3= \begin{array}{c}
		\begin{cases}
			p^m(p-1)^2 ,& \text{if}~ t_1=t_2=0;\\
			-p^m(p-1) ,&\text{if} ~ t_1=0,t_2 \neq 0,\\
			&~~\text{or}~ t_1 \neq 0,t_2=0;\\
			p^m, &\text{if} ~ t_1 \neq 0,t_2 \neq 0.
		\end{cases}
	\end{array} $$

	$\bf{Case~ 2.}$ If $ t_3\neq 0,$ then by Lemma \ref{lemma2.4},  we have  $$\begin{aligned}
		\Omega_3=&p^m\sum_{u\in \mathbb{F}_p^*}\zeta_{p}^{t_1u}\sum_{v\in \mathbb{F}_p^*}\zeta_{p}^{t_2v+t_3\frac{v^2}{u}}\\
		=&p^m\sum_{u\in \mathbb{F}_p^*}\zeta_{p}^{t_1u}\left( \sum_{v\in \mathbb{F}_p}\zeta_{p}^{t_2v+t_3\frac{v^2}{u}}-1\right) \\
		=&p^m\sum_{u\in \mathbb{F}_p^*}\zeta_{p}^{t_1u}\sum_{v\in \mathbb{F}_p}\zeta_{p}^{t_2v+t_3\frac{v^2}{u}}-p^m\sum_{u\in \mathbb{F}_p^*}\zeta_{p}^{t_1u}\\
		=&p^m\sum_{u\in \mathbb{F}_p^*}\zeta_{p}^{t_1u}\zeta_{p}^{-\frac{t_2^2u}{4t_3}}\bar{\eta}(\frac{t_3}{u})G(\bar{\eta},\bar{\chi})-p^m\sum_{u\in \mathbb{F}_p^*}\zeta_{p}^{t_1u}\\
		=&p^m\sum_{u\in \mathbb{F}_p^*}\zeta_{p}^{\frac{4t_1t_3-t_2^2}{4t_3}u}\bar{\eta}({t_3}{u})G(\bar{\eta},\bar{\chi})-p^m\sum_{u\in \mathbb{F}_p^*}\zeta_{p}^{t_1u}.
	\end{aligned}$$
	Now depending on $t_1=0$ or not, we  have the following two cases.
	
	(1) If $t_1=0,$ then we have $\Omega_3=p^m\sum\limits_{u\in \mathbb{F}_p^*}\zeta_{p}^{\frac{-t_2^2}{4t_3}u}\bar{\eta}({t_3}{u})G(\bar{\eta},\bar{\chi})-p^m(p-1).$
	If $t_2= 0,$ then  $\sum\limits_{u\in \mathbb{F}_p^*}\bar{\eta}({t_3}{u})=0,$ thus,
	$\Omega_3=-p^m(p-1).$
	If $t_2\neq 0,$ then
	$$\begin{aligned}
		\Omega_3=&p^m\sum_{u\in \mathbb{F}_p^*}\zeta_{p}^{\frac{-t_2^2}{4t_3}u}\bar{\eta}({t_3}{u})G(\bar{\eta},\bar{\chi})-p^m(p-1)\\
		=&p^m\sum_{u\in \mathbb{F}_p^*}\zeta_{p}^{\frac{-t_2^2}{4t_3}u}\bar{\eta}(\frac{-t_2^2}{4t_3}u)\bar{\eta}(-\frac{4t_3^2}{t_2^2})G(\bar{\eta},\bar{\chi})-p^m(p-1)\\
		=& p^mG^2(\bar{\eta},\bar{\chi}) \bar{\eta}(-1)-p^m(p-1)\\
		=&p^{m+1} \bar{\eta}^2(-1)-p^m(p-1)\\
		=&p^{m+1}-p^m(p-1)\\
		=&p^m.
	\end{aligned}
	$$
	
	(2) If $t_1\neq 0,$ then $\Omega_3= p^m\sum\limits_{u\in \mathbb{F}_p^*}\zeta_{p}^{\frac{4t_1t_3-t_2^2}{4t_3}u}\bar{\eta}({t_3}{u})G(\bar{\eta},\bar{\chi})+p^m.$
	If $4t_1t_3-t_2^2=0,$ then  $\sum\limits_{u\in \mathbb{F}_p^*}\bar{\eta}(t_3u)=0,$ thus, $\Omega_3=p^m.$
	If $t_2^2-4t_1t_3 \in S_p,$ then
	$$\begin{aligned}
		\Omega_3=&p^m\sum_{u\in \mathbb{F}_p^*}\zeta_{p}^{\frac{4t_1t_3-t_2^2}{4t_3}u}\bar{\eta}(\frac{4t_1t_3-t_2^2}{4t_3}u)\bar{\eta}(\frac{4t_3^2}{4t_1t_3-t_2^2})G(\bar{\eta},\bar{\chi})+p^m\\
		=&p^mG^2\bar{\eta}(-1)+p^m\\
		=&p^m(p+1).
	\end{aligned}$$
	If $t_2^2-4t_1t_3 \in NS_p,$ then $$
	\begin{aligned}	
		\Omega_3=&p^m\sum_{u\in \mathbb{F}_p^*}\zeta_{p}^{\frac{4t_1t_3-t_2^2}{4t_3}u}\bar{\eta}(\frac{4t_1t_3-t_2^2}{4t_3}u)\bar{\eta}(\frac{4t_3^2}{4t_1t_3-t_2^2})G(\bar{\eta},\bar{\chi})+p^m\\
		=&-p^mG^2\bar{\eta}(-1)+p^m\\
		=&p^m(1-p).
	\end{aligned}$$
	
	From the above, we complete the proof of Proposition \ref{prop3.8}.\hfill$\square$
	
	
	\begin{proposition}\label{prop3.9}
		Let $\Omega_4=\sum\limits_{v\in \mathbb{F}_p^*}  \sum\limits_{x\in \mathbb{F}_{p^m}^*} \sum\limits_{y\in \mathbb{F}_{p^m}} \zeta_{p}^{\mathrm{Tr} (vax+vby)+vc},$ then 
		$$\Omega_4= \begin{array}{c}
			\begin{cases}
				-p^m(p-1) ,& \text{if} ~a\neq 0,b=c=0;\\
				p^m ,&\text{if} ~ a\neq 0, b=0,c\neq0;\\
				0, &\text{if} ~a\in \mathbb{F}_{p^m}, b\neq 0, c \in \mathbb{F}_{p};\\
				-p^m(p^m-1),&\text{if} ~a=b=0,c\neq 0.
			\end{cases}
		\end{array} $$
	\end{proposition}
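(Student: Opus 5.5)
Following the proof of Proposition \ref{prop3.7}, I will first factor the triple sum as
$$\Omega_4=\sum_{v\in \mathbb{F}_p^*}\zeta_{p}^{vc}\sum_{x\in \mathbb{F}_{p^m}^*}\zeta_{p}^{\mathrm{Tr}(vax)}\sum_{y\in \mathbb{F}_{p^m}}\zeta_{p}^{\mathrm{Tr}(vby)}.$$
Then I will evaluate each inner sum with Lemma \ref{lemma2.1}, using that $v\in\mathbb{F}_p^*$ implies $va=0$ exactly when $a=0$, and $vb=0$ exactly when $b=0$. Since $(a,b,c)\neq(0,0,0)$ by the standing convention, the case split is governed by whether $b$, $a$ and $c$ vanish.

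\textbf{Case 1} ($b\neq 0$). The $y$-sum vanishes by Lemma \ref{lemma2.1} for every $v$, so $\Omega_4=0$ for any $a\in\mathbb{F}_{p^m}$ and $c\in\mathbb{F}_p$.

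\textbf{Case 2} ($b=0$, $a\neq 0$). The $y$-sum equals $p^m$, and the $x$-sum over $\mathbb{F}_{p^m}^*$ equals $-1$ by the second formula of Lemma \ref{lemma2.1}. Hence
$$\Omega_4=-p^m\sum_{v\in\mathbb{F}_p^*}\zeta_p^{vc}.$$
Applying Lemma \ref{lemma2.1} over $\mathbb{F}_p$, this sum is $p-1$ when $c=0$ and $-1$ when $c\neq0$. This gives $-p^m(p-1)$ and $p^m$, respectively.

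\textbf{Case 3} ($a=b=0$, so $c\neq0$). The $x$-sum equals $p^m-1$ and the $y$-sum equals $p^m$. The $v$-sum is $-1$, so $\Omega_4=-p^m(p^m-1)$.

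No step poses a real obstacle. The only point needing care is to use the punctured version of Lemma \ref{lemma2.1} for the $x$-sum, since $x$ ranges over $\mathbb{F}_{p^m}^*$. This is what makes the result differ from Proposition \ref{prop3.7} in the case $b=0$, $a\neq0$.
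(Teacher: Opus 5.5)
Your proposal is correct and follows the paper's proof: it uses the same factorization of $\Omega_4$ into the $v$-, $x$- and $y$-sums and evaluates each with Lemma \ref{lemma2.1}, including the punctured $x$-sum. The only difference is that the paper splits your Case 2 into two separate cases, $c=0$ and $c\neq 0$.
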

	\noindent\textbf{Proof.}
	It's easy to get
	$$
	\Omega_4=\sum_{v\in \mathbb{F}_p^*}\zeta_{p}^{vc} \sum_{x\in \mathbb{F}_{p^m}^*}\zeta_{p}^{\mathrm{Tr} (vax)}\sum_{y\in \mathbb{F}_{p^m}}\zeta_{p}^{\mathrm{Tr} (vby)}.
	$$	
	Next depending on the non-zeroness of $a,b$ or $c,$ we have the following four cases.
	
	$\bf{Case~ 1.}$~If $b \neq 0,a \in \mathbb{F}_{p^m}$ and $c \in \mathbb{F}_p$, then by Lemma \ref{lemma2.1}, we have $\sum\limits_{y\in \mathbb{F}_{p^m}}\zeta_{p}^{\mathrm{Tr} (vby)}=0, $ thus, $\Omega_4=0.$

	$\bf{Case~ 2.}$~If $b=0, a \neq 0 $ and $c =0$, then  by Lemma \ref{lemma2.1}, we have $\sum\limits_{x\in \mathbb{F}_{p^m}^*}\zeta_{p}^{\mathrm{Tr} (vax)}=-1, $ 
	thus, $ \Omega_4=-p^m(p-1).$

	$\bf{Case~ 3.}$~If $b=0, a \neq 0$ and $c \neq 0$, then by Lemmas \ref{lemma2.1}, we have $\sum\limits_{x\in \mathbb{F}_{p^m}^*}\zeta_{p}^{\mathrm{Tr} (vax)}=-1 $ and $\sum\limits_{v\in \mathbb{F}_p^*}\zeta_{p}^{vc}=-1 $
	, thus, $ \Omega_4=p^m.$

	$\bf{Case~ 4.}$~If $b=a=0 $ and $c \neq 0$, then by Lemma \ref{lemma2.1}, we have $ \sum\limits_{v\in \mathbb{F}_p^*}\zeta_{p}^{vc}=-1,$ thus, $\Omega_4=-p^{m}(p^m-1). $
	
	From the above, we complete the proof of Proposition \ref{prop3.9}.\hfill$\square$
	\begin{proposition}\label{prop3.10}
		Let $\Omega_5=\sum\limits_{u\in \mathbb{F}_p^*}\sum\limits_{v\in \mathbb{F}_p^*}  \sum\limits_{x\in \mathbb{F}_{p^m}^*} \sum\limits_{y\in \mathbb{F}_{p^m}} \zeta_{p}^{\mathrm{Tr} (u\alpha xy  +ux+vax+vby)-u \gamma +vc}, $ then 
		$$\Omega_5= \begin{array}{c}
			\begin{cases}
				0,&\text{if}~b=0;\\
				p^m(p-1)^2, &\text{if}~ l_1=l_2=l_3=0,b\neq0;\\
				-p^m(p-1), &\text{if}~ 
				l_1=0,l_2 \neq 0, l_3=0,b\neq0,\\ &~~\text{or}~l_1\neq0,l_2=l_3=0,b\neq0 ,\\
				&~~\text{or}~  l_1=l_2=0,l_3\neq0,b\neq0,\\
				&~~\text{or}~l_1 \neq 0,l_3 \neq 0,b\neq0,l_2^2-4l_1l_3 \in NS_p;\\	
				p^m,&\text{if}~ l_1\neq0,l_2 \neq 0,l_3=0,b\neq0, \\
				&~~\text{or}~ l_1=0,l_2\neq 0,l_3 \neq 0,b\neq0 ,\\
				&~~\text{or}~ l_1 \neq 0,
				l_3 \neq 0,b\neq0,4l_1l_3-l_2^2=0;\\
				p^m(p+1), &\text{if}~  l_1\neq 0, l_3 \neq 0,b\neq0,l_2^2-4l_1l_3 \in S_p.
			\end{cases}
		\end{array} $$
	\end{proposition}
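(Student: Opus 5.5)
The plan is to evaluate the inner sum over $y$ first with Lemma \ref{lemma2.1}, collapse the sum over $x$ to a single point, and so reduce $\Omega_5$ to the same double character sum over $\mathbb{F}_p^*\times\mathbb{F}_p^*$ that appears in the proof of Proposition \ref{prop3.8}. The $t_i$ are then replaced by $l_i$, and the earlier case analysis applies verbatim.

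\textbf{Step 1.} Rewrite the sum as
$$\Omega_5=\sum_{u\in\mathbb{F}_p^*}\sum_{v\in\mathbb{F}_p^*}\zeta_p^{-u\gamma+vc}\sum_{x\in\mathbb{F}_{p^m}^*}\zeta_p^{\mathrm{Tr}((u+av)x)}\sum_{y\in\mathbb{F}_{p^m}}\zeta_p^{\mathrm{Tr}((u\alpha x+vb)y)}.$$
By Lemma \ref{lemma2.1}, the inner $y$-sum equals $p^m$ when $x=-\frac{vb}{u\alpha}$ and $0$ otherwise.

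\textbf{Step 2 (the case $b=0$).} The only surviving point is $x=0$. This point is excluded from the range $x\in\mathbb{F}_{p^m}^*$, so $\Omega_5=0$. This case is exactly the new feature compared with Proposition \ref{prop3.8}.

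\textbf{Step 3 (the case $b\neq0$).} The point $x_0=-\frac{vb}{u\alpha}$ is nonzero for every $u,v\in\mathbb{F}_p^*$, so it lies in the summation range and contributes. Using $u,v\in\mathbb{F}_p$ and $\mathbb{F}_p$-linearity of the trace,
$$\mathrm{Tr}\big((u+av)x_0\big)=v\,\mathrm{Tr}\Big(-\frac{b}{\alpha}\Big)+\frac{v^2}{u}\,\mathrm{Tr}\Big(-\frac{ab}{\alpha}\Big).$$
Combining this with $-u\gamma+vc$ gives
$$\Omega_5=p^m\sum_{u\in\mathbb{F}_p^*}\sum_{v\in\mathbb{F}_p^*}\zeta_p^{l_1u+l_2v+l_3\frac{v^2}{u}}.$$
This is literally the expression reached in the proof of Proposition \ref{prop3.8}, with $(t_1,t_2,t_3)$ replaced by $(l_1,l_2,l_3)$.

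\textbf{Step 4.} Repeat the case split of Proposition \ref{prop3.8}. If $l_3=0$, the sum factors, and Lemma \ref{lemma2.1} gives the values $p^m(p-1)^2$, $-p^m(p-1)$ or $p^m$, according to whether $l_1$ and $l_2$ vanish. If $l_3\neq0$, complete the square in $v$ using Lemma \ref{lemma2.4} over $\mathbb{F}_p$, then evaluate the resulting twisted sum $\sum_u\zeta_p^{su}\bar\eta(l_3u)$ with $s=\frac{4l_1l_3-l_2^2}{4l_3}$ via Lemma \ref{lemma2.3}. This separates the subcases $l_1=0$ versus $l_1\ne0$, and $l_2^2-4l_1l_3$ being zero, in $S_p$, or in $NS_p$.

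The only real obstacle is the bookkeeping in Step 3, namely verifying that the collapsed exponent has exactly the form $l_1u+l_2v+l_3v^2/u$. Everything after that is already carried out in Proposition \ref{prop3.8}.
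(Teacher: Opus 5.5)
Your proposal is correct and follows the paper's proof essentially step for step. You use Lemma \ref{lemma2.1} on the $y$-sum, which gives $\Omega_5=0$ when $b=0$ because $x\neq0$, and which collapses the $x$-sum to the nonzero point $-\frac{vb}{u\alpha}$ when $b\neq0$; this yields $p^m\sum_{u,v\in\mathbb{F}_p^*}\zeta_p^{l_1u+l_2v+l_3v^2/u}$, after which you reuse the case analysis of Proposition \ref{prop3.8} with $t_i$ replaced by $l_i$, exactly as the paper does.
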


	\noindent\textbf{Proof.}
	It's easy to get $$\Omega_5=\sum_{u\in \mathbb{F}_p^*}\sum_{v\in \mathbb{F}_p^*}\zeta_{p}^{-u \gamma +vc}\sum_{x\in \mathbb{F}_{p^m}^*} \zeta_{p}^{\mathrm{Tr}((u+av)x)}\sum_{y\in \mathbb{F}_{p^m}}\zeta_{p}^{\mathrm{Tr}((u\alpha x+vb)y)}. $$
	If $b=0,$  then by Lemma \ref{lemma2.1}, we have $\sum\limits_{y\in \mathbb{F}_{p^m}}\zeta_{p}^{\mathrm{Tr}(u\alpha x y)}=0, $ thus, $\Omega_5=0. $
	If $b\neq0,$ then $ -\frac{vb}{u \alpha} \neq 0.$ Furthermore, by Lemma \ref{lemma2.1}, we have 
	$$\begin{aligned}
		\Omega_5
		=&\sum_{u\in \mathbb{F}_p^*}\sum_{v\in \mathbb{F}_p^*}\zeta_{p}^{-u \gamma +vc}\sum_{x\in \mathbb{F}_{p^m}^*} \zeta_{p}^{\mathrm{Tr}((u+av)x)}\sum_{y\in \mathbb{F}_{p^m}}\zeta_{p}^{\mathrm{Tr}((u\alpha x+vb)y)}\\
		=&\sum_{u\in \mathbb{F}_p^*}\sum_{v\in \mathbb{F}_p^*}\zeta_{p}^{-u \gamma +vc}\sum_{x\in \mathbb{F}_{p^m}\setminus \left\lbrace 0,-\frac{vb}{u\alpha} \right\rbrace  } \zeta_{p}^{\mathrm{Tr}((u+av)x)}\sum_{y\in \mathbb{F}_{p^m}}\zeta_{p}^{\mathrm{Tr}((u\alpha x +vb)y)}\\
		&+p^m\sum_{u\in \mathbb{F}_p^*}\sum_{v\in \mathbb{F}_p^*}\zeta_{p}^{-u \gamma +vc}\zeta_{p}^{\mathrm{Tr}(-\frac{(u+av)vb}{u\alpha})}\\
		=&p^m\sum_{u\in \mathbb{F}_p^*}\sum_{v\in \mathbb{F}_p^*}\zeta_{p}^{-u \gamma +vc}\zeta_{p}^{\mathrm{Tr}(-\frac{(u+av)vb}{u\alpha})}\\
		=&p^m\sum_{u\in \mathbb{F}_p^*}\sum_{v\in \mathbb{F}_p^*}\zeta_{p}^{-u\gamma + (\mathrm{Tr}(-\frac{b}{\alpha})+c)v+\mathrm{Tr}(-\frac{ab}{\alpha})\frac{v^2}{u}}\\
		=&p^m\sum_{u\in \mathbb{F}_p^*}\sum_{v\in \mathbb{F}_p^*}\zeta_{p}^{l_1u + l_2v+\frac{l_3}{u}v^2}.
	\end{aligned}
	$$
	In the similar proof as that of Proposition \ref{prop3.8}, we can obtain 
	$$\Omega_5= \begin{array}{c}
		\begin{cases}
			p^m(p-1)^2, &\text{if}~ l_1=l_2=l_3=0,b\neq0;\\
			-p^m(p-1), &\text{if}~ 
			l_1=0,l_2 \neq 0, l_3=0,b\neq0,\\ &~~\text{or}~l_1\neq0,l_2=l_3=0,b\neq0 ,\\
			&~~\text{or}~  l_1=l_2=0,l_3\neq0,b\neq0,\\
			&~~\text{or}~l_1 \neq 0,l_3 \neq 0,b\neq0,l_2^2-4l_1l_3 \in NS_p;\\	
			p^m,&\text{if}~ l_1\neq0,l_2 \neq 0,l_3=0,b\neq0, \\
			&~~\text{or}~ l_1=0,l_2\neq 0,l_3 \neq 0,b\neq0 ,\\
			&~~\text{or}~ l_1 \neq 0,
			l_3 \neq 0,b\neq0,4l_1l_3-l_2^2=0;\\
			p^m(p+1), &\text{if}~  l_1\neq 0, l_3 \neq 0,b\neq0,l_2^2-4l_1l_3 \in S_p.
		\end{cases}
	\end{array} $$
	
	From the above, we complete the proof of Proposition \ref{prop3.10}.\hfill$\square$

	\section{The proofs of main results }
	\label{section4}
	In this section, we give the proofs of Theorems \ref{theorem3.1}-\ref{theorem3.2}, Theorems \ref{theorem3.4}-\ref{theorem3.5} and some corresponding examples.
	\subsection{The proofs of Theorems \ref{theorem3.1}-\ref{theorem3.2} and Theorems \ref{theorem3.4}-\ref{theorem3.5}}
	\label{subsection4.1}
	In this subsection, we give the proofs of  Theorems \ref{theorem3.1}-\ref{theorem3.2} and Theorems \ref{theorem3.4}-\ref{theorem3.5} basing on some auxiliary results in Section \ref{subsection3.2}.
	
	\noindent\textbf{The proofs of Theorems \ref{theorem3.1}-\ref{theorem3.2}.}
	
	According to the definition of $\overline{\mathcal{C}_{D_1}},$ we know that the length of $\overline{\mathcal{C}_{D_1}}$ equals to 
	$$\begin{aligned}
		n_1=&\# D_1 \\
		=&\frac{1}{p}\sum_{u\in \mathbb{F}_p}  \sum_{x\in \mathbb{F}_{p^m}} \sum_{y\in \mathbb{F}_{p^m}} \zeta_{p}^{u({\mathrm{Tr} (\alpha xy +\beta y +x)-\gamma})}\\
		=&\frac{1}{p}\sum_{u\in \mathbb{F}_p^*}  \sum_{x\in \mathbb{F}_{p^m}} \sum_{y\in \mathbb{F}_{p^m}} \zeta_{p}^{\mathrm{Tr} (u\alpha xy +u\beta y +ux)-u\gamma}+p^{2m-1}\\
		=&\frac{1}{p}\Omega_1+p^{2m-1}.
	\end{aligned}
	$$
	Now by Lemma \ref{lemma2.12}, we have 
	$$n_1= \begin{array}{c}
		\begin{cases}
			p^{2m-1}+p^{m}-p^{m-1}, &\text{if} ~\gamma = \mathrm{Tr}(-\frac{\beta}{\alpha});\\
			p^{2m-1}-p^{m-1} , &\text{if} ~\gamma \neq \mathrm{Tr}(-\frac{\beta}{\alpha}).
		\end{cases}
	\end{array} $$
	
	Let $N_1=\#\left\lbrace (x , y) \in {\mathbb{F}^2_{p^m}}  : \mathrm{Tr} (\alpha xy +\beta y +x)=\gamma  ~\text{and}~ \mathrm{Tr}(ax+by)+c=0\right\rbrace,$
	then the weight of the nonzero codeword \textbf{c}    in $\overline{\mathcal{C}_{D_1}}$ is 
	$$\begin{aligned}
		wt_1(\textbf{c})=&n_1-N_1\\
		=&n_1-\frac{1}{p^2}\sum_{u\in \mathbb{F}_p}\sum_{v\in \mathbb{F}_p}  \sum_{x\in \mathbb{F}_{p^m}} \sum_{y\in \mathbb{F}_{p^m}} \zeta_{p}^{u({\mathrm{Tr} (\alpha xy +\beta y +ux)-\gamma})}\zeta_{p}^{v(\mathrm{Tr}(ax+by)+c)}\\
		=&n_1-\frac{1}{p^2}(p^{2m}+\sum_{u\in \mathbb{F}_p^*}  \sum_{x\in \mathbb{F}_{p^m}} \sum_{y\in \mathbb{F}_{p^m}} \zeta_{p}^{u({\mathrm{Tr} (\alpha xy +\beta y +ux)-\gamma})}+\sum_{v\in \mathbb{F}_p^*}\sum_{x\in \mathbb{F}_{p^m}} \sum_{y\in \mathbb{F}_{p^m}} \zeta_{p}^{v(\mathrm{Tr}(ax+by)+c)}\\
		&+\sum_{u\in \mathbb{F}_p^*}\sum_{v\in \mathbb{F}_p^*}  \sum_{x\in \mathbb{F}_{p^m}} \sum_{y\in \mathbb{F}_{p^m}} \zeta_{p}^{u({\mathrm{Tr} (\alpha xy +\beta y +ux)-\gamma})}\zeta_{p}^{v(\mathrm{Tr}(ax+by)+c)})\\
		=&n_1-\frac{1}{p^2}(p^{2m}+\Omega_1+\Omega_2+\Omega_3).
	\end{aligned}
	$$
	
	Now depending on $\gamma=\mathrm{Tr}(-\frac{\beta}{\alpha})$ or not, by Lemma \ref{lemma2.12}, Propositions \ref{prop3.7}-\ref{prop3.8} and  directly computing, we have the following two cases.
	
	$\bf{Case~ 1.}$ If $\gamma=\mathrm{Tr}(-\frac{\beta}{\alpha})$, then
	$$\begin{aligned}
		\Omega_1+\Omega_2+\Omega_3
		=\begin{array}{c}
			\begin{cases}
				p^{m+2}-p^{m+1},  &\text{if}~a\neq 0,b \in \mathbb{F}_{p^m},c\in \mathbb{F}_p , t_2=t_3=0,\\
				&~~\text{or}~a=0, b\neq 0,c \in \mathbb{F}_p,t_2=t_3=0;\\
				0, &\text{if}~a\neq 0,b \in \mathbb{F}_{p^m},c\in \mathbb{F}_p , t_2\neq 0,t_3=0,\\
				&~~\text{or}~a=0, b\neq 0,c \in \mathbb{F}_p,t_2\neq0,t_3=0,\\
				&~~\text{or}~a\neq 0,b \in \mathbb{F}_{p^m},c\in \mathbb{F}_p ,t_2=0,t_3\neq0;\\
				p^{m+1},&\text{if}~a\neq 0,b \in \mathbb{F}_{p^m},c\in \mathbb{F}_p ,t_2\neq0,t_3=0;\\
				-p^{2m},&\text{if}~ a=b=0,c\neq0,t_2 \neq 0,t_3=0.
			\end{cases}
		\end{array} \\
	\end{aligned}
	$$
	
	Furthermore, $$wt_1(\textbf{c})=\begin{array}{c}
		\begin{cases}
			p^{2m-2}(p-1),&\text{if}~a\neq 0,b \in \mathbb{F}_{p^m},c\in \mathbb{F}_p , t_2=t_3=0,\\
			&~~\text{or}~a=0, b\neq 0,c \in \mathbb{F}_p,t_2=t_3=0;\\
			p^{m-1}(p^m-p^{m-1}+p-1), &\text{if}~a\neq 0,b \in \mathbb{F}_{p^m},c\in \mathbb{F}_p , t_2\neq 0,t_3=0;\\
			&~~\text{or}~a=0, b\neq 0,c \in \mathbb{F}_p,t_2\neq0,t_3=0,\\
			&~~\text{or}~a\neq 0,b \in \mathbb{F}_{p^m},c\in \mathbb{F}_p ,t_2=0,t_3\neq0;\\
			p^{m-1}(p^m-p^{m-1}+p-2),&\text{if}~a\neq 0,b \in \mathbb{F}_{p^m},c\in \mathbb{F}_p ,t_2 \neq0,t_3=0;\\
			p^{m-1}(p^m+p-1),&\text{if}~ a=b=0,c\neq0,t_2\neq0,t_3=0.
		\end{cases}
	\end{array} \\
	$$
	
	$\bf{Case ~2.}$ If $\gamma\neq\mathrm{Tr}(-\frac{\beta}{\alpha})$, then
	$$\begin{aligned}
		\Omega_1+\Omega_2+\Omega_3=
		\begin{array}{c}
			\begin{cases}
				-p^{m+1}, &\text{if}~a\neq 0,b \in \mathbb{F}_{p^m},c\in \mathbb{F}_p , t_2=t_3=0,\\
				&~~\text{or}~a\neq 0,b \in \mathbb{F}_{p^m},c\in \mathbb{F}_p , t_3\neq0,t_2^2-4t_1t_3 \in NS_p,\\
				&~~\text{or}~a=0, b\neq 0,c \in \mathbb{F}_p,t_2=t_3=0;\\
				0, &\text{if}~a\neq 0,b \in \mathbb{F}_{p^m},c\in \mathbb{F}_p , t_2\neq 0,t_3=0,\\
				&~~\text{or}~a\neq 0,b \in \mathbb{F}_{p^m},c\in \mathbb{F}_p , t_3\neq0,4t_1t_3 -t_2^2 =0,\\
				&~~\text{or}~a= 0,b \neq 0,c\in \mathbb{F}_p , t_2\neq 0,t_3=0;\\
				p^{m+1},&\text{if}~a\neq 0,b \in \mathbb{F}_{p^m},c\in \mathbb{F}_p ,t_3\neq0,t_2^2-4t_1t_3 \in S_p;\\
				-p^{2m},&\text{if}~ a=b=0,c\neq0,t_2\neq0,t_3=0.
			\end{cases}
		\end{array} \\
	\end{aligned}
	$$
	
	Furthermore, 
	
	$$
	{wt}_1(\mathbf{c})=
	\begin{cases}
		p^{2m-2}(p-1), \quad &\text{if } a\neq 0,\ b \in \mathbb{F}_{p^m},\ c\in \mathbb{F}_p,\ t_2=t_3=0,\\
		&~~\text{or } a\neq 0,\ b \in \mathbb{F}_{p^m},\ c\in \mathbb{F}_p,\ t_3\neq0,\ t_2^2-4t_1t_3 \in NS_p,\\
		&~~\text{or } a=0,\ b\neq 0,\ c \in \mathbb{F}_p,\ t_2=t_3=0;\\
		p^{m-1}(p^m-p^{m-1}-1), \quad &\text{if } a\neq 0,\ b \in \mathbb{F}_{p^m},\ c\in \mathbb{F}_p,\ t_2\neq 0,\ t_3=0,\\
		&~~\text{or } a\neq 0,\ b \in \mathbb{F}_{p^m},\ c\in \mathbb{F}_p,\ t_3\neq0,\ 4t_1t_3-t_2^2 =0,\\
		&~~\text{or } a= 0,\ b \neq 0,\ c\in \mathbb{F}_p,\ t_2\neq 0,\ t_3=0;\\
		p^{m-1}(p^m-p^{m-1}-2), \quad &\text{if } a\neq 0,\ b \in \mathbb{F}_{p^m},\ c\in \mathbb{F}_p,\ t_3\neq0,\ t_2^2-4t_1t_3 \in S_p;\\
		p^{m-1}(p^m-1), \quad &\text{if } a=b=0,\ c\neq0,\ t_2\neq0,\ t_3=0.
	\end{cases}
	$$
	
	Now if  $\gamma=\mathrm{Tr}(-\frac{\beta}{\alpha}),$ then the weight $w_1=p^{2m-2}(p-1),w_2=p^{m-1}(p^m-p^{m-1}+p-1),w_3=p^{m-1}(p^m-p^{m-1}+p-2)$ and $w_4=p^{m-1}(p^m+p-1),$ and so we can get
	$$\begin{aligned}
		&A_{w_1}=p^{2m-1}+p^m-p^{m-1}-1, A_{w_2}=(p^m-1)(2p^m-2p^{m-1}+p-1),\\
		&A_{w_3}=p^{m-1}(p^{m+2}-2p^{m+1}+p^m-p^2+2p-1),A_{w_4}=p-1.
	\end{aligned}$$
	If	 $\gamma \neq \mathrm{Tr}(-\frac{\beta}{\alpha}),$ then the weight $\tilde{w}_1=p^{2m-2}(p-1),\tilde{w}_2=p^{m-1}(p^m-p^{m-1}-1),\tilde{w}_3=p^{m-1}(p^m-p^{m-1}-2)$ and $\tilde{w}_4=p^{m-1}(p^m-1).$

	Note that $2 \leq m,$ it's easy to get $p \mid w_i$ and  $p \mid \tilde{w}_i (i=1,2,3,4).$ Hence, by Lemma \ref{lemma2.6}, the corresponding code  $\overline{\mathcal{C}_{D_1}}$ is  self-orthogonal whether $\gamma=\mathrm{Tr}(-\frac{\beta}{\alpha})$ or not.
	
	Finally,  we determine the distance of the dual code ${\overline{\mathcal{C}_{D_1}}}^{\perp}$ for  $\gamma = \mathrm{Tr}(-\frac{\beta}{\alpha}).$ By Lemma \ref{lemma2.5} \eqref{(2.1)}-\eqref{(2.3)} and directly computing, we have $A_1^{\perp}=0$ and $A_2^{\perp}>0$, which means that  $d({\overline{\mathcal{C}_{D_1}}}^{\perp})=2.$ 
	
	From the above, the proofs of Theorems \ref{theorem3.1}-\ref{theorem3.2} are complete. \hfill$\square$

	\noindent\textbf{The proofs of Theorems \ref{theorem3.4}-\ref{theorem3.5}.}
	
	According to the definition of $\overline{\mathcal{C}_{D_2}},$ the length of $\overline{\mathcal{C}_{D_2}}$ equals to
	$$\begin{aligned}
		n_2=&\# D_2 \\
		=&\frac{1}{p}\sum_{u\in \mathbb{F}_p}  \sum_{x\in \mathbb{F}_{p^m}^*} \sum_{y\in \mathbb{F}_{p^m}} \zeta_{p}^{u({\mathrm{Tr} (\alpha xy +x)-\gamma})}\\
		=&\frac{1}{p}\sum_{u\in \mathbb{F}_p^*}  \sum_{x\in \mathbb{F}_{p^m}^*} \sum_{y\in \mathbb{F}_{p^m}} \zeta_{p}^{{\mathrm{Tr} (u\alpha xy +ux)-u\gamma}}+p^{m-1}(p^m-1)\\
		=&\frac{1}{p}\sum_{u\in \mathbb{F}_p^*} \zeta_{p}^{{\mathrm{Tr} (ux)-u\gamma}} \sum_{x\in \mathbb{F}_{p^m}^*} \sum_{y\in \mathbb{F}_{p^m}} \zeta_{p}^{{\mathrm{Tr} (u\alpha xy )}}+p^{m-1}(p^m-1)\\
		=&p^{m-1}(p^m-1).
	\end{aligned}
	$$
	Let $N_2=\#\left\lbrace (x , y) \in {\mathbb{F}_{p^m}^*} \times {\mathbb{F}_{p^m}}  : \mathrm{Tr} (\alpha xy +x)=\gamma  ~\text{and}~ \mathrm{Tr}(ax+by)+c=0\right\rbrace,$
	then the weight of the nonzero codeword \textbf{c} in $\overline{\mathcal{C}_{D_2}}$ is 
	$$\begin{aligned}
		wt_2(\textbf{c})=&n_2-N_2\\
		=&n_2-\frac{1}{p^2}\sum_{u\in \mathbb{F}_p}\sum_{v\in \mathbb{F}_p}  \sum_{x\in \mathbb{F}_{p^m}^*} \sum_{y\in \mathbb{F}_{p^m}} \zeta_{p}^{u({\mathrm{Tr} (\alpha xy  +ux)-\gamma})}\zeta_{p}^{v(\mathrm{Tr}(ax+by)+c)}\\
		=&n_2-\frac{1}{p^2}(p^{m}(p^m-1)+\sum_{u\in \mathbb{F}_p^*}  \sum_{x\in \mathbb{F}_{p^m}^*} \sum_{y\in \mathbb{F}_{p^m}} \zeta_{p}^{u({\mathrm{Tr} (\alpha xy  +x)-\gamma})}+\sum_{v\in \mathbb{F}_p^*}\sum_{x\in \mathbb{F}_{p^m}^*} \sum_{y\in \mathbb{F}_{p^m}} \zeta_{p}^{v(\mathrm{Tr}(ax+by)+c)}\\
		&+\sum_{u\in \mathbb{F}_p^*}\sum_{v\in \mathbb{F}_p^*}  \sum_{x\in \mathbb{F}_{p^m}^*} \sum_{y\in \mathbb{F}_{p^m}} \zeta_{p}^{u({\mathrm{Tr} (\alpha xy  +x)-\gamma})}\zeta_{p}^{v(\mathrm{Tr}(ax+by)+c)})\\
		=&n_2-\frac{1}{p^2}(p^{m}(p^m-1)+\Omega_4+\Omega_5).
	\end{aligned}
	$$
	
	Now depending on $\gamma=0$ or not, by Propositions \ref{prop3.9}-\ref{prop3.10} and directly computing, we have the following two cases.
	
	$\bf{Case ~1.}$ If $\gamma=0$, then
	$$\begin{aligned}\Omega_4+\Omega_5=
		\begin{array}{c}
			\begin{cases}
				-p^m(p-1) , &\text{if} ~a\neq 0,b=c=0,\\
				&~~\text{or}~ a\in \mathbb{F}_{p^m}, b\neq 0, c \in  \mathbb{F}_p, l_2\neq0,l_3=0,\\
				&~~\text{or}~ a\in \mathbb{F}_{p^m}, b\neq 0, c \in \mathbb{F}_{p},l_2=0,l_3\neq0;\\
				p^m , &\text{if} ~ a\in \mathbb{F}_{p^m}, b\neq 0, c \in \mathbb{F}_{p},l_2 \neq0,l_3\neq0,\\
				&~~\text{or}~a\neq 0, b=0,c\neq0;\\
				p^m(p-1)^2, &\text{if} ~a\in \mathbb{F}_{p^m}, b\neq 0, c \in \mathbb{F}_{p},l_2=l_3=0;\\
				-p^m(p^m-1),&\text{if} ~a=b=0,c\neq 0.
			\end{cases}
		\end{array} .
	\end{aligned}$$
	Furthermore, 
	$$
	wt_2(\textbf{c})=
	\begin{cases}
		p^{2m-2}(p-1) , &\text{if} ~a\neq 0,b=c=0,\\
		&~~\text{or}~ a\in \mathbb{F}_{p^m}, b\neq 0, c \in \mathbb{F}_p, l_2\neq0,l_3=0,\\
		&~~\text{or}~ a\in \mathbb{F}_{p^m}, b\neq 0, c \in \mathbb{F}_{p},l_2=0,l_3\neq0;\\
		p^{m-1}(p^m-p^{m-1}-1) , &\text{if} ~ a\in \mathbb{F}_{p^m}, b\neq 0, c \in \mathbb{F}_{p},l_2\neq0,l_3\neq0,\\
		&~~\text{or}~a\neq 0, b=0,c\neq0;\\
		p^{m-1}(p^m-p^{m-1}-p+1),&\text{if} ~a\in \mathbb{F}_{p^m}, b\neq 0, c \in \mathbb{F}_{p},l_2=l_3=0;\\
		p^{m-1}(p^m-1),&\text{if} ~a=b=0,c\neq 0.
	\end{cases}\label{wt2'}
	$$
	
	$\bf{Case~ 2.}$ If $\gamma\neq0$, then
	$$\begin{aligned}\Omega_4+\Omega_5=
		\begin{array}{c}
			\begin{cases}
				-p^m(p-1) , &\text{if} ~a\neq 0,b=c=0,\\
				&~~\text{or}~ a\in \mathbb{F}_{p^m}, b\neq 0, c \in \mathbb{F}_{p},l_2=l_3=0,\\
				&~~\text{or}~ a\in \mathbb{F}_{p^m}, b\neq 0, c \in \mathbb{F}_{p},l_3\neq0,
				l_2^2-4l_1l_3\in NS_p;\\
				p^m , &\text{if} ~ a\neq 0, b=0,c\neq0,\\
				&~~\text{or}~a\in \mathbb{F}_{p^m}, b\neq 0, c \in \mathbb{F}_{p},l_2\neq0,l_3=0,\\
				&~~\text{or}~a\in \mathbb{F}_{p^m}, b\neq 0, c \in \mathbb{F}_{p},l_3\neq0,
				4l_1l_3-l_2^2=0;\\
				p^m(p+1), &\text{if} ~a\in \mathbb{F}_{p^m}, b\neq 0, c \in \mathbb{F}_p, l_3\neq0,
				l_2^2-4l_1l_3\in S_p;\\
				-p^m(p^m-1),&\text{if} ~a=b=0,c\neq 0.
			\end{cases}
		\end{array} 
	\end{aligned}$$
	Furthermore, 
	$$
	wt_2(\textbf{c})= 
	\begin{cases}
		p^{2m-2}(p-1) , &\text{if} ~a\neq 0,b=c=0,\\
		&~~\text{or}~ a\in \mathbb{F}_{p^m}, b\neq 0, c \in \mathbb{F}_{p},l_2=l_3=0,\\
		&~~\text{or}~ a\in \mathbb{F}_{p^m}, b\neq 0, c \in \mathbb{F}_{p},l_3\neq0,
		l_2^2-4l_1l_3\in NS_p;\\
		
		p^{m-1}(p^m-p^{m-1}-1) ,  &\text{if} ~ a\neq 0, b=0,c\neq0,\\
		&~~\text{or}~a\in \mathbb{F}_{p^m}, b\neq 0, c \in \mathbb{F}_{p},l_2\neq0,l_3=0,\\
		&~~\text{or}~a\in \mathbb{F}_{p^m}, b\neq 0, c \in \mathbb{F}_{p},l_3\neq0,
		4l_1l_3-l_2^2=0;\\
		p^{m-1}(p^m-p^{m-1}-2), &\text{if} ~a\in \mathbb{F}_{p^m}, b\neq 0, c \in \mathbb{F}_p, l_3\neq0,
		l_2^2-4l_1l_3\in S_p;\\
		
		p^{m-1}(p^m-1),&\text{if} ~a=b=0,c\neq 0.
	\end{cases} \label{wt2}
	$$
	
	Now if  $\gamma=0$, then the weight $w_5=p^{2m-2}(p-1),w_6=p^{m-1}(p^m-p^{m-1}-1),w_7=p^{m-1}(p^m-p^{m-1}-p+1)$ and $w_8=p^{m-1}(p^m-1)$,  and so we can get
	$$\begin{aligned}
		&A_{w_5}=p^{m-1}(2p^m-2p^{m-1}+1), A_{w_6}=(p-1)(p^m-1)(p^m-p^{m-1}+1),\\
		&A_{w_7}=p^{m-1}(p^{m}-1),A_{w_8}=p-1.
	\end{aligned}$$
	If  $\gamma \neq 0$, then the weight $\tilde{w}_5=p^{2m-2}(p-1),\tilde{w}_6=p^{m-1}(p^m-p^{m-1}-1),\tilde{w}_7=p^{m-1}(p^m-p^{m-1}-2)$ and $\tilde{w}_8=p^{m-2}(p^{m+1}-2p^m-p+2).$
	
	Note that $2 \leq m,$ it's easy to get $p \mid w_i$ and $p \mid \tilde{w}_i (i=5,6,7,8).$ Hence, by Lemma \ref{lemma2.6}, the corresponding code $\overline{\mathcal{C}_{D_2}}$ is  self-orthogonal whether $\gamma=0$ or not.
	
	Finally, we determine the  distance of the dual code ${\overline{\mathcal{C}_{D_2}}}^{\perp}$ for $\gamma = 0.$  By Lemma \ref{lemma2.5} \eqref{(2.1)}-\eqref{(2.4)} and directly computing, we have  $A_1^{\perp}=A_2^{\perp}=0$ and $A_3^{\perp}>0$, which means that  $d({\overline{\mathcal{C}_{D_2}}}^{\perp})=3,$ and by the definition of the projective code,  ${\overline{\mathcal{C}_{D_2}}}$ is a projective linear code.  
	
	From the above, the proofs of Theorems \ref{theorem3.4}-\ref{theorem3.5} are complete. \hfill$\square$

	\subsection{Some examples}
	\label{subsection4.2}
	In this subsection, for Theorems \ref{theorem3.1}-\ref{theorem3.2} and Theorems \ref{theorem3.4}-\ref{theorem3.5}, we give some examples which are checked by the Magma program.
	\begin{table}[H]
		\centering
		\caption{Some examples for Theorems \ref{theorem3.1}-\ref{theorem3.2} and Theorems \ref{theorem3.4}-\ref{theorem3.5}}
		\renewcommand{\arraystretch}{1.5}
		\begin{tabular}{|c|c|c|c|c|}
			\hline
			$p$	& $m$ & Theorem & parameters & weight enumerator \\
			\hline
			\multirow{12}{*}{3} & \multirow{4}{*}{$m=2$} & Theorem \ref{theorem3.1} & $\left[33,5,18 \right]_3 $ & $1+96z^{21}+32z^{18}+112z^{24}+2z^3$ \\
			\cline{3-5}
			& & Theorem \ref{theorem3.2} & $\left[ 24,5,12\right]_3$ & $1+104z^{18}+112z^{15}+24z^{12}+2z^{24}$ \\
			\cline{3-5}
			& & Theorem \ref{theorem3.4} & $\left[ 24,5,12\right]_3$ & $1+24z^{12}+112z^{15}+104z^{18}+2z^{24}$ \\
			\cline{3-5}
			& & Theorem \ref{theorem3.5} & $\left[ 24,5,12\right]_3$ &$1+104z^{18}+112z^{15}+24z^{12}+2z^{24}$\\
			\cline{2-5}
			& \multirow{3}{*}{$m=3$} & Theorem \ref{theorem3.1} & $\left[261,7,162\right]_3$  & $1+260z^{162}+936z^{171}+988z^{180}+2z^{261}$ \\
			\cline{3-5}
			& & Theorem \ref{theorem3.2}& $\left[234,7,144 \right]_3$  &  $1+962z^{162}+988z^{153}+234z^{144}+2z^{234}$\\
			\cline{3-5}
			& & Theorem \ref{theorem3.4}& $\left[234,7,144 \right]_3$  & $1+234z^{144}+988z^{153}+962z^{162}+2z^{234}$ \\
			\cline{3-5}
			& & Theorem \ref{theorem3.5} & $\left[234,7,144 \right]_3$  & $1+962z^{162}+988z^{153}+234z^{144}+2z^{234}$ \\
			\cline{2-5}
			& \multirow{4}{*}{$m=4$} & Theorem \ref{theorem3.1} & $\left[2241,9,1458 \right]_3 $ & $1+2240z^{1458}+8640z^{1485}+8800z^{1512}+2z^{2241}$ \\
			\cline{3-5}
			& & Theorem \ref{theorem3.2} & $\left[2160,9,1404 \right]_3$  & $1+8720z^{1458}+8800z^{1431}+2160z^{1404}+2z^{2160}$  \\
			\cline{3-5}
			& & Theorem \ref{theorem3.4} & $\left[2160,9,1404 \right]_3$  &$1+2160z^{1404}+8800z^{1431}+8720z^{1458}+2z^{2160}$  \\
			\cline{3-5}
			& & Theorem \ref{theorem3.5} & $\left[2160,9,1404 \right]_3$  & $1+8720z^{1458}+8800z^{1431}+2160z^{1404}+2z^{2160}$ \\
			\hline
		\end{tabular}
	\end{table}

	\section{Applications for LCD codes and quantum codes}
	\label{section5}
	In this section, we  construct two classes of   LCD codes and a class of quantum codes from  self-orthogonal codes given by Section \ref{section3}.
	\subsection{ LCD codes from self-orthogonal codes}
	\label{subsection5.1}
	In this subsection,we  construct two classes of LCD codes, and prove that  there exists a class of these LCD codes  whose dual codes  are almost optimal  LCD codes according to the sphere packing bound, and give some corresponding examples.
	\begin{theorem}\label{theorem5.1}
		Let $\mathbb{F}_{p^m}^*=\left\langle \omega\right\rangle.$ For the linear code $\overline{\mathcal{C}_{D_1}},$ if  $\gamma=\mathrm{Tr}(-\frac{\beta}{\alpha}),$ then the following two statements are true.
		
		\rm{(1)} $\overline{\mathcal{C}_{D_1}}$ has the generator matrix  
		$$\begin{aligned}
			G_{D_1}
			=&	\begin{bmatrix}
				1 & 1 & \cdots & 1&1 \\
				\operatorname{Tr}(x_1\omega^0) & \operatorname{Tr}(x_2\omega^0)& \cdots &\operatorname{Tr}(x_{n_1-1}\omega^0)& \operatorname{Tr}(x_{n_1}\omega^0) \\
				\operatorname{Tr}(x_1\omega) & \operatorname{Tr}(x_2\omega)& \cdots &\operatorname{Tr}(x_{n_1-1}\omega)& \operatorname{Tr}(x_{n_1}\omega) \\
				\vdots & \vdots &  &\vdots \\
				\operatorname{Tr}(x_1\omega^{m-1}) & \operatorname{Tr}(x_2\omega^{m-1})& \cdots &\operatorname{Tr}(x_{n_1-1}\omega^{m-1})& \operatorname{Tr}(x_{n_1}\omega^{m-1})\\
				\operatorname{Tr}(y_1\omega^{0}) & \operatorname{Tr}(y_2\omega^{0})& \cdots &\operatorname{Tr}(y_{n_1-1}\omega^{0})& \operatorname{Tr}(y_{n_1}\omega^{0})\\
				\vdots & \vdots &  &\vdots&\vdots \\
				\operatorname{Tr}(y_1\omega^{m-1}) & \operatorname{Tr}(y_2\omega^{m-1})& \cdots &\operatorname{Tr}(y_{n_1-1}\omega^{m-1})& \operatorname{Tr}(y_{n_1}\omega^{m-1})\\
			\end{bmatrix},
		\end{aligned}
		$$
		where $D_1=\left\lbrace (x_1,y_1),(x_2,y_2),\cdots,(x_{n_1},y_{n_1})   \right\rbrace \subseteq \mathbb{F}_{p^m}^{2} .$ 
		
		\rm{(2)} The linear code $\mathcal{C}_1$ generated by $\mathcal{G}_1=\left[\mathrm{I}_{2m+1,2m+1}: G_{D_1}\right]$ is LCD with the parameters
		$$\left[p^{m-1}(p^m+p-1)+2m+1,2m+1,\geq p^{2m-2}(p-1)+1 \right]_p.$$
	\end{theorem}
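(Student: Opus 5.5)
For part (1), I will show directly that the rows of $G_{D_1}$ span $\overline{\mathcal{C}_{D_1}}$ and are linearly independent. Since $\omega^0,\omega,\dots,\omega^{m-1}$ form an $\mathbb{F}_p$-basis of $\mathbb{F}_{p^m}$ (the minimal polynomial of the primitive element $\omega$ has degree $m$), write $a=\sum_i a_i\omega^i$ and $b=\sum_j b_j\omega^j$ with $a_i,b_j\in\mathbb{F}_p$. By $\mathbb{F}_p$-linearity of the trace, the codeword $\boldsymbol{c}(a,b,c)=(\mathrm{Tr}(ax+by))_{(x,y)\in D_1}+c\boldsymbol{1}$ is then
\[
c\cdot\boldsymbol{1}+\sum_{i=0}^{m-1}a_i\,(\mathrm{Tr}(x_k\omega^i))_k+\sum_{j=0}^{m-1}b_j\,(\mathrm{Tr}(y_k\omega^j))_k .
\]
This is exactly an $\mathbb{F}_p$-combination of the $2m+1$ rows of $G_{D_1}$, so the rows span $\overline{\mathcal{C}_{D_1}}$. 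Theorem \ref{theorem3.1} gives $\dim\overline{\mathcal{C}_{D_1}}=2m+1$, so the $2m+1$ rows are a basis and $G_{D_1}$ is a generator matrix.

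Independently of the dimension count, Table \ref{Table1} confirms injectivity of $(a,b,c)\mapsto\boldsymbol{c}(a,b,c)$. Its nonzero frequencies sum to $p^{2m+1}-1$, so no nonzero triple gives the zero word. This is the only point needing care, and it is already settled by Theorem \ref{theorem3.1}.

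For part (2), Theorem \ref{theorem3.1} says $\overline{\mathcal{C}_{D_1}}$ is self-orthogonal when $\gamma=\mathrm{Tr}(-\frac{\beta}{\alpha})$. Lemma \ref{lemma2.7} then shows that $\mathcal{G}_1=[I_{2m+1}:G_{D_1}]$ generates an LCD code of length $n_1+2m+1=p^{m-1}(p^m+p-1)+2m+1$ and dimension $2m+1$. The dimension is clear because $I_{2m+1}$ makes the rows independent.

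For completeness I would also recheck the LCD property directly. Self-orthogonality gives $G_{D_1}G_{D_1}^{T}=0$, so $\mathcal{G}_1\mathcal{G}_1^{T}=I+G_{D_1}G_{D_1}^{T}=I$ is invertible, which is the standard criterion for a code to be LCD.

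For the minimum distance, a nonzero message $\boldsymbol{u}\in\mathbb{F}_p^{2m+1}$ yields the codeword $(\boldsymbol{u},\boldsymbol{u}G_{D_1})$. Its weight is $wt(\boldsymbol{u})+wt(\boldsymbol{u}G_{D_1})$. Since $\boldsymbol{u}\neq 0$ and $G_{D_1}$ has full row rank, $\boldsymbol{u}G_{D_1}$ is a nonzero codeword of $\overline{\mathcal{C}_{D_1}}$. Its weight is therefore at least $p^{2m-2}(p-1)$ by Theorem \ref{theorem3.1}, while $wt(\boldsymbol{u})\ge1$. This gives $d\ge p^{2m-2}(p-1)+1$, completing the proof.
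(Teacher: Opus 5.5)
Your proposal is correct and follows essentially the same route as the paper: the basis $\{1,\omega,\dots,\omega^{m-1}\}$ for part (1), then Lemma~\ref{lemma2.7} with the self-orthogonality from Theorem~\ref{theorem3.1} for the LCD property, and finally the bound $wt(\boldsymbol{u})+wt(\boldsymbol{u}G_{D_1})\ge 1+p^{2m-2}(p-1)$ for the distance. Your extra details make explicit what the paper leaves implicit: the trace-linearity spanning argument with the dimension count, the check $\mathcal{G}_1\mathcal{G}_1^{T}=I$, and the fact that $\boldsymbol{u}G_{D_1}\neq\boldsymbol{0}$ because $G_{D_1}$ has full row rank.
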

	
	\noindent\textbf{Proof.}
	\textbf{(1)} By $\mathbb{F}_{p^m}^*=\left\langle \omega\right\rangle,$   it's easy to know  that $\{  1,\omega,\omega^2,\cdots,\omega^{m-1}\}   $ is a basis of $\mathbb{F}_{p^m}$ over $\mathbb{F}_{p}$, furthermore, the generator matrix of $\overline{\mathcal{C}_{D_1}}$ can be represented as follows,
	$$\begin{aligned}
		G_{D_1}
		=&	\begin{bmatrix}
			1 & 1 & \cdots & 1&1 \\
			\operatorname{Tr}(x_1\omega^0) & \operatorname{Tr}(x_2\omega^0)& \cdots &\operatorname{Tr}(x_{n_1-1}\omega^0)& \operatorname{Tr}(x_{n_1}\omega^0) \\
			\operatorname{Tr}(x_1\omega) & \operatorname{Tr}(x_2\omega)& \cdots &\operatorname{Tr}(x_{n_1-1}\omega)& \operatorname{Tr}(x_{n_1}\omega) \\
			\vdots & \vdots &  &\vdots \\
			\operatorname{Tr}(x_1\omega^{m-1}) & \operatorname{Tr}(x_2\omega^{m-1})& \cdots &\operatorname{Tr}(x_{n_1-1}\omega^{m-1})& \operatorname{Tr}(x_{n_1}\omega^{m-1})\\
			\operatorname{Tr}(y_1\omega^{0}) & \operatorname{Tr}(y_2\omega^{0})& \cdots &\operatorname{Tr}(y_{n_1-1}\omega^{0})& \operatorname{Tr}(y_{n_1}\omega^{0})\\
			\vdots & \vdots &  &\vdots&\vdots \\
			\operatorname{Tr}(y_1\omega^{m-1}) & \operatorname{Tr}(y_2\omega^{m-1})& \cdots &\operatorname{Tr}(y_{n_1-1}\omega^{m-1})& \operatorname{Tr}(y_{n_1}\omega^{m-1})\\
		\end{bmatrix},
	\end{aligned}
	$$
	where $D_1=\left\lbrace (x_1,y_1),(x_2,y_2),\cdots,(x_{n_1},y_{n_1})   \right\rbrace \subseteq \mathbb{F}_{p^m}^{2} .$ 
	
	\textbf{(2)}  On the one hand, by  Theorem  \ref{theorem3.1},  $\overline{\mathcal{C}_{D_1}}$ is a self-orthogonal code over $\mathbb{F}_p$ with the parameters $[p^{m-1}(p^m+p-1),2m$
	$+1,p^{2m-2}(p-1) ]_p$. Since $\mathcal{G}_1=\left[\mathrm{I}_{2m+1,2m+1}: G_{D_1}\right]$ is the generator matrix of the linear code  $\mathcal{C}_1,$ then by Lemma \ref{lemma2.7}, we know that $\mathcal{C}_1$ is LCD. 
	
	On the other hand,  it's easy to know that the length of $\mathcal{C}_1$ is $p^{m-1}(p^m+p-1)+2m+1$ and the dimension of $\mathcal{C}_1$ is $2m+1.$ Then we only  determine the minimum distance of $\mathcal{C}_1.$ For any non-zero codeword $\boldsymbol{c}  \in \mathcal{C}_1,$ by $\mathcal{G}_1=\left[I_{2m+1,2m+1}:G_{D_1} \right],$ we know that there exists a non-zero vector $\boldsymbol{u} \in \mathbb{F}_p^{2m+1}$ such that $$\boldsymbol{c}=\boldsymbol{u}\mathcal{G}_1=(\boldsymbol{u},\boldsymbol{u}G_{D_1}),$$ where $\boldsymbol{u}G_{D_1} \in \overline{\mathcal{C}_{D_1}}.$ Then $$wt(\boldsymbol{c})=wt((\boldsymbol{u},\boldsymbol{u}G_{D_1}))=wt(\boldsymbol{u})+wt(\boldsymbol{u}G_{D_1}).$$ 
	Note that $\boldsymbol{u} \neq \boldsymbol{0},$ then 
	$wt(\boldsymbol{u}) \geq 1 ,$  and so we have $$wt(\boldsymbol{c}) \geq wt(\boldsymbol{u}G_{D_1})+1 \geq d(\overline{\mathcal{C}_{D_1}})+1=p^{2m-2}(p-1)+1.  $$
	
	From the above, the proof of Theorem \ref{theorem5.1} is complete. \hfill$\square$

	In the similar proof as that for Theorem \ref{theorem5.1}, one can obtain the following
	\begin{theorem}\label{theorem5.2}
		Let $\mathbb{F}_{p^m}^*=\left\langle \omega\right\rangle.$ For the linear code $\overline{\mathcal{C}_{D_2}},$ if  $\gamma=0,$ then the following two statements are true.
		
		\rm{(1)} $\overline{\mathcal{C}_{D_2}}$ has the generator matrix  
		$$\begin{aligned}
			G_{D_2}
			=&	\begin{bmatrix}
				1 & 1 & \cdots & 1&1 \\
				\operatorname{Tr}(x_1\omega^0) & \operatorname{Tr}(x_2\omega^0)& \cdots &\operatorname{Tr}(x_{n_2-1}\omega^0)& \operatorname{Tr}(x_{n_2}\omega^0) \\
				\operatorname{Tr}(x_1\omega) & \operatorname{Tr}(x_2\omega)& \cdots &\operatorname{Tr}(x_{n_2-1}\omega)& \operatorname{Tr}(x_{n_2}\omega) \\
				\vdots & \vdots &  &\vdots&\vdots \\
				\operatorname{Tr}(x_1\omega^{m-1}) & \operatorname{Tr}(x_2\omega^{m-1})& \cdots &\operatorname{Tr}(x_{n_2-1}\omega^{m-1})& \operatorname{Tr}(x_{n_2}\omega^{m-1})\\
				\operatorname{Tr}(y_1\omega^{0}) & \operatorname{Tr}(y_2\omega^{0})& \cdots &\operatorname{Tr}(y_{n_2-1}\omega^{0})&0\\
				\vdots & \vdots &  &\vdots&\vdots \\
				\operatorname{Tr}(y_1\omega^{m-1}) & \operatorname{Tr}(y_2\omega^{m-1})& \cdots & \operatorname{Tr}(y_{n_2-1}\omega^{m-1})&0\\
			\end{bmatrix},
		\end{aligned}
		$$
		where $D_2=\left\lbrace (x_1,y_1),(x_2,y_2),\cdots,(x_{n_2},0)   \right\rbrace \subseteq \mathbb{F}_{p^m}^{2} .$ 
		
		\rm{(2)} The linear code $\mathcal{C}_2$ generated by $\mathcal{G}_2=\left[\mathrm{I}_{2m+1,2m+1}: G_{D_2}\right]$ is LCD with the parameters
		$$\left[ p^{m-1}(p^m-1)+2m+1,2m+1, \geq p^{m-1}(p^m-p^{m-1}-p+1)+1\right]_p.$$
	\end{theorem}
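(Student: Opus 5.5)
The plan is to mirror the proof of Theorem \ref{theorem5.1}, replacing $D_1$, Theorem \ref{theorem3.1} and $n_1$ by $D_2$, Theorem \ref{theorem3.4} and $n_2$ throughout.

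For part (1), we use that $\mathbb{F}_{p^m}^*=\langle\omega\rangle$, so $\{1,\omega,\dots,\omega^{m-1}\}$ is an $\mathbb{F}_p$-basis of $\mathbb{F}_{p^m}$. Write $a=\sum_i a_i\omega^i$ and $b=\sum_j b_j\omega^j$ with $a_i,b_j\in\mathbb{F}_p$. By $\mathbb{F}_p$-linearity of the trace, the codeword $\boldsymbol{c}(a,b,c)$ in \eqref{(1.4)} equals
\[
c\boldsymbol{1}+\sum_i a_i\bigl(\mathrm{Tr}(x\omega^i)\bigr)_{(x,y)\in D_2}+\sum_j b_j\bigl(\mathrm{Tr}(y\omega^j)\bigr)_{(x,y)\in D_2}.
\]
Hence the $2m+1$ rows of $G_{D_2}$ span $\overline{\mathcal{C}_{D_2}}$. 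By Theorem \ref{theorem3.4} the dimension is exactly $2m+1$, so these rows are linearly independent and $G_{D_2}$ is a generator matrix.

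The only cosmetic point is the last column. Since $\gamma=0$, every pair $(x,0)$ with $x\in\mathbb{F}_{p^m}^*$ and $\mathrm{Tr}(x)=0$ lies in $D_2$. Such pairs exist because $m\ge 2$. We list one of them last, as $(x_{n_2},0)$, so its $y$-entries $\mathrm{Tr}(0\cdot\omega^j)$ are $0$. Any ordering of $D_2$ works, so this is only a normalization.

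For part (2), Theorem \ref{theorem3.4} says $\overline{\mathcal{C}_{D_2}}$ is self-orthogonal with generator matrix $G_{D_2}$ of size $(2m+1)\times n_2$. Lemma \ref{lemma2.7} then shows that $\mathcal{G}_2=[I_{2m+1}:G_{D_2}]$ generates an LCD code $\mathcal{C}_2$ of length $n_2+2m+1=p^{m-1}(p^m-1)+2m+1$. Its dimension is $2m+1$ because $\mathcal{G}_2$ contains the identity block.

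For the distance, any nonzero codeword has the form $\boldsymbol{u}\mathcal{G}_2=(\boldsymbol{u},\boldsymbol{u}G_{D_2})$ with $\boldsymbol{u}\ne\boldsymbol{0}$, so $wt(\boldsymbol{u})\ge 1$. Since the rows of $G_{D_2}$ are independent, $\boldsymbol{u}G_{D_2}$ is a nonzero codeword of $\overline{\mathcal{C}_{D_2}}$. By Table \ref{table2} its weight is at least $p^{m-1}(p^m-p^{m-1}-p+1)$. Adding the two contributions gives the bound $\ge p^{m-1}(p^m-p^{m-1}-p+1)+1$.

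The only step needing real care is the linear independence of the rows of $G_{D_2}$. It is needed both to apply Lemma \ref{lemma2.7} (which assumes a generator matrix with $k$ rows) and to guarantee $\boldsymbol{u}G_{D_2}\neq\boldsymbol{0}$. It follows from the dimension count $2m+1$ in Theorem \ref{theorem3.4}: the map $(a,b,c)\mapsto\boldsymbol{c}(a,b,c)$ is injective because no nonzero triple gives weight $0$ in the weight distribution.
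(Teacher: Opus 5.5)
Your proposal is correct and follows the paper's route: the paper proves Theorem \ref{theorem5.2} by mirroring Theorem \ref{theorem5.1}. That means using the basis $\{1,\omega,\dots,\omega^{m-1}\}$ for the generator matrix, Lemma \ref{lemma2.7} for the LCD property, and $wt(\boldsymbol{u})+wt(\boldsymbol{u}G_{D_2})\ge 1+d(\overline{\mathcal{C}_{D_2}})$ for the distance. You also make explicit two points the paper leaves implicit, namely that the rows of $G_{D_2}$ are independent (so $\boldsymbol{u}G_{D_2}\neq\boldsymbol{0}$) and that some $(x,0)$ with $x\neq 0$, $\mathrm{Tr}(x)=0$ lies in $D_2$.
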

	
	
	By taking  $m=2$ and $\alpha=1$  in Theorem \ref{theorem5.2}, the corresponding code $\mathcal{C}_2$ is denoted by $\mathcal{C}_3,$ we prove that there exists an  almost optimal code according to the sphere packing bound.
	
	\begin{theorem}\label{theorem5.3}
		Let $ \mathbb{F}_{p^2}^*=\langle\omega \rangle.$ If $\gamma=0,$ then the following three statements are true.
		
		\rm{(1)} $\mathcal{C}_3$ has the generator matrix $\mathcal{G}_3=\left[ I_5 : G_{D_2}^{(1)}\right] ,$
		where 	$$\begin{aligned}
			G_{D_2}^{(1)}=&	\begin{bmatrix}
				1 & 1 & \cdots & 1 &1 \\
				\operatorname{Tr_1^2}(x_1\omega^0) & \operatorname{Tr_1^2}(x_2\omega^0)& \cdots & \operatorname{Tr_1^2}(x_{n_2-1}\omega^0) &\operatorname{Tr_1^2}(x_{n_2}\omega^0)\\
				\operatorname{Tr_1^2}(x_1\omega)+1 & \operatorname{Tr_1^2}(x_2\omega)+1& \cdots & \operatorname{Tr_1^2}(x_{n_2-1}\omega)+1 &\operatorname{Tr_1^2}(x_{n_2}\omega)+1 \\
				\operatorname{Tr_1^2}(y_1\omega^{0}) & \operatorname{Tr_1^2}(y_2\omega^{0})& \cdots & \operatorname{Tr_1^2}(y_{n_2-1}\omega^{0}) &0\\
				\operatorname{Tr_1^2}(y_1\omega^{1})+1 & \operatorname{Tr_1^2}(y_2\omega^{1})+1&\cdots&\operatorname{Tr_1^2}(y_{n_2-1}\omega^{1})+1&1\\
			\end{bmatrix},
		\end{aligned}
		$$
		$D_2=\left\lbrace (x_1,y_1),(x_2,y_2),\cdots,(x_{n_2-1},y_{n_2-1}),(x_{n_2},y_{n_2})=(x_{n_2},0)\right\rbrace \subseteq \mathbb{F}_{p^2}^{2} $ and $\mathrm{Tr}_1^2$ is the trace function from $\mathbb{F}_p^2$ to $\mathbb{F}_p.$
		
		\rm{(2)} $\mathcal{C}_3$ is a $[ p(p^2-1)+5,5, \geq p(p^2-2p+1)+2]_p $  LCD code.
		
		\rm{(3)} $\mathcal{C}_3^{\perp}$ is a $\left[ p(p^2-1)+5,p(p^2-1),3\right]_p $ almost optimal LCD code according to the sphere packing bound and  $\mathcal{C}_3$ is a projective LCD code.
	\end{theorem}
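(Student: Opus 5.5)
The plan is to derive all three statements from Theorem \ref{theorem3.4} and Theorem \ref{theorem5.2}, specialised to $m=2$ and $\alpha=1$, with the length $n_2=p(p^2-1)$.

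\textbf{Part (1).} Start from the generator matrix $G_{D_2}$ of Theorem \ref{theorem5.2}(1) with $m=2$. Its rows are $\mathbf{1}$, $(\mathrm{Tr}_1^2(x_i))_i$, $(\mathrm{Tr}_1^2(x_i\omega))_i$, $(\mathrm{Tr}_1^2(y_i))_i$ and $(\mathrm{Tr}_1^2(y_i\omega))_i$. Add the all-one row to the third and fifth rows. This is an elementary row operation, so the row space $\overline{\mathcal{C}_{D_2}}$ is unchanged and we obtain $G_{D_2}^{(1)}$. The code $\mathcal{C}_3$ is then the code generated by $[I_5:G_{D_2}^{(1)}]$; the paper simply names this generator matrix. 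It is LCD by Lemma \ref{lemma2.7}, because $\overline{\mathcal{C}_{D_2}}$ is self-orthogonal by Theorem \ref{theorem3.4}. The modification only serves to make the lower bound on the distance in part (2) work.

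\textbf{Part (2).} The length is $p(p^2-1)+5$ and the dimension is $5$. For the distance, write a codeword as $(\mathbf{u},\mathbf{u}G^{(1)}_{D_2})$ with $\mathbf{u}\neq\mathbf{0}$. We have $d(\overline{\mathcal{C}_{D_2}})=p(p^2-2p+1)$ by Theorem \ref{theorem3.4}, so it suffices to rule out codewords with $wt(\mathbf{u})=1$ and $wt(\mathbf{u}G^{(1)}_{D_2})=d$.

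When $wt(\mathbf{u})=1$, the second component is a scalar multiple of one row of $G^{(1)}_{D_2}$, so we check the weights of the five rows using Table \ref{table2}.
\begin{itemize}
\item The row $\mathbf{1}$ has weight $n_2$.
\item The rows $(\mathrm{Tr}(x_i))_i$ and $(\mathrm{Tr}(x_i\omega))_i$ correspond to $b=c=0$, $a\neq0$, which gives weight $p^{2}(p-1)$.
\item The rows $(\mathrm{Tr}(x_i\omega))_i+\mathbf{1}$ and $(\mathrm{Tr}(y_i\omega))_i+\mathbf{1}$ have $c=1$, and we must verify that their parameters $(a,b,c)$ avoid the case $l_2=l_3=0$, $b\neq0$.
\item The row $(\mathrm{Tr}(y_i))_i$ corresponds to $a=0$, $b=1$, $c=0$. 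With $\alpha=1$ this gives $l_3=0$ and $l_2=\mathrm{Tr}(-1)=-2\neq0$, so its weight is $p^2(p-1)$, not the minimum.
\item For the row $(\mathrm{Tr}(y_i\omega))_i+\mathbf{1}$ we have $a=0$, $b=\omega$, $c=1$, so $l_3=0$ and $l_2=\mathrm{Tr}(-\omega)+1$. The reason for adding $1$ is to make $l_2\neq 0$ in case $\mathrm{Tr}(\omega)=0$. When $\mathrm{Tr}(\omega)=1$ the sum vanishes, so a case split is needed here.
\end{itemize}
Once no single row attains weight $d$, we get $wt\ge d+1$ when $wt(\mathbf{u})=1$, and $wt\ge d+2$ when $wt(\mathbf{u})\ge2$. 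A sharper argument is needed to reach $d+2$ overall, namely that no codeword of weight $d$ arises as a combination with $wt(\mathbf{u})=1$.

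\textbf{Part (3).} Since $\mathcal{C}_3$ is LCD, $\mathcal{C}_3^\perp$ is LCD, with dimension $n-5=p(p^2-1)$. Its minimum distance is at least $3$: any dependency among at most two columns of $\mathcal{G}_3$ involves either identity columns, which are independent, or columns of $G_{D_2}^{(1)}$. The latter are projective because $d(\overline{\mathcal{C}_{D_2}}^\perp)=3$ by Theorem \ref{theorem3.4}, and this is preserved under row operations. Mixed pairs are excluded because each column of $G^{(1)}_{D_2}$ has first entry $1$ and at least two nonzero entries.

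For the upper bound, exhibit three dependent columns, for instance $e_1$ together with two columns of $G^{(1)}_{D_2}$ that differ only in the first coordinate. Alternatively, use the sphere packing bound (Lemma \ref{lemma2.9}): $d=5$ would require $p^5\ge 1+n(p-1)+\binom n2(p-1)^2$, which fails since $n\approx p^3$. So $d^\perp\le4$, and $d^\perp=3$ by the direct exhibition. The same computation shows $d^\perp=3$ is almost optimal, since a $[n,n-5,5]$ code is excluded; showing that an $[n,n-5,4]$ code may exist is the delicate point, and we will interpret "almost optimal" as the failure of the bound at $d+2$. Finally, $d^\perp=3$ makes $\mathcal{C}_3$ projective.

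The main obstacle is the case analysis in part (2) on the traces $\mathrm{Tr}(\omega)$ and $\mathrm{Tr}(\omega^2)$ needed to reach the bound $d+2$.
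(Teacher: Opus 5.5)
Part (1) and the LCD claim are fine. Part (2), however, stops exactly where the statement breaks, and the gap cannot be closed by any case split.

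\textbf{Part (2).} You correctly isolate the fifth row $(\mathrm{Tr}(y_i\omega))_i+\mathbf{1}$ of $G^{(1)}_{D_2}$. This is $(a,b,c)=(0,\omega,1)$, with $l_1=l_3=0$ and $l_2=1-\mathrm{Tr}(\omega)$.

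If $\mathrm{Tr}(\omega)=1$, Theorem~\ref{theorem3.4} gives this row weight $w_7=p(p-1)^2=d(\overline{\mathcal{C}_{D_2}})$. You can also count directly: among the $p$ elements $y$ with $\mathrm{Tr}(\omega y)=-1$ is $y=-1$, which contributes $p^2-1$ zeros, and the others contribute $p-1$ each. Hence $\boldsymbol{e}_5\mathcal{G}_3$ has weight $p(p-1)^2+1$, and the claimed bound $d(\mathcal{C}_3)\ge p(p-1)^2+2$ fails.

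Such generators exist. For $p=3$ take $\omega^2=\omega+1$. This is the $\omega$ underlying $D^*$ in the paper's example, since $(1,\omega)\in D^*$ forces $\mathrm{Tr}(\omega)=1$. There the fifth row has weight $12$, so $\mathcal{C}_3$ contains a word of weight $13<14$.

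The paper's proof misses this in Case 3, statement 1). At $l=a_0=a_1=b_0=0$ it evaluates $\mathrm{Tr}(-(b_0+b_1\omega))+l+a_1+b_1$ as $\mathrm{Tr}(-b_1\omega)$, dropping the $+b_1$. The correct value is $b_1(1-\mathrm{Tr}(\omega))$. So (2) holds precisely when $\mathrm{Tr}(\omega)\neq 1$; then every weight-one $\boldsymbol{u}$ gives weight at least $w_6+1$. Otherwise $d(\mathcal{C}_3)=p(p-1)^2+1$.

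Two related points in your part (2) are also off:
\begin{itemize}
\item The case $\mathrm{Tr}(\omega)=0$, which you say motivates adding $\mathbf{1}$, never occurs. $\mathrm{Tr}(\omega)=0$ means $\omega^{p-1}=-1$, so the order of $\omega$ would divide $2(p-1)<p^2-1$. Consequently the unmodified matrix $[I_5:G_{D_2}]$ has row weights $w_8,w_5,w_5,w_5,w_5$ and satisfies the bound for every $\omega$. The modification only creates the bad case.
\item Once no row has weight $w_7$, no ``sharper argument'' is needed. The next weight is $w_6=w_7+p(p-2)$, so weight-one $\boldsymbol{u}$ already give at least $w_7+4$, and $wt(\boldsymbol{u})\ge 2$ gives at least $w_7+2$.
\end{itemize}

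\textbf{Part (3).} Two steps need repair.
\begin{itemize}
\item Your claim that every column of $G^{(1)}_{D_2}$ has a nonzero entry besides the first is the paper's Case 5, and it needs that computation. If entries $2$--$5$ vanish, then $\mathrm{Tr}(x)=\mathrm{Tr}(y)=0$ and $\mathrm{Tr}(x\omega)=\mathrm{Tr}(y\omega)=-1$, so $x=y$. Then $\mathrm{Tr}(x^2+x)=\mathrm{Tr}(x^2)=-2x^{p+1}\neq 0$, contradicting $(x,y)\in D_2$.
\item Your witness for $d^\perp\le 3$ does not exist: $\boldsymbol{e}_1$ together with two columns of $G^{(1)}_{D_2}$ differing only in the first coordinate. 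All these columns have first entry $1$, so the two columns would coincide. Instead, as the paper does, take three dependent columns of $G^{(1)}_{D_2}$. These exist because $d(\overline{\mathcal{C}_{D_2}}^{\perp})=3$, and row operations preserve column dependencies.
\end{itemize}
Finally, the paper's ``almost optimal'' means the sphere packing bound excludes an $[n,n-3,3]_p$ code, via $p^3<1+n(p-1)$. You instead exclude $[n,n-5,5]_p$. Either is a sphere-packing statement, but you should say which notion you prove.
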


	\noindent\textbf{Proof.} 
	\textbf{(1)}
	By  Theorem \ref{theorem5.2}, we know that $\overline{\mathcal{C}_{D_2}}$ has the generator matrix 
	$$\begin{aligned}
		{G}_{D_2}=&	\begin{bmatrix}
			1 & 1 & \cdots & 1 &1 \\
			\operatorname{Tr_1^2}(x_1\omega^0) & \operatorname{Tr_1^2}(x_2\omega^0)& \cdots & \operatorname{Tr_1^2}(x_{n_2-1}\omega^0) &\operatorname{Tr_1^2}(x_{n_2}\omega^0)\\
			\operatorname{Tr_1^2}(x_1\omega) & \operatorname{Tr_1^2}(x_2\omega)& \cdots & \operatorname{Tr_1^2}(x_{n_2-1}\omega) &\operatorname{Tr_1^2}(x_{n_2}\omega) \\
			\operatorname{Tr_1^2}(y_1\omega^{0}) & \operatorname{Tr_1^2}(y_2\omega^{0})& \cdots & \operatorname{Tr_1^2}(y_{n_2-1}\omega^{0}) &0\\
			\operatorname{Tr_1^2}(y_1\omega^{1}) & \operatorname{Tr_1^2}(y_2\omega^{1})&\cdots&\operatorname{Tr_1^2}(y_{n_2-1}\omega^{1})&0\\
		\end{bmatrix}.
	\end{aligned}
	$$
	Now we transform the matrix 
	${G}_{D_2}$ by elementary row operations. For convenience, we denote the vector $\boldsymbol{a}_u$ be the $u$-th row of the matrix ${G}_{D_2},$ where $1 \leq u \leq 5,$ then replace  $\boldsymbol{a}_3$ with $\boldsymbol{a}_3+\boldsymbol{a}_1,$ and replace  $\boldsymbol{a}_5$ with $\boldsymbol{a}_5+\boldsymbol{a}_1,$ we obtain the matrix
	$$\begin{aligned}
		G_{D_2}^{(1)}=&	\begin{bmatrix}
			1 & 1 & \cdots & 1 &1 \\
			\operatorname{Tr_1^2}(x_1\omega^0) & \operatorname{Tr_1^2}(x_2\omega^0)& \cdots & \operatorname{Tr_1^2}(x_{n_2-1}\omega^0) &\operatorname{Tr_1^2}(x_{n_2}\omega^0)\\
			\operatorname{Tr_1^2}(x_1\omega)+1 & \operatorname{Tr_1^2}(x_2\omega)+1& \cdots & \operatorname{Tr_1^2}(x_{n_2-1}\omega)+1 &\operatorname{Tr_1^2}(x_{n_2}\omega)+1 \\
			\operatorname{Tr_1^2}(y_1\omega^{0}) & \operatorname{Tr_1^2}(y_2\omega^{0})& \cdots & \operatorname{Tr_1^2}(y_{n_2-1}\omega^{0}) &0\\
			\operatorname{Tr_1^2}(y_1\omega^{1})+1 & \operatorname{Tr_1^2}(y_2\omega^{1})+1&\cdots&\operatorname{Tr_1^2}(y_{n_2-1}\omega^{1})+1&1\\
		\end{bmatrix},
	\end{aligned}
	$$
	where
	$D_2=\left\lbrace (x_1,y_1),(x_2,y_2),\cdots,(x_{n_2-1},y_{n_2-1}),(x_{n_2},y_{n_2})=(x_{n_2},0)\right\rbrace \subseteq \mathbb{F}_{p^2}^{2}. $
	It's easy to know that $G_{D_2}^{(1)}$ is also a generator matrix of $\overline{\mathcal{C}_{D_2}}.$ 
	
	\textbf{(2)}
	By Lemma \ref{lemma2.7} and $\mathcal{G}_3=\left[ I_5 :G_{D_2}^{(1)}\right], $ we know that the linear code $\mathcal{C}_3$ is  LCD.  Obviously, the length of  $\mathcal{C}_3$ is $p(p^2-1)+5$ and the dimension of $\mathcal{C}_3$ is $5.$ 
	
	Next, we determine the minimum distance of  $\mathcal{C}_3.$ Let $d(\overline{\mathcal{C}_{D_2}})$ and $d(\mathcal{C}_3)$ be the minimum distance of $\overline{\mathcal{C}_{D_2}}$ and $\mathcal{C}_3,$ respectively.

	Firstly, let $\tilde{\boldsymbol{c}}$ be a non-zero codeword of $\mathcal{C}_3$, then $\tilde{\boldsymbol{c}}$ can be expressed as $$\tilde{\boldsymbol{c}}=(l,a_0,a_1,b_0,b_1)\mathcal{G}_3=(l,a_0,a_1,b_0,b_1,L_1,L_2,\cdots,L_{n_2}),$$ 
	where $l,a_0,a_1,b_0,b_1\in \mathbb{F}_{p},$ $L_i=\operatorname{Tr_1^2}(ax_i+by_i)+l+a_1+b_1, (x_i,y_i) \in D_2( 1 \leq i \leq n_2), a=a_0+a_1\omega,b=b_0+b_1\omega.$ Note that $$(L_1,L_2,\cdots,L_{n_2})=(\operatorname{Tr_1^2}(ax+by)+l+a_1+b_1)_{(x,y) \in D_2} \in \overline{\mathcal{C}_{D_2}},$$
	we denote $\boldsymbol{c}=(L_1,L_2,\cdots,L_{n_2}).$  Then for $\tilde{\boldsymbol{c}},$ we have
	$$\begin{aligned}wt(\tilde{\boldsymbol{c}})=&wt((l,a_0,a_1,b_0,b_1,L_1,L_2, \cdots,L_{n_2}))\\
		=&wt((l,a_0,a_1,b_0,b_1))+wt((L_1,L_2,\cdots,L_{n_2}))\\
		=&wt((l,a_0,a_1,b_0,b_1))+wt(\textbf{c}).
	\end{aligned}$$  
	Now by $\boldsymbol{c} \in \overline{\mathcal{C}_{D_2}}$ and the proof of Theorem \ref{theorem3.4},
	we have   $wt(\textbf{c}) \in {\left\lbrace w_5,w_6,w_7,w_8\right\rbrace },$ where $$\begin{aligned}
		&w_5=p^{2m-2}(p-1),
		w_6=p^{m-1}(p^m-p^{m-1}-1),\\
		&w_7=p^{m-1}(p^m-p^{m-1}-p+1),
		w_8=p^{m-1}(p^m-1).
	\end{aligned}
	$$
	
	Secondly, depending on  $wt(\textbf{c})=w_5,w_6,w_7$ or $w_8,$  we have the following four cases.


	$\bf{Case ~1.}$ If $wt(\textbf{c})=w_5,$ then
	by the proof of Theorem \ref{theorem3.4}, there exist some
	$l,a_0,a_1,b_0$ and $b_1$  such that  one of the following conditions is true,
	
	$\bf(1.1)$ $a_0+a_1\omega \neq 0$ and $ b_0+b_1\omega=l+a_1+b_1=0;$
	
	$\bf(1.2)$ $b_0+b_1\omega \neq 0, \operatorname{Tr_1^2}(-(b_0+b_1\omega))+l+a_1+b_1\neq0$ and $\operatorname{Tr_1^2}(-(a_0+a_1\omega)(b_0+b_1\omega))=0; $
	
	$\bf(1.3)$ $b_0+b_1\omega \neq 0, \operatorname{Tr_1^2}(-(b_0+b_1\omega))+l+a_1+b_1=0$ and $\operatorname{Tr_1^2}(-(a_0+a_1\omega)(b_0+b_1\omega))\neq 0.$
	
	It's easy to know that  $l=a_1=b_0=b_1=0$ and $a_0 \neq 0,$ thus the condition $\bf(1.1)$ is satisfied. Then   the minimum weight $wt(l,a_0,a_1,b_0,b_1)=1.$   Hence, $$wt(\tilde{\boldsymbol{c}})\geq w_5(\textbf{c})+1=p^{2}(p-1)+1.$$

	$\bf{Case ~2.}$ If $wt(\textbf{c})=w_6,$ then in the similar proof as that for $\bf{Case ~1}$,
	we know that there exist $l=a_0=b_0=b_1=0$ and $a_1 \neq 0$ such that $wt(\textbf{c})=w_6,$ furthermore, $wt(l,a_0,a_1,b_0,b_1)=1.$ Hence,
	$$wt(\tilde{\boldsymbol{c}})\geq w_6(\textbf{c})+1=p(p^2-p-1)+1.$$

	$\bf{Case ~3.}$ If $wt(\textbf{c})=w_7,$ then
	by the proof of Theorem \ref{theorem3.4}, there exist some
	$l,a_0,a_1,b_0$ and $b_1$ such that the following
	three conditions are true simultaneously,
	
	$\bf{(3.1)}$ $b_0+b_1 \omega \neq 0;$
	
	$\bf{(3.2)}$ $\operatorname{Tr_1^2}(-(b_0+b_1\omega))+l+a_1+b_1=0;$
	
	$\bf{(3.3)}$ $\operatorname{Tr_1^2}(-(a_0+a_1\omega)(b_0+b_1\omega))=0. $
	
	Next, to prove  $wt(l,a_0,$ $a_1,b_0,b_1) \geq 2,$ we only prove  the following two statements,
	
	$\bf{1)}$ there do not exist any $l,a_0,a_1,b_0$ and $b_1$ such that $wt(l,a_0,a_1,b_0,b_1)=1;$
	
	$\bf{2)}$ there exist $l,a_0,a_1,b_0$ and $b_1$ such that $wt(l,a_0,a_1,b_0,b_1) = 2.$
	
	{\bf For the statement $1)$} Otherwise, by $wt(l,a_0,a_1,b_0,b_1)=1,$ we know that one of $l,a_0,a_1,b_0$ and $b_1$  is nonzero. Now by  $\bf{(3.1)},$  we have $b_0 \neq 0$ or $ b_1 \neq 0,$ it means that  $l=a_0=a_1=0.$ Then by directly computing,  $\bf{(3.3)}$ holds.

	If $b_0 \neq 0,$ then  $b_1=0.$ By   $\bf{(3.2)}$, we can get
	$$\operatorname{Tr_1^2}(-(b_0+b_1\omega))+l+a_1+b_1=\operatorname{Tr_1^2}(-b_0)=(-b_0)+(-b_0)^p=-2b_0=0,$$ which contradicts the fact  $b_0 \neq 0.$
	
	If $b_1\neq0,$ then  $b_0=0.$ By  $\bf{(3.2)}$, we have
	$$\begin{aligned}
		\operatorname{Tr_1^2}(-(b_0+b_1\omega))+l+a_1+b_1&=\operatorname{Tr_1^2}(-b_1 \omega)=(-b_1\omega)+(-b_1 \omega)^p\\
		&=-b_1\omega-b_1\omega^p=-b_1(\omega+\omega^p)\\
		&=0,
	\end{aligned}
	$$ since $\mathbb{F}_{p^2}^*=\langle\omega \rangle,$ $\omega+\omega^p \neq 0,$ then we have $b_1=0,$ which contradicts the fact $b_1 \neq 0.$
	
	And so,  there do not exist $l,a_0,a_1,b_0$ and $ b_1$ such that $wt(l,a_0,a_1,b_0,b_1)=1.$

	{\bf For the statement $2)$}  There exist $a_0=a_1=b_1=0, b_0 \neq 0$ and $l=2b_0 \neq 0$ such that  $\bf{(3.1)}$ and $\bf{(3.3)}$ are both true. And by directly computing, we have  
	$$\operatorname{Tr_1^2}(-(b_0+b_1\omega))+l+a_1+b_1=\operatorname{Tr_1^2}(-b_0)+l=(-b_0)+(-b_0)^p+2b_0=-2b_0+2b_0=0,$$
	it means that  $\bf{(3.2)}$ is true.

	In summary of the above discussions,  we know that there exist  $a_0=a_1=b_1=0, b_0 \neq 0$ and $l=2b_0 \neq 0$ such that $wt(\textbf{c})=w_7,$ furthermore, $wt(l,a_0,a_1,b_0,b_1)=2.$
	Hence, $$wt(\tilde{\boldsymbol{c}}) \geq w_7(\textbf{c})+2=p(p^2-2p+1)+2.$$

	$\bf{Case ~4.}$ If $wt(\textbf{c})=w_8,$ then  in the similar proof as that for $\bf{Case ~1},$ we know that there exist $a_0=a_1=b_0=b_1=0$ and $l \neq 0$ such that $wt(\textbf{c})=w_8,$ furthermore, $wt(l,a_0,a_1,b_0,b_1)=1.$ Hence,  $$wt(\tilde{\boldsymbol{c}})\geq w_8(\textbf{c})+1=p(p^2-1)+1.$$

	Finally, from the above four cases, we have $$ \begin{aligned}
		wt(\tilde{\boldsymbol{c}}) \geq &\min\left\lbrace p^{2}(p-1)+1, p(p^2-p-1)+1, p(p^2-2p+1)+2, p(p^2-1)+1 \right\rbrace\\
		&=p(p^2-2p+1)+2.
	\end{aligned}$$ 
	Namely, $d(\mathcal{C}_3) \geq d(\overline{\mathcal{C}_{D_2}})+2= p(p^2-2p+1)+2.$
	Therefore,  $\mathcal{C}_3$ is a $[ p(p^2-1)+5,5, \geq p(p^2-2p+1)+2] $ LCD code.
	
	\textbf{(3)} Note that the dual of an LCD code is also an LCD code, and so  $\mathcal{C}_3^{\perp}$ is also an LCD code with the length $p(p^2-1)+5$ and the dimension $p(p^2-1).$  
	
	Next  we prove the minimum distance of 
	$\mathcal{C}_3^{\perp}$ is 3. In fact, since $\mathcal{G}_3=\left[ I_5 : G_{D_2}^{(1)}\right] $ is the parity-check matrix of $\mathcal{C}_3^{\perp},$ we only prove  the following two statements,
	
	$\bf1)$ there are three columns in $\mathcal{G}_3$ which are  $\mathbb{F}_p$-linearly dependent;

	$\bf 2)$ for  any two columns of $\mathcal{G}_3$ are $\mathbb{F}_p$-linearly independent.
	
	{\bf For the statement 1)} 
	Since $d(\overline{\mathcal{C}_{D_2}}^{\perp})=3$ and $G_{D_2}^{(1)}$ is the parity-check matrix of $\overline{\mathcal{C}_{D_2}}^{\perp},$  there are three columns in $G_{D_2}^{(1)}$ which are  $\mathbb{F}_p$-linearly dependent. By $\mathcal{G}_3=\left[ I_5 :G_{D_2}^{(1)}~\right], $ it's easy to know that there are three columns in $\mathcal{G}_3$ which are  $\mathbb{F}_p$-linearly dependent. This implies that $d(\mathcal{C}_3^{\perp}) \leq d(\overline{\mathcal{C}_{D_2}}^{\perp})=3$.

	{\bf For the statement 2)} Now we divide the following five cases to prove that any two columns of $\mathcal{G}_3$ are $\mathbb{F}_p$-linearly independent. 
	For convenience, we denote $$\mathcal{G}_3=\left[ I_5 :G_{D_2}^{(1)}\right]=\left[\boldsymbol{e}_1,\boldsymbol{e}_2,\cdots,\boldsymbol{e}_5,\boldsymbol{g}_1,\cdots,\boldsymbol{g}_{p(p^2-1)-1},\boldsymbol{g}_{p(p^2-1)} \right].$$
	
	$\bf{Case~1.}$ Since $I_5$ is an identity matrix, then  $\boldsymbol{e}_i$
	and $\boldsymbol{e}_j$ are 	$\mathbb{F}_p$-linearly independent for any $i \neq j$($1 \leq i,j \leq 5$).
	
	$\bf{Case~2.}$ By $d(\overline{\mathcal{C}_{D_2}}^{\perp})=3, \boldsymbol{g}_i$
	and $\boldsymbol{g}_j$ are 	$\mathbb{F}_p$-linearly independent for any $i \neq j$($1 \leq i,j \leq p(p^2-1)$).
	
	$\bf{Case~3.}$ It's easy to know that $\boldsymbol{e}_i$ and $\boldsymbol{g}_{p(p^2-1)}$ are $\mathbb{F}_p$-linearly independent for any $i(1 \leq i \leq 5)$.
	
	$\bf{Case~4.}$ It's easy to know that $\boldsymbol{e}_i$ and $\boldsymbol{g}_{j}$ are $\mathbb{F}_p$-linearly independent for any $i(2 \leq i \leq 5)$ and $j(1 \leq j \leq p(p^2-1)-1)$.
	
	$\bf{Case~5.}$ From $\boldsymbol{e}_1=(1, 0, \ldots, 0)^\top,$
	we know that \( \boldsymbol{e_1} \) and \( \boldsymbol{g_i} \) are \( \mathbb{F}_p \)-linearly dependent for any $i(1 \leq i \leq p(p^2-1)-1)$ if and only if there exist two nonzero elements $a$ and $b$  such that $a\boldsymbol{e}_1+b\boldsymbol{g_i}=0,$ namely, the following system of equations holds,
	\begin{equation}
		\begin{split}
			\begin{cases} 
				x_i + x_i^p = 0 \\ 
				\omega x_i + \omega^p x_i^p + 1 = 0 \\ 
				y_i + y_i^p = 0 \\ 
				\omega y_i + \omega^p y_i^p + 1 = 0
			\end{cases},
			\label{(5.1)}
		\end{split}
	\end{equation}
	where $(x_i,y_i)\in D_2   $ for any $i(1 \leq i \leq p(p^2-1)-1)$. It's easy to know that $x_i=y_i$ if \eqref{(5.1)} holds for any $i(1 \leq i \leq p(p^2-1)-1)$. Since $x_i=y_i$ and $$(x_i,y_i) \in D_2=\left\lbrace (x , y) \in \mathbb{F}_{p^m}^* \times \mathbb{F}_{p^m}  : \mathrm{Tr} ( xy  +x)=0  \right\rbrace,$$ we can deduce that $$\begin{aligned}
		\operatorname{Tr_1^2}(x_iy_i+x_i)&=\operatorname{Tr_1^2}(x_i^2+x_i)=\operatorname{Tr_1^2}(x_i^2)+\operatorname{Tr_1^2}(x_i)\\
		&=x_i^2+x_i^{2p}+x_i+x_i^{p}=x_i^2+x_i^{2p}\\
		&=(x_i+x_i^{p})^2-2x_i^{p+1}\\
		&=-2x_i^{p+1}\neq0,\end{aligned}$$
	which contradicts the fact  $(x_i,y_i)\in D_2 .$ Hence, \( \boldsymbol{e_1} \) and \( \boldsymbol{g_i} \) are \( \mathbb{F}_p \)-linearly independent for any $i(1 \leq i \leq p(p^2-1)-1)$. 
	
	From the above five cases,  any two columns of $\mathcal{G}_3$ are $\mathbb{F}_p$-linearly independent. And then
	according to  $\bf 1)$-$\bf 2)$,  the minimum distance of the dual code $\mathcal{C}_3^{\perp}$ is 3. Now by the definition of the projective code,  $\mathcal{C}_3$ is a projective LCD code.
	
	Moreover, we have $p^{n-(p(p^2-1)+2)}<1+n(p-1),$ where $n=p(p^2-1)+5.$ Then from the sphere packing bound given by Lemma \ref{lemma2.9}, we know that there does not exist any $[ p(p^2-1)+5,p(p^2-1)$ $+2,3]$ code. Therefore, $\mathcal{C}_3^{ \perp}$ is a $\left[ p(p^2-1)+5,p(p^2-1),3\right]_p$ almost optimal LCD code.
	
	From the above, the proof of Theorem \ref{theorem5.3} is complete. \hfill$\square$

	\begin{example}
		By taking  $p=3,m=2$ and $\alpha=\beta=\gamma=1$ in Theorem \ref{theorem3.1}. Let  $ \mathbb{F}_{3^2}^*=\left\langle \omega\right\rangle,$ then $\overline{\mathcal{C}_{D_1}}$ defined by \eqref{(1.4)} has parameters $\left[33,5,18 \right]_{3},$ and by the Magma program, the corresponding defining set 
		$$
		\tilde{D}=\left\lbrace 
		\begin{aligned}
			&(0,\omega), (0,\omega^3), (0,2), (1,\omega),(1,\omega^3),(1,2),(\omega,0),(\omega,1),(\omega,2),
			(\omega^2,\omega^2),(\omega^2,2),(\omega^2,\omega^5),\\&(\omega^3,0),(\omega^3,1),(\omega^3,2),(2,0),(2,1),
			(2,\omega),(2,\omega^2),(2,\omega^3),(2,2),(2,\omega^5),(2,\omega^6),(2,\omega^7),\\
			&(\omega^5,\omega^2),(\omega^5,2),(\omega^5,\omega^5),(\omega^6,2),(\omega^6,\omega^6),(\omega^6,\omega^7),(\omega^7,2),(\omega^7,\omega^6),(\omega^7,\omega^7)
		\end{aligned}
		\right\rbrace.
		$$
		
		By taking  $D_1=\tilde{D}$ in Theorem \ref{theorem5.1} and   directly computing, $\overline{\mathcal{C}_{D_1}}$  has  the generator matrix 
		

		$$\begin{aligned}
			G_{D_1}=&
			\begin{bmatrix}
				g_1 \\
				g_2 \\
				g_3 \\
				g_4\\
				g_5\\
			\end{bmatrix},
		\end{aligned}
		$$
		where $$\begin{aligned}
			g_1&=\left(1,1,1,1,1,1,1,1,1,1,1,1,1,1,1,1,1,1,1,1,1,1,1,1,1,1,1,1,1,1,1,1,1 \right),\\
			g_2&=\left(2,1,1,2,0,1,0,2,2,1,0,2,1,1,1,0,1,0,1,2,1,2,1,2,0,2,0,1,1,1,0,1,0 \right),\\
			g_3&=\left(1,0,2,1,0,2,1,1,2,2,2,0,1,2,2,1,1,2,2,0,0,2,1,2,1,0,0,2,2,0,0,2,2 \right),\\
			g_4&=\left(1,1,1,1,1,1,0,1,2,1,1,2,2,0,2,2,0,2,2,1,2,1,1,0,1,0,1,0,0,0,1,2,0\right),\\
			g_5&=\left(2,2,1,0,0,2,1,1,2,0,2,0,1,0,0,0,0,2,1,2,1,2,2,2,2,1,1,2,1,0,2,2,2\right).
		\end{aligned}$$
	\end{example}
	
	Based on the Magma program, the code  $\mathcal{C}_1$ generated by $\mathcal{G}_1$ is an LCD code with the parameters $\left[ 38,5,20 \right]_3, $ which is consistent with Theorem \ref{theorem5.1}.

	\begin{example}
		By taking  $p=3,m=2$ and $\alpha=1,\gamma=0$ in Theorem 3.3. Let $ \mathbb{F}_{3^2}^*=\left\langle \omega\right\rangle,$ then $\overline{\mathcal{C}_{D_2}}$ defined by \eqref{(1.4)} has parameters $\left[24,5,12 \right]_3,$ and by the Magma program, the corresponding defining set 
		$$
		D^*=\left\lbrace 
		\begin{aligned}
			&(1,\omega), (1,\omega^3), (1,2),(\omega,2),(\omega,\omega^6),(\omega,\omega^7)
			(\omega^2,0),(\omega^2,1),(\omega^2,2),(\omega^3,\omega^2),(\omega^3,2),\\&(\omega^3,\omega^5),(2,\omega),(2,\omega^3),
			(2,2),(\omega^5,2),(\omega^5,\omega^6),(\omega^5,\omega^7),(\omega^6,0),(\omega^6,1),(\omega^6,2),\\&(\omega^7,\omega^2),(\omega^7,2),(\omega^7,\omega^5)
		\end{aligned}
		\right\rbrace.
		$$
		
		By taking  $D_2=D^*$ in Theorem \ref{theorem5.2} and   directly computing, $\overline{\mathcal{C}_{D_2}}$ has the generator matrix 
		
		$$\begin{aligned}
			G_{D_2}=&
			\begin{bmatrix}
				h_1 \\
				h_2 \\
				h_3 \\
				h_4\\
				h_5\\
			\end{bmatrix}
		\end{aligned}
		$$
		where $$\begin{aligned}
			h_1&=\left(1,1,1,1,1,1,1,1,1,1,1,1,1,1,1,1,1,1,1,1,1,1,1,1 \right),\\
			h_2&=\left(2, 0 ,2, 1, 0 ,2 ,1, 1 ,2, 0 ,1, 0 ,2 ,1, 2, 2, 0 ,2, 1, 2, 1, 1, 1, 0 \right),\\
			h_3&=\left(0, 1 ,2 ,1, 2, 0 ,2, 2 ,0, 1, 0 ,2 ,1 ,1 ,2 ,1, 1 ,2 ,0, 1 ,0 ,2, 1, 2\right),\\
			h_4&=\left(2, 1 ,2, 1, 1, 1,1, 1 ,0 ,2, 0 ,2,1, 0 ,1, 1, 0 ,0, 2 ,1, 1, 1,2, 0\right),\\
			h_5&=\left(2, 2, 0, 2, 2, 2, 0 ,2, 2, 1, 2 ,1, 0 ,1, 2, 2, 0 ,1,2, 1 ,2 ,1, 0 ,0\right).
		\end{aligned}$$
	\end{example}
	
	Based on  the Magma program, the code $\mathcal{C}_2$ generated by $\mathcal{G}_2$ is an LCD code with the parameters $\left[ 29,5,14 \right]_3, $ which is consistent with Theorem \ref{theorem5.2}. 
	
	Let $G_{D_2}^{(1)}=[h_1,h_2,h_3+h_1,h_5+h_1] $ in Theorem \ref{theorem5.3}, based on the Magma program, the code $\mathcal{C}_3$ generated by $\mathcal{G}_3=\left[ I_5 :G_{D_2}^{(1)}\right] $ is a projective LCD code with the parameters $\left[ 29,5,14 \right]_3, $ and the dual code  is a $\left[ 29,24,3 \right]_3$ almost optimal LCD code, which is consistent with Theorem \ref{theorem5.3}.
	
	\begin{remark}
		In Table \ref{table4} of the Appendix, we list the known   LCD codes, and compared with the parameters, it's easy to know that the  length,  the  dimension or the distance of LCD codes constructed in this section are different from all known codes listed in Table \ref{table4}. Therefore,  LCD codes constructed in this section are new.
	\end{remark}

	\subsection{Quantum codes from self-orthogonal codes}
	\label{subsection5.2}
	In this subsection, we construct a class of quantum codes which are  quantum  AMDS codes for $m=2$ according to the quantum Singleton bound  and give some corresponding examples.
	\begin{theorem}\label{theorem5.7}
		Let $\overline{\mathcal{C}_{D_2}}$ be a $\left[ p^{m-1}(p^m-1),2m+1,p^{m-1}(p^m-p^{m-1}-p+1)\right]_p$ self-orthogonal linear code, then there exists a  quantum code $\mathcal{Q}$ with parameters $$\llbracket p^{m-1}(p^m-1), p^{m-1}(p^m-1)-2m-2, 3 \rrbracket_{p}.$$
	\end{theorem}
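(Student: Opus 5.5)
We will obtain $\mathcal{Q}$ from Lemma \ref{lemma2.8}, taking $\mathcal{C}_1=\overline{\mathcal{C}_{D_2}}^{\perp}$ (the case $\gamma=0$). The second code $\mathcal{C}_2$ will be a code containing $\mathcal{C}_1$ with dimension exactly one larger than that of $\mathcal{C}_1$.

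\emph{Step 1: the pair $\mathcal{C}_1\subseteq\mathcal{C}_2$.} By Theorem \ref{theorem3.4}, $\overline{\mathcal{C}_{D_2}}$ is self-orthogonal, i.e. $\overline{\mathcal{C}_{D_2}}\subseteq \overline{\mathcal{C}_{D_2}}^{\perp}=\mathcal{C}_1$. Hence $\mathcal{C}_1^{\perp}=\overline{\mathcal{C}_{D_2}}\subseteq \mathcal{C}_1$. The same theorem gives $\mathcal{C}_1$ the parameters $[n,n-2m-1,3]_p$ with $n=p^{m-1}(p^m-1)$, so $k_1=n-2m-1$ and $d_1=3$.

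Since $2m+1\ge 2$, we may choose a subcode $\mathcal{B}\subseteq\overline{\mathcal{C}_{D_2}}$ of dimension $2m$ (for instance, delete one row of the generator matrix $G_{D_2}$ of Theorem \ref{theorem5.2}). Put $\mathcal{C}_2=\mathcal{B}^{\perp}$. Then $\mathcal{C}_1\subseteq\mathcal{C}_2$ and $k_2=n-2m=k_1+1$. This gives $k_1+k_2-n=n-4m-1$, which does not match the target.

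\emph{Correction to the choice.} To reach the target dimension $n-2m-2$ we need $k_1+k_2-n=n-2m-2$, i.e. $k_2=n-1$ given $k_1=n-2m-1$. The inequality $k_1+2\le k_2$ then holds because $m\ge 1$. So we take $\mathcal{C}_2=\langle \mathbf{v}\rangle^{\perp}$ for a single nonzero codeword $\mathbf{v}\in\overline{\mathcal{C}_{D_2}}$. Then $\mathcal{C}_1=\overline{\mathcal{C}_{D_2}}^{\perp}\subseteq\langle\mathbf{v}\rangle^{\perp}=\mathcal{C}_2$ and $\dim\mathcal{C}_2=n-1$.

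A natural choice is $\mathbf{v}=\mathbf{1}_n$, which lies in the augmented code. This makes $\mathcal{C}_2=\{\mathbf{x}:\sum_i x_i=0\}$. Lemma \ref{lemma2.8} then yields a quantum code with parameters $\llbracket n,\,n-2m-2,\,\min\{3,\lceil \tfrac{p+1}{p}d_2\rceil\}\rrbracket_p$, where $q=p$ because all codes are over $\mathbb{F}_p$.

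\emph{Step 2: the distance (main obstacle).} We must check that $\lceil \frac{p+1}{p}d_2\rceil\ge 3$. For $\mathcal{C}_2=\mathbf{1}_n^{\perp}$ we have $d_2=2$, since $(1,-1,0,\dots,0)\in\mathcal{C}_2$ and no weight-one vector has zero coordinate sum. Then $\lceil \frac{2(p+1)}{p}\rceil=\lceil 2+\tfrac{2}{p}\rceil=3$ for every odd prime $p$. Thus the minimum equals $\min\{3,3\}=3$, which gives exactly $\llbracket p^{m-1}(p^m-1),\,p^{m-1}(p^m-1)-2m-2,\,3\rrbracket_p$.

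The delicate point is this ceiling computation; it is precisely why $\mathcal{C}_2$ is taken one-dimensionally larger than $\mathcal{C}_1^{\perp}$'s dual direction, rather than larger. Any $\mathcal{C}_2$ of dimension $n-1$ has $d_2\le 2$, and $d_2=2$ requires $\mathbf{v}$ to have weight at least $2$, which $\mathbf{1}_n$ does. Finally, I would remark that for $m=2$ one has $n-k=2m+2=6=2d$, so $\mathcal{Q}$ is AMDS by Remark \ref{remark2.12}; this is consistent with the abstract's claim.
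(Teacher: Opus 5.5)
Your final argument is correct and is essentially the paper's proof: take $\mathcal{C}_1=\overline{\mathcal{C}_{D_2}}^{\perp}$, an $[n,n-2m-1,3]_p$ code containing its dual, and $\mathcal{C}_2=\mathbf{1}_n^{\perp}$, an $[n,n-1,2]_p$ code, and apply Lemma \ref{lemma2.8}; your explicit check that $\lceil \frac{2(p+1)}{p}\rceil=3$ is a detail the paper leaves implicit. You should delete the abandoned first choice $\mathcal{C}_2=\mathcal{B}^{\perp}$ (it would also violate $k_1+2\le k_2$) and the garbled sentence about ``one-dimensionally larger'', since they add nothing to the proof.
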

	\noindent\textbf{Proof.} 
	By Theorem \ref{theorem3.4}, it is clear that the dual code 
	${\overline{\mathcal{C}_{D_2}}}^{\perp}$  is an $[n,$$n-2m-1,3 ]_p $ linear code, where $n=p^{m-1}(p^m-1),$ and $\overline{\mathcal{C}_{D_2}} \subseteq {\overline{\mathcal{C}_{D_2}}}^{\perp}.$ Let $\tilde{\mathcal{C}}_1={\overline{\mathcal{C}_{D_2}}}^{\perp}$ and $\tilde{\mathcal{C}}_2$ be the dual of the code $\left\lbrace c\boldsymbol{1_n} : c \in \mathbb{F}_{p^m} \right\rbrace \subseteq \overline{\mathcal{C}_{D_2}}.$ It is easy to know that  $\tilde{\mathcal{C}}_2$ has parameters $\left[n,n-1,2 \right]_p $. Then we can deduce that $\tilde{\mathcal{C}}_1^{\perp} \subseteq \tilde{\mathcal{C}}_1 \subseteq \tilde{\mathcal{C}}_2.$ Now by $2 \leq m,$ we have $$\mathrm{dim}(\tilde{\mathcal{C}}_1)+2=n-2m+1 \leq n-1= \mathrm{dim}(\tilde{\mathcal{C}}_2).$$ Thus, by Lemma \ref{lemma2.8},  there exists  a  quantum code  $\mathcal{Q}$ with parameters $\llbracket n, n-2m-2, 3 \rrbracket_{p}.$ 
	
	From the above, the proof of Theorem \ref{theorem5.7} is complete. \hfill$\square$

	\begin{remark}
		When $m=2,\mathcal{Q}$ is a $\llbracket p(p^2-1), p(p^2-1)-6, 3 \rrbracket_p$    quantum AMDS code according to the quantum Singleton bound. \end{remark}
	
	\begin{remark}
		In Table \ref{table5} of the Appendix, we list the known  quantum codes with minimum distance 3. And compared with the parameters, it's easy to know that the  length or the dimension of  quantum code constructed in this section is different from all  known codes listed in Table \ref{table5}. Therefore, the quantum code with minimum distance 3 constructed in this section is new.
	\end{remark}

	\section{Conclusions}
	\label{section6}
	In this manuscript, a class of projective four-weight self-orthogonal codes and three classes of four-weight self-orthogonal codes are constructed. As applications, we obtain two classes of  LCD codes and a class of  quantum codes. In particular, we prove that there exists a class of  these LCD codes whose dual codes are almost optimal LCD codes according to the sphere packing bound, and  a class of  quantum codes are AMDS according to the quantum Singleton bound. Comparing  with the parameters of  LCD codes and quantum codes in Tables \ref{table4}-\ref{table5} of the Appendix,  LCD codes and quantum codes constructed in this manuscript are both new.

	\section*{Acknowledgments}
	This paper is supported by National Natural Science Foundation of China (Grant No. 12471494) and Natural Science Foundation of Sichuan Province (2024NSFSC2051).

\newpage
	\appendix
	\begin{table}[htbp]
		\centering
		\caption{The known infinite families of LCD codes.}
		\footnotesize
		\renewcommand{\arraystretch}{1.2}
		\setlength{\tabcolsep}{3pt}
		\begin{tabularx}{\textwidth}{@{} l l X l @{}}
			\toprule
			$n$ & $[n, k, d]$ & Condition & Ref. \\
			\midrule
			$n = q^l + 1$ & $[n, k, \ge 2(\delta - 1)]_q$ & $3 \le \delta \le \frac{q^{l-1}}{2} + 3$ & \cite{23} Thm 18 \\
			$n = \frac{q^m-1}{q-1}$ & $[n, n - 1 - 2m \lceil \frac{\delta-1}{q-1} \rceil, \ge 2\delta]_q$ & $2 \le \delta \le \frac{q^{m-1}}{2}$ & \cite{23} Thm 32 \\
			$n = \frac{q-1}{N}q^{m-1}+m+1$ & $[n,m+1,\frac{q-1-N}{N}q^{m-1}+2]_q$ & $(q,m) \neq (3,2),\; 1 \le N < q-1,\; N \mid (q-1)$ & \cite{45} Thm 30 \\
			$n = q^{m-1}(q^m+1)+2m+1$ & $[n,2m+1,\ge (q-1)q^{2m-2}+1]_q$ & $q$ is odd $m \ge 2$; $q$ is even $m \ge 3$ & \cite{45} Thm 31 \\
			$n = \frac{q-1}{N}q^{m-1}(q^m+1)+2m+1$ & $[n,2m+1,\ge \frac{q-1}{N}q^{2m-2}(q-1)+1]_q$ & $1 < N,\; N \mid (q-1)$ \newline $q$ is odd $m \ge 2$; $q$ is even  and $m \ge 3$ & \cite{45} Thm 32 \\
			$n = p^m + 2m + 1$ & $[n, 2m + 1, p^m - p^{m-1} - p^{\frac{m-1}{2}} + 1]_p$ & $p \equiv 1 \pmod{4}, m$ is odd \newline $p \equiv 3 \pmod{4}, m \equiv 3 \pmod{4}$ & \cite{29} Thm 40,48 \\
			$n = p^m + 2m + 1$ & $[n, 2m + 1, p^m - p^{m-1} - p^{\frac{m-1}{2}} + 2]_p$ & $p \equiv 3 \pmod{4},\; m \equiv 1 \pmod{4}$ & \cite{29} Thm 40,48 \\
			$n = p^m + 2m + 1$ & $[n, 2m + 1, p^m - p^{m-1} - (p-1)p^{\frac{m-2}{2}} + 1]_p$ & $p \equiv 1 \pmod{4}, m$ is even \newline $p \equiv 3 \pmod{4}, m \ge 4$ is even & \cite{29} Thm 40,48 \\
			$n = p^m + 2m + 1$ & $[n, 2m + 1, p^m - p^{m-1} - (p-1)p^{\frac{m-2}{2}} + 2]_p$ & $p \equiv 3 \pmod{4},\; m = 2$ & \cite{29} Thm 40,48 \\
			$n = p^m + 2m + 1$ & $[n, p^m, 3]_p$ & $(p, m) \neq (3, 2)$ & \cite{26} Thm 40,48 \\
			$n = p^m + m + 2$ & $[n, m+2, d']_p$ & $p$ is odd and $m \in \mathbb{Z}^{+}$  & \cite{38} Prop 3 \\
			\bottomrule
			\label{table4}
		\end{tabularx}
		
		\vspace{12pt}  
		
		\caption{The known infinite families of quantum codes with distance 3.}
		\footnotesize
		\renewcommand{\arraystretch}{1.2}
		\setlength{\tabcolsep}{3pt}
		\begin{tabularx}{\textwidth}{@{} l X X l @{}}
			\toprule
			$n$ & $\llbracket n, k, d\rrbracket$ & Condition & Ref. \\
			\midrule
			$n = q^r$ & $\llbracket n, n-(r+2), 3 \rrbracket_q$ & $r \ge 2$ & \cite{2} Thm 9 \\
			$4 \le n \le q^2+1$ & $\llbracket n, n-4,3 \rrbracket_q$ & $q=3^r, r \le 1$ & \cite{37} Thm 1.1 \\
			$n = 9m$ & $\llbracket n,n-\lceil \log_3 m \rceil -4 ,3\rrbracket_q$ & $3^{k-1} < n \le 3^k,\; m \ge 1$ &\cite{6} Thm 4.3 \\
			$n = 9m+3,9m+6$ & $\llbracket n,n-6,3\rrbracket_q$ & $1 \le m \le 7$ & \cite{6} Thm 4.3 \\
			$3^{k-1}-9 < n \le 3^k-9$ & $\llbracket n,n-k-2,3\rrbracket_q$ & $5 \le k,\; (n,9)=3$ & \cite{6} Thm 4.3 \\
			$n = \frac{q^2-1}{2}$ & $\llbracket n, n-2d+2, d\rrbracket_q$ & $2 \le d \le q$ & \cite{20} Thm 3.2 \\
			$n = \frac{q^2-1}{2}$ & $[n, n-2d+2, d]_q$ & $\frac{q+1}{2} \le d \le q$ & \cite{20} Thm 3.4 \\
			$n = \lambda(q+1)$ & $\llbracket n, n-2d+2, d\rrbracket_q$ & $2 \le d \le \frac{q+1}{2} + \lambda,\; \lambda \mid q-1,\; \lambda$ is odd & \cite{20} Thm 3.7 \\
			$n = 2\lambda(q+1)$ & $\llbracket n, n-2d+2, d\rrbracket_q$ & $2 \le d \le \frac{q+1}{2} + 2\lambda,\; q \equiv 1 \pmod{4},\; \lambda \mid q-1,\; \lambda$  is odd & \cite{20} Thm 3.10 \\
			$n = \frac{q^2+1}{5}$ & $\llbracket n, n-2d+2, d\rrbracket_q$ & $2 \le d \le \frac{q+5}{2},\; d$ is even & \cite{20} Thm 3.14 \\
			$n = \frac{q^2+1}{5}$ & $\llbracket n, n-2d+2, d\rrbracket_q$ & $2 \le d \le \frac{q+3}{2},\; d$ is even & \cite{20} Thm 3.15 \\
			$n = \frac{q^2-1}{6}$ & $\llbracket n, n-2d+2, d\rrbracket_q$ & $2 \le d \le \frac{2q-1}{3},\; q$ is odd,\; $6 \mid q+1$ & \cite{17} Thm 3.8 \\
			$n = \frac{p^{2s}+(p^{s_1}-1)G^2}{p^{s_1}}$ & $\llbracket n, n-\frac{2s}{s_2}-2, 3\rrbracket_p$ & $s_1 \mid s_2, s \ge 2s_1$; or $s_2/s_1$ is odd, $2s>s_1+s_2$; or $s_2/s_1$ is even, $2s>2s_1+s_2$ & \cite{52} Thm 5.5 \\
			\bottomrule
			\label{table5}
		\end{tabularx}
	\end{table}

	\end{document}